\newtheorem{proposition}{Proposition}
\newtheorem{proofpart}{Part}
\let\oldproofpart\proofpart
\renewcommand{\proofpart}{\oldproofpart\normalfont}
\begin{document}
%


\title{Hybrid Transactional Replication: State-Machine and Deferred-Update Replication Combined}



%
%
%
%

\author{Tadeusz~Kobus, 
        Maciej~Kokoci\'nski
        Pawe{\l}~T.~Wojciechowski
\IEEEcompsocitemizethanks{\IEEEcompsocthanksitem 
The authors are with the Institute of Computing Science, Poznan University of Technology, 
60-965 Pozna\'n, Poland. \protect\\
E-mail: \{Tadeusz.Kobus,Maciej.Kokocinski,Pawel.T.Wojciechowski\}{\hfil\break}@cs.put.edu.pl 
\IEEEcompsocthanksitem 
This work was funded from National Science Centre (NCN) funds granted by decision
No. DEC-2011/01/N/ST6/06762 and from Foundation for Polish Science (FNP) funds granted by decision No. 103/UD/SKILLS/2014.

%
 }}

\IEEEcompsoctitleabstractindextext{%
\begin{abstract}
We propose \emph{Hybrid Transactional Replication (HTR)} a novel replication 
scheme for highly dependable services. It combines two schemes: a transaction 
is executed either optimistically by only one service replica in the deferred
update mode (DU), or deterministically by all replicas in the state machine 
mode (SM); the choice is made by an oracle. The DU mode allows for parallelism 
and thus takes advantage of multicore hardware. In contrast to DU, the SM mode 
guarantees abort-free execution, so it is suitable for irrevocable operations 
and transactions generating high contention. For expressiveness, transactions 
can be discarded or retried on demand. We formally prove that the higher 
flexibility of the scheme does not come at the cost of weaker guarantees for 
clients: HTR satisfies strong consistency guarantees akin to those provided by 
other popular transactional replication schemes such as Deferred Update 
Replication. We developed HTR-enabled Paxos STM, an object-based distributed 
transactional memory system, and evaluated it thoroughly under various 
workloads.
We show the benefits of using a novel oracle, which relies on machine learning 
techniques for automatic adaptation to changing conditions. In our tests, the 
ML-based oracle provides up to 50\% improvement in throughput when compared 
to the system running with DU-only or SM-only oracles. Our approach is inspired 
by a well known algorithm used in the context of the multi-armed bandit 
problem.

\end{abstract}

\begin{IEEEkeywords}
state machine replication; transactional replication; deferred update; distributed transactional memory
\end{IEEEkeywords}}

\maketitle

\IEEEdisplaynontitleabstractindextext

%
\IEEEpeerreviewmaketitle



\newcommand{\adel}{\mathit{adel}}
\newcommand{\args}{\mathit{args}}
\newcommand{\clock}{\mathit{clock}}
\newcommand{\code}{\mathit{prog}}
\newcommand{\eend}{\mathit{end}}
\newcommand{\failure}{\mathit{failure}}
\newcommand{\id}{\mathit{id}}
\newcommand{\LC}{\mathit{LC}}
\newcommand{\Log}{\mathit{Log}}
\newcommand{\mode}{\mathit{mode}}
\newcommand{\nnull}{\mathit{null}}
\newcommand{\obj}{\mathit{obj}}
\newcommand{\oid}{\mathit{oid}}
\newcommand{\outcome}{\mathit{outcome}}
\newcommand{\prog}{\mathit{prog}}
\newcommand{\readset}{\mathit{readset}}
\newcommand{\res}{\mathit{res}}
\newcommand{\start}{\mathit{start}}
\newcommand{\stats}{\mathit{stats}}
\newcommand{\success}{\mathit{success}}
\newcommand{\txoracle}{\mathit{TransactionOracle}}
\newcommand{\update}{\mathit{update}}
\newcommand{\updates}{\mathit{updates}}
\newcommand{\vvalue}{\mathit{value}}
\newcommand{\version}{\mathit{version}}
\newcommand{\writeset}{\mathit{writeset}}

\algnewcommand\algorithmicupon{\textbf{upon}}
\algdef{SE}[UPON]{Upon}{EndUpon}                                   
   [2]{\algorithmicupon\ \textproc{#1}\ifthenelse{\equal{#2}{}}{}{(#2)}}%
   {\algorithmicend\ \algorithmicupon}%

\algnewcommand\abcastdesc{abcast}
\algnewcommand\Abcast{\abcastdesc{} }

\algnewcommand\tobcastdesc{\textproc{TO-Broadcast}}
\algnewcommand\tobcast{\tobcastdesc{} }

\algnewcommand\todeliverdesc{\textproc{TO-Deliver}}
\algnewcommand\todeliver{\todeliverdesc{} }

\algnewcommand\rbcastdesc{\textproc{R-Broadcast}{}}
\algnewcommand\rbcast{\rbcastdesc{} }

\algnewcommand\rdeliverdesc{\textproc{R-Deliver}}
\algnewcommand\rdeliver{\rdeliverdesc{}}

\algnewcommand\raisedesc{\textbf{raise}}
\algnewcommand\Raise{\raisedesc{} }

\algnewcommand\gotodesc{\textbf{goto}}
\algnewcommand\Goto{\gotodesc{} }

\algnewcommand\endrequestdesc{return $(r.\id, \LC, \res_q)$ to client $c$}
\algnewcommand\EndReq{\endrequestdesc{} }%

\algnewcommand\endrequestbasicdesc{return $(r.\id, \res_q)$ to client $c$}
\algnewcommand\EndReqBasic{\endrequestbasicdesc{} }%

\algnewcommand\endtransactiondesc{stop executing transaction $t$}
\algnewcommand\EndTx{\endtransactiondesc{} }%

\algnewcommand\endtransactionreturnrbdesc{stop executing $r.\prog$ and 
return to \Call{init}{}}
\algnewcommand\EndTxR{\endtransactionreturnrbdesc{} }%

\algnewcommand\endtransactionreturtoinitdesc{return to \Call{init}{}}
\algnewcommand\EndTxC{\endtransactionreturtoinitdesc{} }%

\algnewcommand\endtransactioncdesc{stop executing $r.\prog$}
\algnewcommand\EndTxCode{\endtransactioncdesc{} }%

\algnewcommand\endsmtransactionreturnrbdesc{stop executing $r.\prog$ and 
return to \Call{TO-Deliver}{}}
\algnewcommand\EndSmTx{\endsmtransactionreturnrbdesc{} }%

\algnewcommand\lockstartdesc{\textbf{lock \{}}
\algnewcommand\LockStart{\lockstartdesc{} }

\algnewcommand\lockenddesc{\textbf{\}}}
\algnewcommand\LockEnd{\lockenddesc{} }

\algnewcommand\algindentdesc{\hspace{2.8em}}
\algnewcommand\AlgIndent{\algindentdesc{} }

\algnewcommand\algindentsmalldesc{\hspace{1.3em}}
\algnewcommand\AlgIndentSmall{\algindentsmalldesc{} }

\algnewcommand\algindentindentdesc{\hspace{4.0em}}
\algnewcommand\AlgIndentIndent{\algindentindentdesc{} }

\algnewcommand\algexecutedesc{execute $r.\prog$ with $r.\args$}
\algnewcommand\Execute{\algexecutedesc{} }

\algnewcommand{\IIf}[1]{\State\algorithmicif\ #1\ \algorithmicthen}
\algnewcommand{\EElse}[1]{\algorithmicelse}
\algnewcommand{\EndIIf}{}

\algnotext{EndFor}
\algnotext{EndIf}
\algnotext{EndUpon}
\algnotext{EndFunction} 
\algnotext{EndProcedure} 
\algnotext{EndWhile}

\newcommand{\algrule}[1][.2pt]{\par\vskip.5\baselineskip\hrule height
#1\par\vskip.5\baselineskip}

\section{Introduction}\label{sec:intro}

\emph{Replication} is an established method to increase service availability
and dependability. It means deployment of a service on multiple machines and
coordination of their actions so that a consistent state is maintained across
all the service replicas. In case of a (partial) system failure operational
replicas continue to provide the service. 

We consider two basic models of service replication: \emph{State Machine 
Replication (SMR)} and \emph{Deferred Update Replication (DUR)}. In the 
SMR approach \cite{Lam78}, each client request is first ordered among all 
service replicas and then processed by each replica independently. Given that 
the service is deterministic and all requests are executed in the same order 
(sequentially) by every replica, all the replicas are in a consistent state. 
The total order is achieved using fully distributed, fault-tolerant protocols 
for distributed agreement such as Total Order Broadcast (TOB) \cite{DSU04}. In 
DUR, which is an optimistic multi-primary-backup approach \cite{CBPS10}, no 
replica coordination is required prior or during request execution. Instead, 
each request is handled by only one replica using an \emph{atomic transaction}. 
A transaction can run in parallel with any other transactions. DUR uses an 
atomic commitment protocol (based, e.g., on TOB) to ensure consistency upon 
transaction commit. If a \emph{conflict} is detected, i.e., a transaction read 
data modified by a concurrent but already committed transaction, the 
transaction revokes all changes it performed so far and subsequently restarts.

In our previous work \cite{WKK12} \cite{WKK16}, we analytically and 
experimentally compared the SMR and DUR schemes (both based on TOB). Our 
results show that, surprisingly, there is no clear winner--each approach has 
its advantages and drawbacks, and various factors such as workload type, 
parallelism on multicore CPUs, and network congestion have significant impact 
on performance of the SMR and DUR schemes. Also the schemes differ in the 
offered semantics. Most notably, the differences lie in support for 
non-deterministic operations and irrevocable operations, i.e., operations, 
whose effects cannot be rolled back, such as local system calls. DUR provides 
the support for non-deterministic operations but forbids irrevocable operations 
as a transaction may abort due to a conflict. SMR requires deterministic 
operations as they are executed by every replica independently. There are also 
significant differences in provided correctness guarantees: SMR typically 
guarantees linearizability \cite{HW90} whereas DUR provides update-real-time 
opacity \cite{KKW16}, a flavour of opacity \cite{GK10} which allows aborted and 
read-only transactions to operate on stale but still consistent data.


This insight has led us to an idea of combining SMR and DUR into the 
\emph{Hybrid Transactional Replication (HTR)} scheme, which we introduce in 
this paper. This way, we aim to achieve increased performance and more flexible 
semantics. Some requests (transactions) are better performed in the state 
machine (SM) mode, especially if they access many objects, result in large 
updates, or cause many conflicts (e.g., resizing and rehashing a hashtable). On 
the other hand, other transactions that can be easily executed concurrently 
benefit from execution in the deferred update (DU) mode. The execution mode is 
selected dynamically per transaction execution basis by an oracle. The oracle, 
which is supplied by the programmer and can rely on machine learning techniques 
for better flexibility, constantly monitors the system to determine which mode 
is optimal for a particular run of a transaction. Among the data gathered by 
the oracle are the duration of transaction execution, the latency of TOB, the 
size of messages, network congestion, and the system load. Since read-only 
transactions do not modify the system's state, they are always executed in the 
optimistic DU mode and commit locally without any inter-process 
synchronization.

As formally proven in the paper, HTR offers strong consistency guarantees 
which are similar to DUR's. More precisely, HTR satisfies update-real-time 
opacity \cite{KKW16}, a flavour of opacity \cite{GK10} which allows aborted 
and read-only transactions to operate on stale but still consistent data. 
Compared to DUR, HTR offers richer transactional semantics with support for 
irrevocable operations. In HTR, transactions with irrevocable operations are 
simply executed in the SM mode which ensures abort-free execution. 


To evaluate our ideas, we extended with HTR our optimistic distributed 
transactional memory (DTM) system called Paxos STM \cite{WKK12} \cite{WKK16}.
Paxos STM replicates all \emph{transactional objects} (objects shared by 
transactions) and maintains strong consistency of object replicas. Transactions 
are executed atomically and in isolation despite system failures, such as 
server crashes; the crashed servers can be recovered. For expressiveness, 
transactions can be rolled back or retried on demand using the \emph{rollback} 
and \emph{retry} constructs. The latter one can be used in programming idioms 
such as suspending the execution until a given condition is met. To our best 
knowledge, Paxos STM is the first replicated DTM system to provide support for 
irrevocable operations within transactions.

We discuss techniques useful in the process of designing an oracle policy that 
matches the expected workload. To facilitate automatic adaptation to changing 
conditions, we propose the \emph{HybirdML} oracle (\emph{HybML} in short), a 
simple yet surprisingly robust oracle for Paxos STM which relies on online 
machine learning. HybML treats choosing the optimal mode for transaction 
execution as a multi-armed bandit problem for every class of transaction 
defined by the programmer. 


We compare the performance of HybML against two simple oracles that execute 
all updating transactions in either DU or SM mode. We examine scalability of 
the system under various workloads and show that HybML allows the system to 
achieve up to 50\% improvement in performance compared to the DU-only or 
SM-only oracles.
We also show that HybML quickly adapts to changing workloads. The results 
clearly indicate that in all cases an application can benefit from the HTR 
scheme. 

\subsection{Motivations and contributions}

The motivations to conduct this research were threefold. Firstly, as our 
previous work \cite{WKK12} \cite{WKK16} showed that neither SMR nor DUR scheme 
was superior, we were eager to combine these two into a single algorithm to 
bring together the best of both worlds. Secondly, we are not aware of any prior 
research on applying transactional semantics to state machine replication for 
increased expressiveness. Contrary to pure SMR, we achieve greater 
expressiveness by incorporating the \emph{rollback} and \emph{retry} 
constructs: they enable revoking changes performed by a request and restarting 
the execution of a transaction if required. Thirdly, to our best knowledge our 
research is the first on irrevocable actions in a replicated DTM.

The main contributions of the paper are as follows:
\begin{itemize}
\item We proposed a novel scheme called Hybrid Transactional Replication
(HTR), which combines state-machine--based and deferred-update replication 
schemes for better performance, scalability, and improved code expressiveness; 
the algorithm leverages transactional semantics and provides update-real-time 
opacity as a consistency criterion, as formally proven;

\item We developed HTR-enabled Paxos STM, a tool for hybrid transactional
replication of services;

\item We introduced an ML-based oracle for Paxos STM, which allows 
the system to automatically adapt to changing workloads;

\item We evaluated throughput and scalability of Paxos STM under various 
workloads when executing all updating requests either in the SM or DU modes, or 
the combination of these two when using the ML-based oracle. We also 
demonstrated how the ML-based oracle adapts to changing workloads.

\item We showed when a replicated service can benefit from HTR and discussed 
some techniques on how to configure the HTR algorithm for higher performance.
\end{itemize}

This paper is an extended version of our first paper on HTR \cite{KKW13}. In 
this publication, we include a formal proof of correctness for HTR and 
introduce and evaluate machine learning techniques for HTR's oracle.

\subsection{Paper structure}

The paper has the following structure. Firstly, we present related work in
Section~\ref{sec:related_work}. Next, we briefly discuss the SMR and DUR models 
in Section~\ref{sec:context}. Then, in Section~\ref{sec:htr}, we present the 
HTR algorithm and discuss its characteristics. Next, in 
Section~\ref{sec:htr_evaluation}, we show the results of the evaluation of 
HTR-enabled Paxos STM by comparing its performance and scalability under 
diverse workloads and oracles. Finally, we conclude with Section~\ref{sec:conc}.

\section{Related Work} \label{sec:related_work}

%
%
%

%
%
%
%
%
%
%
%

In this section we present work relevant to our research. 

\subsection{Transactional replication}

Over the years, multiple data and service replication techniques have emerged 
(see \cite{CBPS10} for a survey). They differ in offered semantics, as well 
as performance under various workloads. We focus on replication schemes that 
offer strong consistency. 

\emph{State Machine Replication (SMR)} \cite{Lam78} \cite{Sch90} \cite{Sch82}
\cite{Sch93} (described in detail in Section~\ref{sec:context:smr}), is the
simplest and most commonly used non-transactional replication scheme. SMR uses
a distributed agreement protocol to execute client requests on all processes
(replicas) in the same order. For replica coordination, various fault-tolerant
synchronization algorithms for totally ordering events were proposed (see,
e.g., \cite{Lam78a} \cite{Lam84} among others). More recently, Total Order
Broadcast has been used for request dissemination among replicas (see
\cite{DSU04} for a survey of TOB algorithms and \cite{CBPS10} for further
references). Also implementations of TOB with optimistic delivery of messages
are used to build systems based on state replication, e.g., \cite{KPA+03}
\cite{PQR10} \cite{PQR11} \cite{HPR14a}.

\emph{Deferred Update Replication} \cite{CBPS10}, described in detail in 
Section~\ref{sec:context:dur}, is a basic replication scheme which features 
\emph{transactional semantics}. DUR is based on a multi-primary-backup approach 
which, unlike SMR, allows multiple (updating) client requests to be 
executed concurrently (as atomic transactions). Various flavors of DUR are 
implemented in several commercial database systems, including Ingres, MySQL 
Cluster and Oracle. These implementations use 2PC \cite{BHG87} as the atomic 
commitment protocol. In this work, we consider DUR based on TOB \cite{DSU04}. 
This approach is advocated by several authors because of its non-blocking 
nature and predictable behaviour (see \cite{PGS98} \cite{PGS03} \cite{AAeAS97} 
among others). Most recently, it has been implemented in D2STM \cite{CRCR09} 
and in our system called Paxos STM \cite{WKK12} \cite{KKW13} (characterised in 
Section~\ref{sec:htr_evaluation}). It has also been used as part of the 
coherence protocols of S-DUR \cite{SPJ12} and RAM-DUR \cite{DP12}.


In our previous work \cite{WKK12} \cite{WKK16}, we compared SMR and DUR both 
theoretically and practically and showed that neither scheme is superior in 
general. We discuss the differences between the schemes in 
Section~\ref{sec:context:comp}. 

HTR is also a strongly consistent replication scheme based on TOB. 
As we discuss in Section~\ref{sec:htr}, HTR shares many similarities with both 
SMR and DUR by allowing transactions to be executed either in a pessimistic or 
optimistic mode, resembling request or transaction execution in SMR or DUR, 
respectively. As SMR and DUR, HTR fits the framework of \emph{transactional 
replication (TR)} \cite{KKW15}, which formalizes the interaction between 
clients and the replicated system. The programming model of TR corresponds to 
\emph{Distributed Transactional Memory (DTM)}, as discussed below. HTR offers 
guarantees on transaction execution, which are similar to those provided by DUR 
(see Section~\ref{sec:htr:correctness}).

There are a number of optimistic replication protocols that, similarly to HTR, 
have their roots in DUR. For example, in \emph{Post\-gres-R} \cite{KA00}, TOB 
is only used to broadcast the updates produced by a transaction; a decision 
regarding transaction commit or abort is sent in a separate (not ordered) 
broadcast. \emph{PolyCert} \cite{CPR11} 
can switch between three TOB-based certification protocols, which differ in the 
way the readsets of updating transactions are handled. \emph{Executive DUR 
(E-DUR)} \cite{KKW14} streamlines transaction certification with the leader of 
the Paxos protocol. 
Note that all the above mentioned protocols are aimed only at increasing the 
throughput of DUR and not at extending the transactional semantics of the base 
protocol, as in case of HTR (see also 
Section~\ref{sec:related_work:txsemantics}). 

\subsection{Distributed Transactional Memory Systems} 
\label{sec:related_work:dtm}

The model of replication considered in this work closely corresponds to some
Distributed Transactional Memory systems.
DTM evolved as an extension of local (non-distributed) transactional memory 
\cite{HM93} to distributed environment. In TM, transactions are used to 
synchronize accesses to shared data items and are meant as an alternative to 
lock-based synchronization mechanisms. TM also has been proposed as an 
efficient hardware-supported mechanism for implementing monitors \cite{YVDH+14}.

We added the HTR functionality to Paxos STM \cite{WKK12}, which is an 
object-based DTM system that we developed to compare SMR and DUR and then  
used as a testbed for the E-DUR scheme \cite{KKW14}. It builds on JPaxos 
\cite{KSZ+11}--a highly optimized implementation of the Paxos 
algorithm \cite{Lam98}. 

Several other DTM systems were developed so far, e.g., Anaconda \cite{KLA+10}, 
Clu\-ster-STM \cite{BAC08}, DiSTM \cite{KAJ+08}, Hyflow \cite{SR11a} and 
Hyflow2 \cite{TRP13}. Notably, our system was designed from ground up as a 
fully distributed, fault-tolerant system, in which crashed replicas can 
recover. Unlike DiSTM, there is no central coordinator, which could 
become a bottleneck under high workload. The TOB-based transaction 
certification protocol implemented by Paxos STM simplifies the architecture, 
limits the number of communication steps and avoids deadlocks altogether 
(unlike the commit protocols in Anaconda or Hyflow/Hyflow2).
The use of TOB also helps with graceful handling of replica crashes (which, 
e.g., are not considered in Cluster-STM). The closest design to ours is the one 
represented by D2STM \cite{CRCR09}, which also employs full replication and 
transaction certification based on TOB. However, unlike Paxos STM, D2STM does 
not allow replicas to be recovered after crash nor transactions to contain 
irrevocable operations.

\subsection{Transaction semantics} \label{sec:related_work:txsemantics}




As mentioned earlier, HTR allows irrevocable operations in transactions 
executed in the SM mode, which are guaranteed to commit (see 
Section~\ref{sec:htr:specification}). The problem of irrevocable operations has 
been researched in the context of non-distributed TM (see e.g., \cite{BDL+07} 
\cite{OCS07} \cite{SMS08} \cite{WSAA08} among other). These operations are 
typically either forbidden, postponed until commit, or switched into an 
\textit{ad hoc} pessimistic mode \cite{UBES10}. Some solutions for starved 
transactions (i.e., transaction, which repeatedly abort) are relevant here, 
e.g., based on a global lock \cite{KR81} or leases \cite{CRR10}. The former is 
not optimal as it impacts the capability of the system to process transactions 
concurrently (unlike in HTR, where multiple transactions in the DU mode can 
execute concurrently with a transaction in the SM mode). On the other 
hand, the latter solution does not guarantee abort-free execution and requires 
a transaction to be first executed fully optimistically at least once. More 
recently, Atomic RMI, a fully-pessimistic DTM system, which provides support 
for irrevocable operations has been presented in \cite{SW15} and \cite{WS16}. 
Unlike our system, in which transactions are local in scope and data 
is consistently replicated, Atomic RMI implements distributed transactions and 
does not replicate data across different machines.

In database systems, there exists work on allowing nondeterministic operations, 
so also irrevocable operations. In \cite{TA10}, a centralized preprocessor is 
used to split a transaction into a sequence of subtransactions that are 
guaranteed to commit. Each subtransaction requires a separate broadcast, thus 
significantly increasing latency in transaction execution.


\subsection{Protocol switching} \label{sec:related_work:htr:protocol_switching}

Since in HTR a transaction can be executed in two different modes, solutions 
which allow for protocol switching are relevant. For example, PolyCert 
\cite{CPR11} features three certification protocols that differ in the way the 
readsets of updating transactions are handled. 
Morph-R \cite{CRRR13} features three interchangeable replication 
protocols (primary-backup, distributed locking based on 2PC, and 
TOB-based certification), which can be switched according to the current needs.
Contrary to PolyCert and Morph-R, our approach aims at the ability to execute 
transactions in different modes with the mode chosen on per-transaction-run 
basis. Additionally, our system considers a much wider set of parameters and 
can be tuned by the programmer for the application-specific characteristics. 
Hyflow \cite{SR11} allows various modes of accessing objects needed by a 
transaction: migrating them locally and caching (data flow) or invoking remote 
calls on them (control flow).
StarTM \cite{DDSL12} uses static code analysis to select between the execution 
satisfying snapshot isolation (SI) and serializability for increased 
performance.

In AKARA \cite{CPO08}, a transaction may be executed either by all 
replicas as in SMR, or by one replica with updates propagated after transaction 
finishes execution, in a somewhat similar way to which it is done in DUR. In 
the latter case, execution can proceed either in an optimistic or in a 
pessimistic fashion, according to a schedule established prior to transaction 
execution using conflict classes. However, in both cases the protocol requires 
two broadcast messages for every transaction: a TOB message to establish the 
final delivery order and a reliable broadcast message with the transaction's 
updates (DUR and HTR require only one broadcast for every transaction). Unlike 
in HTR, in AKARA the execution mode is predetermined for every transaction and 
depends on the transaction type.


Approaches that combine locks and transactions are also relevant. In 
\cite{WHJ06}, Java monitors can dynamically switch between the lock-based and 
TM-based implementations. Similarly, adaptive locks \cite{UBES10} enable 
critical sections that are protected either by mutexes or executed as 
transactions. However, the above two approaches use a fixed policy. In our 
approach, the HTR oracles implement a switching policy that can adapt to 
changing conditions.


\subsection{Machine learning techniques} \label{sec:related_work:ml}

The mechanisms implemented in our ML-based oracle for HTR are heavily inspired 
by some algorithms well known in the ML community. Most importantly, HybML
implements a policy that is similar to the \emph{epsilon-greedy strategy} for 
the \emph{multi-armed bandit problem} (see \cite{R52} for the original 
definition of the problem, \cite{LR85} for the proof of convergence, and 
\cite{KP14} for the survey of the algorithms solving the problem). However, 
some crucial distinctions can be made between the original approach and ours. 
We discuss them in detail in Section~\ref{sec:htr:ml}.

A survey of self-tuning schemes for the algorithms and parameters used in 
various DTM systems can be found in \cite{CDRR15}. A few ML-based mechanisms 
have been used in some of the transactional systems we discussed before. 
PolyCert \cite{CPR11} implements two ML approaches to select the optimal 
certification protocol. The first is an offline approach based on regressor 
decision trees, whereas the second uses the \emph{Upper Confidence Bounds} 
algorithm, typically used in the context of the multi-armed bandit problem. 
Because the used certification protocols behave differently under various 
workloads, in the latter approach the authors decided to discretize the 
workload state space using the size of readsets generated during execution of 
transactions. This differs from our approach, since in HybML we solve the 
multi-armed bandit problem independently for every class of transactions. The 
rough classification, which can be much finer than in PolyCert, is provided by 
the user. Morph-R \cite{CRRR13} uses three different black-box offline learning 
techniques to build a prediction model used to determine the optimal 
replication schemes for the current workload, i.e., decision-trees, neural 
networks, and support vector machines. Such heavy-duty ML approaches are not 
suitable for our purposes because HTR selects an execution mode for each 
transaction run independently and not for the whole system once every several 
minutes, as it is usually the case in typical applications of ML techniques 
(see, e.g., \cite{RSCQ12} \cite{CGR+11}). Hyflow \cite{SR11a} uses heuristics 
to switch between the data-flow and control-flow modes, but the authors do not 
provide details on the mechanisms used.

\section{The Context of HTR} \label{sec:context}

In this section, we describe the context for the HTR algorithm.
We begin with the description of the system model. Then, we present the SMR and 
DUR replication schemes (we follow the description of algorithms from 
\cite{KKW15} and \cite{KKW16}). Finally, we briefly discuss 
strengths and weaknesses of both approaches.


\subsection{System model}

The model consists of a set $\mathcal{P} = \{ p_1, p_2, ..., p_n \}$ of $n$ 
\emph{service processes} (\emph{replicas}) running on independent machines 
(nodes) connected via a network. The processes communicate only by means of 
messages. External entities (\emph{clients}) issue \emph{requests} (also called 
\emph{transactions}) to any of the replicas and receive responses once the 
requests are processed. A client can issue only one request at a time. A 
request consists of a unique identifier $\id$, a program $\prog$ to be executed 
and arguments $\args$, which are necessary to execute the program. Some 
requests may be marked as \emph{read-only (RO)}, i.e., they do not alter the 
system's state. \emph{Updating} requests (also called \emph{read-write (RW) 
requests}) may or may not contain operations that modify the system's state. We 
assume a \emph{crash-recovery failure model}, where crash of at most $\lceil 
\frac{n}{2} \rceil - 1$ processes is tolerated. After \emph{recovery}, a failed 
process can rejoin the system at any time. We also assume availability of a 
failure detector $\Omega$ which is the weakest failure detector capable of 
solving distributed consensus in the presence of failures \cite{CHT96}. The 
discussed algorithms are memory model agnostic, i.e., they can be used in 
either the object-- or memory-word--based environments. To match our 
implementation, we assume an object-oriented memory model.

\subsection{State Machine Replication} \label{sec:context:smr}

In SMR \cite{Lam78} \cite{Sch90} \cite{Sch82}, for which we give the 
pseudocode in Algorithm~\ref{alg:smr}, a service is fully replicated by 
every process. Each client request, which can be handled by any replica, 
consists of three elements: a unique $\id$, $\code$, which 
specifies the operations to be executed and $\args$, which holds the arguments 
needed for the program execution. Prior to execution, the request is broadcast 
to all replicas using $\tobcast$ (line \ref{alg:smr:tob}). Only then each 
replica executes the request independently (line \ref{alg:smr:execute}). After 
the request is executed, the thread that originally received the request 
returns the response to the client (line \ref{alg:smr:return}). Note that for 
brevity we omit in the pseudocode some data structures holding a history of 
clients' requests, which have to be maintained to provide fault-tolerance in 
case of the loss of request/response messages.

Since all replicas start from the same initial state and process all requests 
in the same order (thanks to the properties of TOB), the state of the service 
is replicated on all machines. Naturally, execution of each request 
has to be deterministic. Otherwise, the consistency among replicas could not be 
preserved as the replicas might advance differently.

Note that SMR does not differentiate between read-only and updating requests.
In an optimized version, SMR may execute multiple read-only requests 
concurrently, with no inter-replica synchronization \cite{R11}. In practice,
however, such implementations are not common due to their increased complexity
and limited scalability (all updating requests still need to be executed 
sequentially by all replicas).

{\renewcommand\small{\scriptsize}%
\begin{algorithm}[b]\footnotesize
\caption{State Machine Replication for process $p_i$}
\label{alg:smr}      
\footnotesize
\begin{algorithmic}[1]
\item[\textbf{Thread $q$ on request} $r$ \textbf{from client} $c$ (executed on 
one replica)]
\State{response $\res_q \gets \bot$}
\Upon{init}{}
  \State \tobcast $r$ \label{alg:smr:tob}\hfill\textit{// blocking}
  \State \EndReqBasic \label{alg:smr:return}
\EndUpon
\algrule
\item[\textbf{The main thread of SMR} (executed on all replicas)]
\Upon{\todeliver}{request $r$} \label{alg:smr:adeliver}
  \State{response $\res \gets$ execute $r.\prog$ with $r.\args$} \label{alg:smr:execute}
  \If{request with $r.\id$ handled locally by thread $q$}
     \State{$\res_q \gets res$}
  \EndIf
\EndUpon
\end{algorithmic}
\end{algorithm}

}

\subsection{Deferred Update Replication} \label{sec:context:dur}


Now we focus on DUR \cite{CBPS10} in its basic version, in which all data items 
(shared objects) managed by the replicated service are fully replicated on each 
replica. We give the pseudocode for DUR in Algorithm~\ref{alg:dur}.

Unlike in SMR, in DUR, every request (transaction) is executed only by 
a single replica and all replicas can execute different requests concurrently 
(also in separate threads on multiple processor cores). From the client's point 
of view, each transaction runs sequentially with respect to any other 
concurrent transactions in the system. Transaction execution happens 
optimistically and in isolation on local copies of shared objects 
(line~\ref{alg:dur:execute}). Additionally, all accesses to shared objects are 
recorded independently for each transaction (in the $\readset$, 
line~\ref{alg:dur:read}, and the $\updates$ set, line~\ref{alg:dur:write}, both 
kept as a part of the transaction descriptor, line~\ref{alg:dur:txDesc}). This 
information is disseminated among replicas (using TOB, line \ref{alg:dur:tob}), 
once 
the transaction enters the committing phase (calls the $\textsc{commit}$ 
procedure, line~\ref{alg:dur:commit}). It is the only moment in a transaction's 
lifetime that requires replica synchronization. 

Upon delivery of a message with state updates and information about accesses to 
shared objects performed by a transaction (line~\ref{alg:dur:adeliverTx}), each 
replica independently \emph{certifies} the transaction 
(line~\ref{alg:dur:globalCert}). It means that the replica checks whether the 
committing transaction had read any shared objects modified by a concurrent but 
already committed transaction (see below how DUR establishes the precedence 
order between transactions). This is done by comparing the $\updates$ sets of 
the already committed updating transactions (stored in the $\Log$ variable), 
with the $\readset$ of the committing transaction. If none of the sets 
intersect, the transaction commits, the state modification it produced are made 
visible (lines \ref{alg:dur:commitBegin}--\ref{alg:dur:commitEnd}), and the 
response is returned to the client (line~\ref{alg:dur:endreq}). Otherwise the 
transaction is rolled back and restarted (line~\ref{alg:dur:restart}). 

DUR establishes the precedence order between transaction, using a logical 
clock. To this end, each process of DUR maintains a global variable $\LC$, 
which is incremented every time a process applies updates of a transaction 
(line \ref{alg:dur:inc}). $\LC$ is used to mark the start and the end of the 
transaction execution (lines \ref{alg:dur:start} and \ref{alg:dur:end}). For 
transaction descriptors $t_i$ and $t_j$ of any two transactions $T_i$ and 
$T_j$ ($T_i \neq T_j$ ) in any execution of DUR, if $t_i.\eend \leq 
t_j.\start$, then $T_i$ preceeds (in real-time) $T_j$;
otherwise $T_i$ and $T_j$ are concurrent. $\LC$ also allows the process to 
track whether its state is recent enough to execute the client's request (line 
\ref{alg:dur:clock}). For that we require that there is an additional element 
passed along with every client request (the $\clock$ variable, 
line~\ref{alg:dur:clock}) and every client response (the current value of the 
$\LC$ variable, line~\ref{alg:dur:endreq}, see below).

To ensure that a live transaction always executes on a consistent state, we 
perform partial transaction certifications upon every read operation 
(line~\ref{alg:dur:check}). However, these procedures are done only locally and 
do not involve any inter-replica synchronization. 

Read-only transactions, i.e., transactions that did not perform any 
updating operations (line~\ref{alg:dur:noupdates}), do not require 
inter-process synchronization in order to commit, because their execution does 
not result in any changes to the local or replicated state. Once a read-only 
transaction finishes execution, it can safely commit straight away 
(line~\ref{alg:dur:noupdates2}). All possible conflicts would have been 
detected earlier, upon read operations (line \ref{alg:dur:check}). Note that 
for updating transactions, we perform an additional certification just prior to 
broadcasting its transaction descriptor (line~\ref{alg:dur:localCert}). This 
step is not mandatory, but allows the process to detect conflicts earlier, and 
thus sometimes avoids costly network communication.

To manage the control flow of a transaction, the programmer can use two 
additional procedures: \textsc{rollback} (line \ref{alg:dur:rollback}), which 
stops the execution of a transaction and revokes all the changes it performed 
so far, and \textsc{retry} (line \ref{alg:dur:retry}), which forces a 
transaction to rollback and restart. 

For clarity, we make several simplifications. Firstly, we use a single global 
(reentrant) lock to synchronize operations on $\LC$ (lines \ref{alg:dur:start}, 
\ref{alg:dur:inc}, \ref{alg:dur:end}), $\Log$ (lines \ref{alg:dur:logFilter} 
and \ref{alg:dur:logAppend}) and the accesses to transactional objects (lines 
\ref{alg:dur:getObjectCall} and \ref{alg:dur:apply}). Secondly, we allow 
$\Log$ to grow indefinitely. $\Log$ can easily be kept small by garbage 
collecting information about the already committed transactions that ended 
before the oldest live transaction started its execution in the system. 
Thirdly, we use the same certification procedure for both the certification 
test performed upon every read operation (line \ref{alg:dur:check}) and the 
certification test that happens after a transaction descriptor is delivered to 
the main thread (line \ref{alg:dur:globalCert}). In practice, doing so would be 
very inefficient, because upon every read operation we check for the 
conflicts against all the concurrent transactions (line 
\ref{alg:dur:logFilter}), thus performing much of the same work again and 
again. However, these repeated actions can be easily avoided by associating the 
accessed shared objects with a version number equal to the value of $\LC$ at 
the time the objects were most recently modified.

{\renewcommand\small{\scriptsize}%
\begin{algorithm}[t] 
\caption{Deferred Update Replication for process $p_i$}
\label{alg:dur}  
  \footnotesize
\begin{algorithmic}[1]
\State{integer $\LC \gets 0$} \label{alg:dur:lc}
\State{set $\Log \gets \emptyset$} \label{alg:dur:log}

\Function{getObject}{txDescriptor $t$, objectId $oid$}
  \If{$(\oid, \obj) \in t.\updates$} $\vvalue \gets \obj$ \label{alg:dur:retrieveUpdates}
  \State{\textbf{else} {\ $\vvalue \gets$ retrieve object $\oid$} \label{alg:dur:retrieve}} \EndIf
  \State{\Return $\vvalue$}
\EndFunction

\Function{certify}{integer $\start$, set $\readset$} \label{alg:dur:certify}
  \State{\LockStart $L \gets \{ {t \in \Log} : t.\eend > \start \}$ \LockEnd} \label{alg:dur:logFilter}
  \ForAll{$t \in L$}
      \State{$\writeset \gets \{ \oid : \exists (\oid,\obj) \in t.\updates \}$}
      \IIf{$\readset \cap \writeset \neq \emptyset$} \Return $\failure$ \EndIIf
  \EndFor
  \State{\Return $\success$}
\EndFunction

\algrule

\item[\textbf{Thread $q$ on request} $r$ \textbf{from client} $c$ (executed on 
one replica)]
\State{enum $\outcome_q \gets \failure$}\hfill\textit{// type: enum \{ success, failure \}}
\State{response $\res_q \gets \nnull$}

\State{txDescriptor $t \gets \nnull$}\hfill\textit{// type: record (id, start, end, readset, 
updates)} \label{alg:dur:txDesc}

\Upon{init}{}
  \State{\textbf{wait until} $\LC \ge r.\clock$} \label{alg:dur:clock}
  \State{\Call{transaction()}{}}
  \State{\EndReq} \label{alg:dur:endreq}
\EndUpon

\Procedure{transaction}{}
  \State{$t \gets (\text{a new unique}\ \id, 0, 0, \emptyset, \emptyset)$}
  \State{\LockStart $t.\start \gets LC$ \LockEnd} \label{alg:dur:start}
  \State{$\res_q \gets$ execute $r.\prog$ with $r.\args$} \label{alg:dur:execute}
  \IIf{$\outcome_q = \failure$} \Call{transaction()}{} \label{alg:dur:restart} \EndIIf 
\EndProcedure

\Function{read}{objectId $\oid$}
\State{$t.\readset \gets t.\readset \cup \{\oid\}$} \label{alg:dur:read}
\State{\LockStart \textbf{if} \Call{certify}{$t.\start, \{\oid\}$} $= 
\failure$ \textbf{then}} \label{alg:dur:check} \Call{retry()}{}
     \State{\AlgIndent\AlgIndentSmall \textbf{else} \Return \Call{getObject}{$t$, $\oid$}} \label{alg:dur:getObjectCall}
\LockEnd
\EndFunction

\Procedure{write}{objectId $\oid$, object $\obj$}
\State{$t.\updates \gets \{ (\oid',\obj') \in t.\updates : \oid' \neq \oid \} \cup \{(\oid,\ \obj)\}$} \label{alg:dur:write}
\EndProcedure

\Procedure{commit}{} \label{alg:dur:commit}
  \State \EndTxCode
  \If{$t.\updates = \emptyset$} \label{alg:dur:noupdates}
    \State{$\outcome_q = \success$}
    \State \Return \label{alg:dur:noupdates2}
  \EndIf
  \IIf{\Call{certify}{$t.\start, t.\readset$} $= \failure$} \label{alg:dur:localCert} {\Return} \EndIIf
  \State{\tobcast $t$}\hfill\textit{// blocking} \label{alg:dur:tob}
\EndProcedure

%

\Procedure{retry}{} \label{alg:dur:retry}
  \State{\EndTxCode}
\EndProcedure

\Procedure{rollback}{} \label{alg:dur:rollback}
  \State{\EndTxCode}
  \State{$\outcome_q \gets \success$}
\EndProcedure

\algrule
\item[\textbf{The main thread of DUR} (executed on all replicas)]
\Upon{\todeliver}{txDescriptor $t$}  \label{alg:dur:adeliverTx}
  \If{\Call{certify}{$t.\start, t.\readset$} $ = \success$} \label{alg:dur:globalCert}
     \State{\LockStart $\LC \gets \LC + 1$ } \label{alg:dur:inc} \label{alg:dur:commitBegin}
     \State{\AlgIndent $t.\eend \gets \LC$ } \label{alg:dur:end} 
     \State{\AlgIndent $\Log \gets \Log \cup \{t\}$} \label{alg:dur:logAppend}
     \State{\AlgIndent apply $t.\updates$ \LockEnd} \label{alg:dur:apply} \label{alg:dur:commitEnd}
  \IIf{transaction with $t.id$ executed locally by thread $q$} $\outcome_q \gets \success$\EndIIf
  \EndIf 
\EndUpon
\end{algorithmic}
\end{algorithm}

}

\subsection{SMR vs DUR comparison} \label{sec:context:comp}





In many cases SMR proves to be highly efficient although it allows no 
parallelism (or limited parallelism in its optimized version). In fact, when a 
workload is not CPU intensive, it performs much better than DUR \cite{WKK12} 
\cite{WKK16}. Also SMR is relatively easy to implement, because most of the 
complexity is hidden behind TOB. A major drawback of SMR is that it requires a 
replicated service to be deterministic. Otherwise consistency could not be 
preserved.
%


Contrary to SMR, in DUR parallelism is supported for 
read-only as well as updating transactions by default--each transaction is 
executed by a single replica in a separate thread and in isolation. This way 
DUR takes better advantage over modern multicore hardware. However, the 
performance of DUR is limited for workloads generating high contention. It is 
because in such conditions transactions may be aborted numerous times before 
eventually committing. Aborting live transactions as soon as they are known to 
be in conflict with a transaction that had just recently committed may help but 
only to some degree. 

DUR requires no synchronization (no communication step) among replicas 
for read-only transactions as they do not change the local or replicated state. 
This way read-only requests are handled by DUR much more efficiently 
compared to SMR. Additionally, in DUR read-only transactions can be provided 
with abort-free execution guarantee by introducing the multiversioning scheme. 
\cite{BG83} \cite{KKW15}. Multiversioning allows multiple versions of all 
transactional objects to be stored while being transparent to the programmer, 
i.e., at any moment only one version of any transactional object is accessible 
by a transaction. Paxos STM, which we extended with the HTR algorithm presented 
in this paper, implements both early conflict detection as well as the 
multiversioning scheme.

Usually there is a significant difference in the size of network messages 
communicated between replicas in SMR and DUR. In DUR, the broadcast 
messages contain transaction descriptors with readsets and updates sets. The 
size of these messages can be significant even for a medium sized transaction. 
Large messages cause strain on the TOB mechanism and increase transaction 
certification overhead. On the other hand, in SMR usually the requests consist 
only of an identifier of a method to be executed and data required for its 
execution; these messages are often as small as 100B.



DUR supports concurrency on multicore architectures. Concurrent programming
is error-prone but atomic transactions greatly help to write correct programs.
Firstly, operations defined within a transaction appear as a single logical 
operation whose results are seen entirely or not at all. Secondly, concurrent 
execution of transactions is deadlock-free which guarantees progress.
Moreover, the 
\emph{rollback} and \emph{retry} constructs enhance expressiveness. However, as 
mentioned earlier, irrevocable operations are not permitted since at any moment 
a transaction may be forced to abort and restart due to conflicts with other 
transactions.

Both SMR and DUR offer strong consistency guarantees. SMR ensures 
linearizability \cite{HW90}. On the other hand, DUR guarantees update-real-time 
opacity \cite{KKW15a} \cite{KKW16}, a flavour of opacity \cite{GK10} which 
allows aborted and read-only transactions to operate on stale but still 
consistent data. As shown in \cite{KKW16}, when transactions are hidden from 
clients, DUR provides update-real-time linearizability which is strictly weaker 
than (real-time) linearizability offered by SMR (requests that modify the 
system's state are provided the same guarantees in both SMR and DUR).

\section{Hybrid Transactional Replication} \label{sec:htr}

In this section, we define Hybrid Transactional Replication (HTR), a novel 
transactional replication scheme that seamlessly merges DUR and SMR. First, we 
discuss the transaction oracle--the key new component of our algorithm. Next, 
we explain the HTR algorithm by presenting its pseudocode and giving the proof 
of correctness. Then, we briefly discuss the strengths of HTR. Finally, we 
present two approaches to creating an oracle: a manual, tailored for a given 
workload, and an automatic, based on machine learning.

\subsection{Transaction oracle}

Our aim was to seamlessly merge the SMR and DUR schemes, so that requests 
(transactions) can be executed in either scheme depending on the desired 
performance considerations and execution guarantees (e.g., support for 
irrevocable operations). \emph{Transaction oracle} (or \emph{oracle}, in short) 
is a mechanism that for a given transaction's run is able to assess the best 
execution mode: either the \emph{SM mode}, which resembles request execution 
using SMR, or the \emph{DU mode}, which is analogous to executing a request 
using DUR. The oracle may rely on hints declared by the programmer as well as 
on dynamically collected \emph{statistics}, i.e., data regarding various 
aspects of system's performance, such as:
\begin{itemize}
\item duration of various phases of transaction processing, e.g., 
      execution time of a request's (transaction's) code, TOB latency, 
      and duration of transaction certification,
\item abort rate, i.e., the ratio of aborted transaction runs to all 
      execution attempts, 
\item sizes of exchanged messages, readsets and updates sets,
\item system load, i.e., a measure of utilization of system resources such as
CPU and memory,
\item delays introduced by garbage collector,
\item saturation of the network.
\end{itemize}
Declared read-only transactions, i.e., transactions known \emph{a priori} to be 
read-only, are always executed in the DU mode since they do not 
alter the local or replicated state and thus do not require distributed 
certification. Hence, decisions made by the oracle only regard updating 
transactions.

Since the hardware and the workload can vary between the replicas the system 
can use different oracles at different nodes and independently change them at
runtime when desired. For brevity, in the description of the algorithm we
abstract away the details of the oracle implementation and treat it as a black
box with only two functions: $\textsc{feed}(\mathit{data})$, used to update the 
oracle with data collected over the last transaction's run, regardless of the 
outcome, and $\textsc{query}(\mathit{request})$, used to decide in 
which mode a new transaction is to be executed). 

The problem of creating a well-performing oracle is non-trivial and depends on 
the expected type of workload. In Section~\ref{sec:htr:tuning} we discuss a 
handful of tips on how to build an oracle that matches the expected workload. 
Then, in Section~\ref{sec:htr:ml}, we also show an oracle which uses machine 
learning techniques to automatically adjust its policy to changes in the 
workload.

\subsection{Specification} \label{sec:htr:specification}

Below we describe the HTR algorithm, whose pseudocode is given in 
Algorithm~\ref{alg:htr}. HTR is essentially DUR (Algorithm~\ref{alg:dur}), 
extended with the SMR scheme (Algorithm~\ref{alg:smr}) and the 
\textsc{updateOracleStatistics} procedure (line \ref{alg:htr:updateOracle}) 
that feeds the oracle with the statistics collected in a particular run of a 
transaction before the transaction is committed, rolled back, or retried.

Note that HTR features two sets of functions/procedures facilitating 
execution of a transaction (i.e., \emph{read} and \emph{write} operations on 
shared objects) and managing the control flow of the transaction (i.e., 
procedures used to commit, rollback or retry the transaction). One set of 
functions/procedures is used by transactions executed in the DU mode (lines 
\ref{alg:htr:DUFunctionsBegin}--\ref{alg:htr:DUFunctionsEnd}) and 
one is used by transactions executed in the SM mode (lines 
\ref{alg:htr:SMFunctionsBegin}--\ref{alg:htr:SMFunctionsEnd}).

When transaction is about to be executed, the oracle is queried to determine
the execution mode for this particular transaction run 
(line~\ref{alg:htr:query}). When the DU mode is chosen (line 
\ref{alg:htr:dumode}), a transaction, called a \emph{DU transaction}, is 
executed and certified exactly as in DUR. It means that it is executed locally
(line~\ref{alg:htr:executeDU}) and only once commit is attempted 
(line~\ref{alg:htr:commit}) and the transaction passes local certification 
(line~\ref{alg:htr:localCert}), it is broadcast using TOB 
(line~\ref{alg:htr:tob}) to all replicas to undergo the final certification. 
On the other hand, when the SM mode 
is chosen (line \ref{alg:htr:smmode}), the request is first broadcast using TOB 
(line \ref{alg:htr:tobSM}) and then executed on all replicas as an \emph{SM 
transaction} (lines 
\ref{alg:htr:executionStartSM}--\ref{alg:htr:executionEndSM}). The execution of 
SM transactions happens in the same thread which is responsible for certifying 
DU transactions and applying the updates they produced. It means that at most 
one SM transaction can execute at a time and its execution does not interleave 
with handling of commit of DU transactions. However, the algorithm does not 
prevent concurrent execution of an SM transaction and multiple DU transactions; 
only the certification test and the state update operations of these DU 
transactions may be delayed until the SM transaction is completed. 

Since the execution of an SM transaction is never interrupted by receipt of a 
transaction descriptor of a DU transaction, no SM transaction is ever aborted.
It means that an SM transaction does not need to be certified and can commit 
straight away (lines \ref{alg:htr:commitBeginSM}--\ref{alg:htr:commitEndSM}). 
For the same reason, reading a shared object does not involve checking for 
conflicts (line \ref{alg:htr:readSM}) and reading the current value of $\LC$ 
(line \ref{alg:htr:startSM}) does not have to be guarded by a lock.

Naturally, an SM transaction has to be deterministic, so that the state of the 
system is kept consistent across replicas.

Note that after every transaction's run (regardless of the used execution mode 
and the fate of the transaction, i.e., whether the transaction commits, aborts 
or is rolled back), the statistics gathered during the run are fed to the 
oracle (lines \ref{alg:htr:updateOracleDU} and \ref{alg:htr:updateOracleSM}).

Because the pseudocode of HTR is based on the pseudocode we provided for DUR 
(Algorithm~\ref{alg:dur}), there are similar simplifications in both 
pseudocodes: we use a single global (reentrant) lock to synchronize operations 
on $\LC$ (lines \ref{alg:htr:start}, \ref{alg:htr:inc}, \ref{alg:htr:end}, 
\ref{alg:htr:incSM}, \ref{alg:htr:endSM}), $\Log$ (lines 
\ref{alg:htr:logFilter}, \ref{alg:htr:logAppend}, \ref{alg:htr:logAppendSM}), 
and the accesses to transactional objects (lines \ref{alg:htr:getObjectCall}, 
\ref{alg:htr:apply}, \ref{alg:htr:applySM}), we allow $\Log$ to grow 
indefinitely and we use the same certification procedure for both the 
certification test performed upon every read operation for DU transactions 
(line \ref{alg:htr:check}) and the certification test that happens after a 
transaction descriptor of a DU transaction is delivered to the main thread 
(line~\ref{alg:htr:globalCert}). The limitations introduced by these 
simplifications can be mitigated in a similar manner as in DUR.

\newcommand{\tdu}{\mathit{t_{DU}}}
\newcommand{\tsm}{\mathit{t_{SM}}}
\newcommand{\odu}{\mathit{outcome_q}}
\newcommand{\osm}{\mathit{outcome_q}}

{\renewcommand\small{\scriptsize}%
\begin{algorithm*}
\caption{Hybrid Transactional Replication for process $p_i$}
\label{alg:htr}  
  \footnotesize
\begin{multicols*}{2}
\begin{algorithmic}[1]
\State{integer $\LC \gets 0$} \label{alg:htr:lc}
\State{set $\Log \gets \emptyset$} \label{alg:htr:log}

\Function{getObject}{txDescriptor $t$, objectId $\oid$}
  \If{$(\oid, \obj) \in t.\updates$} $\vvalue \gets \obj$ \label{alg:htr:retrieveUpdates}
  \State{\textbf{else} $\vvalue \gets$ retrieve object $\oid$} \label{alg:htr:retrieve}
  \EndIf
  \State{\Return $\vvalue$}
\EndFunction

\Function{certify}{integer $\start$, set $\readset$} \label{alg:htr:certify}
  \State{\LockStart $L \gets \{ {t \in \Log} : t.\eend > \start \}$ \LockEnd} 
\label{alg:htr:logFilter}
  \ForAll{$t \in L$}
      \State{$\writeset \gets \{ \oid : \exists (\oid,\obj) \in t.\updates \}$}
      \If{$\readset \cap \writeset \neq \emptyset$} \Return $\failure$
      \EndIf
  \EndFor
  \State{\Return $\success$}
\EndFunction

\Procedure{updateOracleStatistics}{txDescriptor $t$} 
\label{alg:htr:updateOracle}
    \State{$\txoracle.\textproc{feed}(t.\stats)$}
\EndProcedure

\algrule

\item[\textbf{Thread $q$ on request} $r$ \textbf{from client} $c$ (executed on 
one replica)]
\State{enum $\outcome_q \gets \failure$}\hfill\textit{// type: enum \{ success, 
failure \}}
\State{response $\res_q \gets \nnull$}
\State{txDescriptor $\tdu \gets \nnull$}\hfill\textit{// type: (id, 
start, end, readset, updates, stats)} \label{alg:htr:txDesc}

\Upon{init}{}
  \State{\textbf{wait until} $\LC \ge r.\clock$} \label{alg:htr:clock}
  \State{\Call{Transaction()}{}}
  \State{\EndReq} 
\EndUpon

\Procedure{transaction}{}
  \State{$\mode \gets \txoracle.\textproc{query}(r)$} \label{alg:htr:query}
  \If{$\mode = \mathit{DUmode}$} \label{alg:htr:dumode}
    \State{$\tdu \gets (\text{a new unique}\ \id, 0, 0, \emptyset, \emptyset, 
\emptyset)$}
    \State{\LockStart $\tdu.\start \gets \LC$ \LockEnd} \label{alg:htr:start}
    \State{$\res_q \gets$ execute $r.\prog$ with $r.\args$} \label{alg:htr:executeDU}
  \Else\hfill\textit{// mode = SMmode} \label{alg:htr:smmode}
    \State{\tobcast $r$}\hfill\textit{// blocking } \label{alg:htr:tobSM}
  \EndIf
  \If{$\outcome_q = \failure$} \Call{transaction()}{}
  \EndIf
\EndProcedure

\Function{read}{objectId $\oid$} \label{alg:htr:DUFunctionsBegin}
    \State{$\tdu.\readset \gets \tdu.\readset \cup \{\oid\}$} 
\label{alg:htr:read}
    \State{\LockStart \textbf{if} \Call{certify}{$\tdu.\start, \{\oid\}$} $= 
\failure$ \textbf{then}} \Call{retry()}{}\label{alg:htr:check}
    \State{\AlgIndent \AlgIndentSmall \textbf{else} \Return \Call{getObject}{$\tdu$, $\oid$} 
\LockEnd} \label{alg:htr:getObjectCall}
\EndFunction

\Procedure{write}{objectId $\oid$, object $\obj$}
    \State{$\tdu.\updates \gets \{ (\oid',\obj') \in \tdu.\updates : \oid' \neq 
\oid \} \cup \{(\oid,\ \obj)\}$} \label{alg:htr:write}
\EndProcedure

\Procedure{commit}{} \label{alg:htr:commit}
  \State \EndTxCode
  \If{$\tdu.\updates = \emptyset$} \label{alg:htr:noupdates}
    \State{$\outcome_q = \success$}
    \State \Return \label{alg:htr:noupdates2}
  \EndIf
  \If{\Call{certify}{$\tdu.\start, \tdu.\readset$} $= \failure$} \Return \label{alg:htr:localCert}
  \EndIf
  \State{\tobcast $\tdu$}\hfill\textit{// blocking} \label{alg:htr:tob}
  \State{\Call{updateOracleStatistics}{$\tdu$}} \label{alg:htr:updateOracleDU}
\EndProcedure

\Procedure{retry}{} \label{alg:htr:retry}
  \State{\EndTxCode}
\EndProcedure

%
\columnbreak
\Procedure{rollback}{} \label{alg:htr:rollback}
  \State{\EndTxCode}
  \State{$\outcome_q \gets \success$}
\EndProcedure \label{alg:htr:DUFunctionsEnd}

\algrule

\item[\textbf{The main thread of HTR} (executed on all replicas)]
\State{enum $\outcome \gets \nnull$}\hfill\textit{// type: enum \{ success, 
failure \}}
\State{response $\res \gets \nnull$}
\State{request $r \gets \nnull$}
\State{txDescriptor $\tsm \gets \nnull$} \label{alg:htr:txDesc2}

\Upon{\todeliver}{txDescriptor $\tdu$} \label{alg:htr:adeliverTx}
  \If{\Call{certify}{$\tdu.\start, \tdu.\readset$} $ = \success$} 
\label{alg:htr:globalCert}
     \State{\LockStart $\LC \gets \LC + 1$} \label{alg:htr:inc} \label{alg:htr:commitBegin} 
     \State{\AlgIndent $\tdu.\eend \gets \LC$ } \label{alg:htr:end} 
     \State{\AlgIndent $\Log \gets \Log \cup \{\tdu\}$} 
\label{alg:htr:logAppend}
     \State{\AlgIndent apply $\tdu.\updates$ \LockEnd} \label{alg:htr:apply}
\label{alg:htr:commitEnd}
  \If{transaction with $\tdu.\id$ executed locally by thread $q$} 
    \State{$\outcome_q \gets \success$}
  \EndIf
  \EndIf 
\EndUpon

\Upon{\todeliver}{request $r_q$} \label{alg:htr:adeliverTxSM}
  \State{$r \gets r_q$}
  \State{\Call{transaction()}{}}
  \State{\Call{updateOracleStatistics}{$\tsm$}} \label{alg:htr:updateOracleSM}
  \If{request with $r.\id$ handled locally by thread $q$}{}
    \State{$\osm \gets \outcome$}
    \State{$\res_q \gets \res$}
  \EndIf
\EndUpon

\Procedure{transaction}{}
  \State{$\tsm \gets (\text{a deterministic unique}\ \id \text{ based on 
$r.\id$}, 0, 0, \emptyset, \emptyset, \emptyset)$} \label{alg:htr:executionStartSM}
  \State{$\tsm.start \gets LC$} \label{alg:htr:startSM}
  \State{$\res \gets \nnull$}
  \State{$\res \gets$ execute $r.\prog$ with $r.\args$} \label{alg:htr:executionEndSM}
\EndProcedure

\Function{read}{objectId $\oid$, object $\obj$} \label{alg:htr:SMFunctionsBegin}
     \State{\Return \Call{getObject}{$\tsm$, $\oid$}} \label{alg:htr:readSM}
\EndFunction

\Procedure{write}{objectId $\oid$, object $\obj$}
    \State{$\tsm.\updates \gets \{ (\oid',\obj') \in \tsm.\updates : \oid' \neq 
\oid \} \cup \{(\oid,\ \obj)\}$} \label{alg:htr:writeSM}
\EndProcedure

\Procedure{commit}{} \label{alg:htr:commitSM}
   \State \EndTxCode
   \If{$\tsm.\updates \neq \emptyset$} 
    \State{\LockStart $\LC \gets \LC + 1$} \label{alg:htr:incSM} \label{alg:htr:commitBeginSM}
    \State{\AlgIndent $\tsm.\eend \gets \LC$ } \label{alg:htr:endSM} 
        \State{\AlgIndent $\Log \gets \Log \cup \{\tsm\}$} 
    \label{alg:htr:logAppendSM}
        \State{\AlgIndent apply $\tsm.\updates$ \LockEnd} \label{alg:htr:applySM} \label{alg:htr:commitEndSM}
   \EndIf
   \State{$\outcome \gets \success$}
\EndProcedure

\Procedure{retry}{} \label{alg:htr:retrySM}
  \State \EndTxCode
  \State{$\outcome \gets \failure$}
\EndProcedure

\Procedure{rollback}{} \label{alg:htr:rollbackSM}
  \State \EndTxCode
  \State{$\outcome \gets \success$}
\EndProcedure \label{alg:htr:SMFunctionsEnd}

\end{algorithmic}
\end{multicols*}
\end{algorithm*}

}

\subsection{Characteristics} \label{sec:htr:characteristics}

Below we present the advantages of the HTR algorithm compared to 
the exclusive use of the schemes discussed in Section~\ref{sec:context}. 
We also discuss the potential performance benefits that will be evaluated 
experimentally in Section~\ref{sec:htr_evaluation}.

\subsubsection{Expressiveness}

Implementing services using the original SMR replication scheme is 
straightforward since it does not involve any changes to the service code. 
However, the programmer does not have any constructs to express control-flow 
other than the execution of a request in its entirety. In our HTR replication 
scheme, the programmer can use expressive transactional primitives 
\textproc{rollback} and \textproc{retry} to withdraw any changes made by 
transactions and to retry transactions (possibly in a different replication 
mode). In this sense, these constructs are analogous to DUR's, but they are 
also applicable for transactions executed in the pessimistic SM mode.
Upon retry, the SM transaction is not immediately reexecuted on each 
node. Instead, the control-flow returns to the thread which is responsible for 
handling the original request. The oracle is then queried again, to determine 
in which mode the transaction should be reexecuted. Similarly, reexecution of 
DU transactions is also controlled by the oracle.

Constructs such as \textproc{retry} can be used to suspend execution of a 
request until a certain condition is met. Note that in SMR doing so is not 
advisable since it would effectively block the whole system. It is because in 
SMR all requests are executed serially in the order they are 
received. On the contrary, when \textproc{retry} is called from within an SM 
transaction, the HTR algorithm rolls back the transaction and allows it to be 
restarted when the condition is met.

\subsubsection{Irrevocable operations}

In DUR, transactions may be aborted and afterwards restarted due to conflicts 
with other older transactions. Thus, they are forbidden to perform 
\emph{irrevocable operations} whose side effects cannot be rolled back (such as 
local system calls). \emph{Irrevocable (or inevitable) transactions} are 
transactions that contain irrevocable operations. Support for such transactions 
is problematic and 
has been subject of extensive research in the context of non-distributed TM 
(see Section~\ref{sec:related_work:txsemantics}). However, the proposed methods 
and algorithms are not directly transferable to distributed TM systems where 
problems caused by distribution, partial failures, and communication must also 
be considered. Below we explain how the HTR algorithm deals with irrevocable 
transactions.


In the HTR algorithm, irrevocable transactions are executed exclusively in the 
SM mode, thus guaranteeing abort-free execution, which is necessary for 
correctness. It also means that only one irrevocable transaction is executed at 
a time. 
However, our scheme does not prevent DU transactions to be executed in 
parallel--only certification and the subsequent process of applying 
updates of DU transactions (in case of successful certification) must be 
serialized with execution of SM transactions. Since an SM transaction runs on 
every replica, we only consider deterministic irrevocable transactions. 
Non-deterministic transactions would require acquisition of a global 
lock or a token to be executed exclusively on a single replica. 
Alternatively, some partially centralized approaches could be employed, as in
\cite{TA10}. However, they introduce additional communication steps, increase
latency, and may force concurrent transactions to wait a significant amount 
of time to commit. 

We forbid the \textproc{rollback} and \textproc{retry} primitives
in irrevocable transactions (as in \cite{WSAA08} and other TM systems) 
since they may leave the system in an inconsistent 
state.\footnote{Interestingly, Atomic RMI \cite{WS16}, a fully pessimistic 
distributed (but not replicated) TM system, allows nondeterministic irrevocable 
operations to be performed inside transactions.}

\subsubsection{Performance}

As mentioned in Section~\ref{sec:context:smr}, it is not straightforward to 
optimize the original SMR scheme to handle read-only requests in parallel 
with other (read-only or updating) requests.
However, in the HTR algorithm, read-only transactions are executed only by one 
replica, in parallel with any updating transactions--there is no need for 
synchronization among replicas to handle the read-only transactions. 

HTR can benefit from the multiversioning optimization in the same way as 
can the DUR scheme. In HTR extended with this 
optimization read-only transactions are guaranteed abort-free execution thus
boosting HTR's performance for workloads dominated by read-only requests. The 
implementation of HTR, which we use in our tests, implements the 
multiversioning optimization (see Section~\ref{sec:htr_evaluation}).

Unless an updating transaction is irrevocable (thus executed in the SM mode) or
non-deterministic (thus executed in the DU mode), it can be handled by 
HTR in either mode for increased performance. The choice is made by the HTR 
oracle that constantly gathers statistics during system execution and can 
dynamically adapt to the changing workload (which may vary between the 
replicas). In Section~\ref{sec:htr:tuning}, we discuss the tuning of the 
oracle and in Section~\ref{sec:htr:ml} we introduce an oracle, which relies on 
machine learning techniques for dynamic adaptation to changing conditions.

\subsection{Correctness} \label{sec:htr:correctness}

Below we give formal results on the correctness of HTR. The reference safety 
property we aim for is update-real-time opacity which we introduced in 
\cite{KKW16} and used to prove correctness of DUR. Roughly speaking, 
update-real-time opacity 
is satisfied, if for every execution of an algorithm (represented by some 
history $H$) it is possible to construct a sequential 
history $S$ such that: 
\begin{enumerate}
\item $H$ is equivalent to $S$, i.e., $H$ and $S$ contain the same set of 
transactions, all read and write operations return the same values and the 
matching transactions commit with the same outcome, 
\item every transaction in $S$ is legal, i.e. the values of shared objects read 
by the transaction are not produced out of thin air but match the specification 
of the shared objects, and 
\item $S$ respects real-time order for committed updating transactions in $H$, 
i.e., for any two committed updating transactions $T_i$ and $T_j$, if $T_i$ 
ended before $T_j$ started then $T_i$ appears before $T_j$ in $S$.
\end{enumerate}

However, it is impossible to directly prove that HTR satisfies update-real-time 
opacity due to a slight model mismatch, as we now explain. Recall that in HTR, 
every time a request is executed in the SM mode, multiple identical 
transactions are executed across whole system (the transactions operate on the 
same state and produce the same updates). In the formalization of 
update-real-time opacity (which is identical to the formalization of the 
original definition of opacity by Guerraoui and Kapalka), every such 
transaction is treated independently. Therefore, unless such an SM transaction 
did not perform any modifications or rollback on demand, it is impossible to 
construct such a sequential history $S$, in which every transaction is 
legal.\footnote{As a counter example consider an execution of HTR featuring a 
single client request which is executed as an SM transaction on every replica: 
a transaction first reads $0$ from a transactional object $x$ and subsequently 
increments the value of $x$, i.e., writes $1$ to $x$. Even for 2 replicas, it 
is impossible to construct a legal sequential history featuring all the SM 
transactions.} However, we can show that execution of multiple SM transactions 
regarding the same client request is equivalent to an execution of a single 
transaction (on some replica) followed by dissemination of updates to all 
processes, as in case of a DU transaction. Therefore, we propose a mapping 
called \emph{SMreduce}, which allows us to reason about the correctness of HTR. 
Roughly speaking, under the SMreduce mapping of some history of HTR, for any 
group of SM transactions regarding the same request $r$, such that the 
processes that executed the transactions applied the updates produced by the 
transactions, we allow only the first transaction of the group in the 
history to commit; other transactions appear aborted in the transformed 
history. The detailed definition of SMreduce, together with formal proof of 
correctness can be found in Appendix~\ref{sec:app:htr}.

Before we prove that HTR satisfies update-real-time opacity under the SMreduce
mapping, we first show that HTR does not satisfy a slightly stronger property, 
\emph{write-real-time opacity}, and thus also does not guarantee 
\emph{real-time opacity} (which is equivalent to the original definition of 
opacity \cite{GK10}, as shown in \cite{KKW16}).

\begin{restatable}{theorem}{htrno}
\label{thm:htr_no}
Hybrid Transactional Replication does not satisfy write-real-time opacity.
\end{restatable}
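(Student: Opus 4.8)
The plan is to exhibit one concrete run $H$ of Algorithm~\ref{alg:htr}, pass it through the \emph{SMreduce} mapping, and show that $\mathrm{SMreduce}(H)$ is update-real-time opaque but admits no legal sequential witness respecting the real-time order of the transactions that perform writes; since real-time opacity implies write-real-time opacity, the same run also rules out real-time opacity. (Without SMreduce the statement is immediate, as plain HTR is not even final-state opaque by the footnote example, so I read the theorem---like the positive result that follows---under the SMreduce mapping.) The construction exploits the fact that SMreduce rewrites every copy but the first of an SM request into an \emph{aborted} transaction that still carries its original read and write events, and such aborted-but-writing transactions are exactly what separates write-real-time opacity from the weaker update-real-time opacity that HTR does satisfy.

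First I would fix two replicas $p_1, p_2$, a transactional object $x$ initialised to $0$, and a single client request $r$ whose program reads $x$ and, seeing $0$, writes $1$ to $x$ and commits; assume the oracle picks the SM mode for $r$. I would schedule TOB so that $p_1$ delivers $r$ and runs the SM transaction $T_1$ to completion (it reads $0$, writes $1$, commits, advancing $\LC$ from $0$ to $1$, and returns to the client), and only afterwards---because the link to $p_2$ is slow---$p_2$ delivers $r$ and runs the SM transaction $T_2$ (again reading $0$, writing $1$, committing). In $H$ the real-time intervals of $T_1$ and $T_2$ are disjoint with $T_1$ entirely before $T_2$; both read $0$, both issue a write of $1$, both commit, and both replicas apply the produced update, so SMreduce applies to this group and yields a history in which $T_1$ stays committed while $T_2$ becomes aborted, retaining its read returning $0$ and its write of $1$.

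Then I would check the two halves. For the positive half, $S = T_2 \cdot T_1$ is a legal sequential history equivalent to $\mathrm{SMreduce}(H)$---$T_2$ is aborted, so its write is not installed and $x$ is still $0$ when $T_1$ reads it, after which $T_1$ installs $1$---and $S$ trivially preserves the real-time order of committed updating transactions since $T_1$ is the only one, so $\mathrm{SMreduce}(H)$ is update-real-time opaque. For the negative half, in any sequential history $S'$ equivalent to $\mathrm{SMreduce}(H)$ the aborted transaction $T_2$ still returns $0$ on its read of $x$, so it can be legal only if placed before the committed transaction $T_1$ (which writes $x := 1$); but $T_1$ and $T_2$ both performed a write and $T_1$ precedes $T_2$ in real time, so write-real-time opacity would force $T_1$ before $T_2$ in $S'$, a contradiction. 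Hence no witness exists and HTR is not write-real-time opaque. (If one's formulation of write-real-time opacity does not constrain this committed/aborted pair directly, the scenario is enlarged with a read-only transaction $T_0$ run on $p_1$ that reads $x = 1$ strictly between $T_1$'s completion and $T_2$'s start: then $T_0$ precedes $T_2$ in real time while legality still forces $T_2$ before $T_1$ before $T_0$ in any witness, and $T_2$ is a writing transaction, so real-time order is again violated.)

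The main obstacle I anticipate is not the combinatorics but aligning the example with the formal model of Appendix~\ref{sec:app:htr}: I must ensure that (i) the two per-replica executions of $r$ are legitimately two distinct transactions of $H$ with independently schedulable start and end events, so that putting $T_1$ entirely before $T_2$ in real time is admissible; (ii) the slow-link schedule is a genuine execution of Algorithm~\ref{alg:htr}, respecting TOB ordering and liveness, with $T_2$ following exactly the SM \textsc{commit} path; and (iii) SMreduce is invoked under precisely its stated precondition---both replicas applied the updates of the group---so that the non-first copy is indeed rewritten into an aborted transaction that keeps its read and write events. Once those modelling points are settled, the legality and real-time arguments are routine.
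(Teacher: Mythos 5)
Your argument is correct in substance, but it takes a genuinely different route from the paper's. The paper proves Theorem~\ref{thm:htr_no} in two lines by reduction: every t-history of DUR is also a t-history of HTR (a run in which the oracle always answers DU), and DUR is already known not to satisfy write-real-time opacity by \cite{KKW16}, so HTR cannot either. Note in particular that the theorem as stated carries no SMreduce qualifier, and the paper's witness history contains no SM transactions at all, so the same two-line proof survives the SMreduce mapping unchanged (SMreduce is the identity on DU-only histories). You instead reinterpret the statement as holding under SMreduce and build a fresh counterexample from the SM mode alone: one request delivered and executed at different times on two replicas, so that the first copy $T_1$ commits strictly before the second copy $T_2$ starts, and SMreduce demotes $T_2$ to an aborted transaction whose read of $0$ forces it before $T_1$ in any legal witness while the write-real-time order forces it after. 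What your route buys is a self-contained separation result that the paper's proof does not give: it shows that the failure of write-real-time opacity is not merely inherited from the DU mode but arises from the SM mode even in the SMreduce'd setting of Theorem~\ref{thm:htr_o}, pinpointing exactly where update-real-time and write-real-time opacity come apart for this algorithm. What the paper's route buys is brevity and independence from the exact formalization of SMreduce. The one point you should make explicit rather than hedge on is the definition of write-real-time order from \cite{KKW16} (that it constrains a committed updating transaction against a \emph{later, possibly aborted} transaction); your base example stands or falls on that clause, and your fallback scenario with $T_0$ is only needed if the definition instead keys on the later transaction being a writer. Also, for the literal (non-SMreduce) statement you rely only on a parenthetical appeal to the footnote's non-serializable double-commit, which is fine but is not the mechanism the paper uses.
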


\iftoggle{sketches}{
\begin{proof}
Trivially, every t-history of DUR is also a valid t-history of HTR, because 
transactions in DUR are handled exactly in the same way as DU transactions in 
HTR. Since DUR does not satisfy write-real-time opacity \cite{KKW16}, neither 
does HTR.
\end{proof}
}{}

\begin{restatable}{corollary}{htrnoro}
Hybrid Transactional Replication does not satisfy real-time opacity.
\end{restatable}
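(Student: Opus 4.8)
The plan is to obtain the corollary as an immediate consequence of Theorem~\ref{thm:htr_no} together with the relationship between the opacity variants established in \cite{KKW16}. First I would recall the hierarchy of safety properties used there: real-time opacity (which, as noted above, coincides with the original notion of opacity of \cite{GK10}) is strictly stronger than write-real-time opacity, which is in turn strictly stronger than update-real-time opacity, in the sense that every history satisfying the stronger property also satisfies the weaker one. In particular, every history that satisfies real-time opacity also satisfies write-real-time opacity.

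Next I would make explicit what ``an algorithm satisfies property $P$'' means, namely that \emph{every} history it produces satisfies $P$; so ``an algorithm does not satisfy $P$'' means that at least one of its histories violates $P$. The proof of Theorem~\ref{thm:htr_no} exhibits such a history of HTR for $P$ equal to write-real-time opacity (concretely, a history of DUR witnessing the failure of write-real-time opacity for DUR, which is also a valid history of HTR since DU transactions are handled exactly as transactions in DUR). By the implication of the previous paragraph, that same history cannot satisfy real-time opacity either, and therefore HTR does not satisfy real-time opacity.

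I do not expect any genuine obstacle here: once the hierarchy of \cite{KKW16} is invoked, the argument is essentially a one-line contrapositive, and the witnessing history is already available from the proof of Theorem~\ref{thm:htr_no}. The only point I would be careful to state precisely is the direction of the implication between write-real-time opacity and real-time opacity, citing the corresponding result of \cite{KKW16} rather than re-deriving it in place.
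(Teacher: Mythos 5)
Your argument is exactly the paper's: the corollary is obtained directly from Theorem~\ref{thm:htr_no} together with the fact (from \cite{KKW16}) that real-time opacity is strictly stronger than write-real-time opacity, so the witnessing history from the theorem also violates the stronger property. You merely spell out the contrapositive more explicitly, which is fine.
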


\iftoggle{sketches}{
\begin{proof}
The proof follows directly from Theorem~\ref{thm:htr_no} 
and 
definitions of write-real-time opacity and real-time opacity (real-time opacity 
is strictly stronger than write-real-time opacity).
\end{proof}
}{}

\begin{restatable}{theorem}{htro}
\label{thm:htr_o}
Under the SMreduce mapping, Hybrid Transactional Replication satisfies 
update-real-time opacity.
\end{restatable}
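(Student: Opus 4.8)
The plan is to reduce the correctness of HTR to the already-established correctness of DUR. The key observation, foreshadowed in the discussion preceding the theorem, is that under the \emph{SMreduce} mapping every execution of HTR ``looks like'' an execution of DUR: a transaction run in the SM mode, once the group of identical per-replica transactions is collapsed so that only the first committing copy survives, is indistinguishable from a DU transaction that happened to be executed by a single replica, pass (trivially) certification, and have its updates disseminated via TOB to all processes. So the overall strategy has two movements: first, argue that for every history $H$ of HTR, the history $\mathit{SMreduce}(H)$ is (equivalent to) a history producible by DUR; second, invoke the theorem from \cite{KKW16} that DUR satisfies update-real-time opacity, and check that the three defining conditions of update-real-time opacity are preserved by the $\mathit{SMreduce}$ transformation in the right direction.

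Concretely, I would proceed as follows. First I would fix an arbitrary execution of HTR and its history $H$, and form $H' = \mathit{SMreduce}(H)$ per the definition in Appendix~\ref{sec:app:htr}: for each client request $r$ executed in SM mode, keep the first transaction of the group that applied updates as committed and mark the remaining copies as aborted. I would then show that $H'$ is a legal DUR history, by exhibiting a DUR run that produces it: the surviving SM copy of $r$ is simulated by a DUR transaction running on the replica that first delivered $r$, executing $r.\prog$ on the same snapshot (the value of $\LC$ at delivery is the same, because SM execution is serialized with DU commits in the single main thread, exactly as the apply-step of a DU transaction is), then — since no concurrent committed transaction can have intervened between the TOB delivery of $r$ and the application of its updates — passing certification vacuously, incrementing $\LC$, appending to $\Log$, and applying $t_{SM}.\updates$; this is precisely lines~\ref{alg:htr:commitBeginSM}--\ref{alg:htr:commitEndSM}, which mirror lines~\ref{alg:htr:commitBegin}--\ref{alg:htr:commitEnd} (and Algorithm~\ref{alg:dur} lines~\ref{alg:dur:commitBegin}--\ref{alg:dur:commitEnd}). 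The aborted SM copies contribute only aborted transactions, which update-real-time opacity permits to operate on stale-but-consistent data; since their reads came from a consistent state (the common snapshot shared with the surviving copy), this is fine. For transactions run in DU mode, $H$ and $H'$ agree verbatim, and these are handled by code identical to Algorithm~\ref{alg:dur}.

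Next I would verify the three clauses of update-real-time opacity for $H'$ directly, leaning on the DUR proof. Equivalence: $H'$ contains the same transactions as the DUR run just constructed, with matching read/write return values and matching commit/abort outcomes by construction of $\mathit{SMreduce}$. Legality: this is inherited from the DUR result, because $H'$ \emph{is} a DUR history. Real-time order for committed updating transactions: here I must check that the $\mathit{SMreduce}$ relabeling does not disturb the real-time precedence used by DUR, which is witnessed by the $\start$/$\eend$ timestamps read off $\LC$. The surviving SM copy's $\eend$ is the value of $\LC$ right after its \textsc{commit} applies updates, i.e.\ the same logical instant at which a DU transaction committing at that point would record its $\eend$; and since SM execution occupies the single main thread exclusively, no DU-commit and no other SM-commit is interleaved, so the linear order of $\eend$ values across all committed updating transactions is a total order refining real-time order exactly as in DUR. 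Finally I would appeal to Theorem (DUR update-real-time opacity) from \cite{KKW16} applied to $H'$ to conclude that the required sequential history $S$ exists, and note that $S$ also serves as a witness for $H$ under the $\mathit{SMreduce}$ mapping by definition of what ``satisfies update-real-time opacity under $\mathit{SMreduce}$'' means.

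The main obstacle I anticipate is the bookkeeping around the snapshot seen by an SM transaction and the claim that collapsing the group is sound — i.e.\ that all per-replica SM copies of $r$ genuinely read the same values and produce the same updates, so that designating \emph{one} of them as the ``real'' transaction is well defined and consistent across the distributed run. This rests on two facts that need to be stated carefully: (i) SM transactions are deterministic (assumed in Section~\ref{sec:htr:specification}, and necessary), so given the same input state they produce identical readsets, updates, and responses; and (ii) every replica applies the same sequence of state changes — the same interleaving of DU-apply steps and SM-transactions — in the same order, because both are driven by TOB delivery order and both execute in the single main thread without interleaving (lines~\ref{alg:htr:adeliverTx} and \ref{alg:htr:adeliverTxSM}); hence every replica, when it runs its copy of $r$, sees an identical state, so (i) applies. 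A secondary subtlety is handling \textsc{retry} in SM mode: a retried SM transaction does not commit on any replica (it sets $\outcome \gets \failure$ uniformly, lines~\ref{alg:htr:retrySM}), so the group produces no committed transaction and $\mathit{SMreduce}$ simply leaves a family of aborted transactions, which is again permitted; I would remark that determinism guarantees the retry decision is taken uniformly, so there is no partial-commit anomaly. Once these facts are in place, the reduction is routine and the theorem follows.
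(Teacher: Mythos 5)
Your high-level intuition --- that under \emph{SMreduce} an SM group behaves like ``execute once, then disseminate updates'' --- is exactly the motivation the paper gives for introducing the mapping, but the load-bearing step of your plan, namely that $H' = \mathit{SMreduce}(H)$ \emph{is} a t-history of DUR so that the DUR theorem of \cite{KKW16} can be invoked as a black box, does not hold. Two things break it. First, DUR aborts a transaction only when certification fails, i.e., only when its readset intersects the updates of a concurrent committed transaction; the non-surviving SM copies that SMreduce relabels as aborted would pass certification vacuously (an SM transaction never populates its readset, line~\ref{alg:htr:readSM}), so no DUR execution can produce a history containing these ``spuriously'' aborted transactions. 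Second, and more seriously, the updates of those aborted copies \emph{are} applied to the local state of the replica that ran them (lines~\ref{alg:htr:commitBeginSM}--\ref{alg:htr:commitEndSM} execute on every replica), so later transactions on that replica physically read values written by a transaction that is aborted in $H'$ --- something no DUR history exhibits, since aborted DUR transactions never make their writes visible. The history is only legal because the committed twin on another replica has an \emph{identical} transaction descriptor, so every such read can be re-attributed to the committed copy's write; establishing that attribution (determinism of SM execution, identity of descriptors across replicas, identity of the $\eend$ timestamps, and the fact that every replica walks through the same sequence of states) is precisely the non-trivial content of the legality proof. Your reduction does not discharge this obligation; it presupposes it.

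For contrast, the paper does not reduce to DUR at all: it proves a chain of propositions (atomicity and non-interleaving of state changes, determinism of DU certification and of SM execution, all replicas traversing the same state sequence, uniqueness of $\eend$ values, monotonicity of $\start$/$\eend$ with respect to real-time precedence), then \emph{explicitly constructs} the witness $S$ by concatenating committed updating transactions in $\eend$ order and inserting each read-only/aborted transaction immediately after the committed updating transaction $T_l$ with $t_l.\eend = t_k.\start$, and finally verifies the real-time-order and legality clauses directly (legality by contradiction, using the re-attribution of locally applied SM updates to the committed twin). If you want to keep your reduction-flavoured strategy, you would have to prove an intermediate lemma of the form ``$H'$ differs from some DUR history only by the addition of aborted transactions whose reads are taken from consistent committed snapshots and whose writes are duplicated by a committed transaction with the same descriptor,'' and then show update-real-time opacity is preserved under adding such transactions --- at which point you have essentially rebuilt the paper's direct argument.
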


\newcommand{\comp}[1]{\expandafter\bar#1}

\iftoggle{sketches}{
\begin{proof}[Proof sketch]
In order to prove that HTR satisfies update-real-time opacity, we have to show 
that (by Corollary~1 of \cite{KKW16}) for every finite t-history $H$ 
produced by HTR, under SMreduce there exists a t-sequential t-history $S$ 
equivalent to $\comp{H}$ (some completion of $H$), such that $S$ respects the 
update-real-time order of $H$ and every transaction $T_k$ in $S$ is legal in 
$S$. 

The proof is somewhat similar to the proof of correctness of DUR. The crucial 
part of the proof concerns a proper construction of $S$. We do so 
in the following way. Let us start with an empty t-sequential t-history $S'$.
Now we add to $S'$ t-histories $\comp{H}|T_k$ of all committed updating 
transactions $T_k$ from $\comp{H}$. However, we do so according to the order of 
the (unique) numbers associated with each such transaction $T_k$ (with 
transaction descriptor $t_k$). The number associated with $T_k$ is the value 
of $t_k.\eend$. This value is equal to the value of the $\LC$ variable of the 
replica that processes the updates of $T_k$ upon committing $T_k$. We show that 
this number is the same for each replica and that it unambiguously identifies 
the operation execution (thanks to the use of TOB as the sole mechanism used to 
disseminate messages among replicas, the fact that $\LC$ is incremented upon 
commit of every updating transaction and the SMreduce mapping). Then, one 
transaction at a time, we add to $S'$ all operations of all aborted and 
read-only transactions from $\comp{H}$. Note that we include here all the SM 
transactions which are committed in the original history but aborted under the 
SMreduce mapping in $\comp{H}$. For every such transaction $T_k$ (with 
transaction descriptor $t_k$), we insert $\comp{H}|T_k$ in $S'$ immediately 
after operations of a committed updating transaction $T_l$ (with transaction 
descriptor $t_l$), such that $t_k.\start = t_l.\eend$. It means that $T_l$ is 
the last committed updating transaction processed by the replica prior to 
execution of $T_k$ (note that $T_l$ might not exist if $T_k$ is executed on the 
initial state of the replica). Now we consider independently each (continuous) 
sub-t-history of $S'$ which consists only of operations of read-only and 
aborted transactions (i.e., roughly speaking, we consider the periods of time 
in between commits of updating transactions). For each such sub-t-history, when 
necessary, we rearrange the operations of the read-only and aborted 
transactions so that their order respects the order in which these operation 
executions appear in sub-t-history $\comp{H}|p_i$, for every process $p_i$. 
Because we always consider all operations of every transaction together, $S'$ 
is t-sequential. Then $S = S'$.

By construction of $S$, it is easy to show that $S$ respects the 
update-real-time order of $H$. Then we show by a contradiction that there does 
not exist a transaction in $S$ that is not t-legal, hence every transaction in 
$S$ is legal. 
\end{proof}
}{}

\subsection{Tuning the oracle} \label{sec:htr:tuning}

As pointed out in \cite{RCR08}, DTM workloads that are commonly considered 
are usually highly diversified in regard to the execution times and to the 
number of objects accessed by each transaction (this is also reflected in our 
benchmark tests in Section~\ref{sec:htr_evaluation}). However, the execution 
times of the majority of transactions are way under 1~ms. Therefore, the 
mechanisms that add to transaction execution time have to be lightweight or 
otherwise the benefits of having two execution modes will be overshadowed by 
the costs of maintaining an oracle.

In the HTR algorithm, the oracle is defined by only two methods that have to be 
provided by the programmer. Combined with multiple parameters collected by the 
system at runtime, the oracle allows for a flexible solution that can be tuned 
for a particular application. Our experience with HTR-enabled Paxos STM and 
multiple benchmarks shows that there are the two most important factors 
that should be considered when implementing an oracle:

\begin{itemize}
\item Keeping abort rate low. A high abort rate means that many transactions
executed in the DU mode are rolled back (multiple times) before they
finally commit. This undesirable behaviour can be prevented by executing 
some (or all) of them in the SM mode. The SM mode can also be chosen for 
transactions consisting of operations that are known to generate a lot of 
conflicts, such as resizing a hashtable. On the contrary, the DU mode is 
good for transactions that do not cause high contention, so can be executed 
in parallel thus taking advantage of modern multicore hardware.

\item Choosing the SM mode for transactions that are known to generate large
messages when executed optimistically in the DU mode. Large messages increase 
network congestion and put strain on the TOB mechanism, thus decreasing its 
performance. The execution of an SM transaction usually only requires 
broadcasting the name of the method to be invoked; such messages are often 
shorter than 100B.
\end{itemize}

Note also that since SM transactions are guaranteed to commit, they do not 
require certification, which eliminates the certification overhead.
This overhead (in the DU mode) is proportional to the size of transactions' 
readsets and updates sets.

In \cite{KKW13} we evaluated HTR-enabled Paxos STM using manually devised 
oracles that were designed to fit the expected workload. The oracles delivered 
good performance, even though the oracles' policies were very simple: they 
either limited the abort rate, had transaction execution modes predefined for 
each transaction type or simply executed in the SM mode transactions which were 
known \emph{a priori} to cause high contention.

Naturally, the more complex the application, the more difficult designing an 
oracle which works well. Moreover, manually defined oracles have limited
capability to adjust to changing workloads. Therefore we decided to create 
mechanisms that aid the programmer in devising oracles that can adopt to
varying conditions. 

\subsection{Machine-Learning-based oracle} \label{sec:htr:ml}

Before we describe our machine learning (ML) based approach to creating 
oracles, let us first reflect on the constraints of our environment and the 
requirements that we set.

\subsubsection{Requirements and assumptions}

Determining the optimal execution mode for each transaction run (in a certain 
state of the system) can be considered a classification problem. Solving such 
problems is often accomplished by employing offline machine learning techniques 
such as decision trees, nearest neighbours or neural networks \cite{HTRF09}. 
However, it seems that resorting to such (computation-heavy) mechanisms in our 
case is not most advantageous because of the high volatility of the environment 
which we consider. Our system scarcely uses stable storage (whose performance 
is typically the limiting factor in database and distributed storage systems) 
and thus Paxos STM's performance is sensitive even to small changes in the CPU 
load. In turn, the changes could be caused by variance in one or many aspects 
of the workload such as sizes of received requests, shared object access 
patterns, request execution times, number of clients, contention levels, etc. 
Therefore, we opted for reinforcement learning techniques, i.e., approaches 
which learn by observing the \emph{rewards} on the already made decisions. 

Naturally, the primary limitation for the automated oracle is that the 
mechanism it relies on cannot incur a noticeable overhead on transaction 
processing. Otherwise, any gains resulting from choosing an optimal execution 
mode would be overshadowed by the time required for training the oracle or 
querying it. It means that we had to resort to lightweight ML techniques that 
are neither CPU nor memory intensive (see below). Also, the ML mechanism must 
work well in a multithreaded environment. This can be tricky because each query 
to the oracle is followed by a feedback on transaction execution passed to the 
ML mechanism. Note that the statistics gathered on a particular transaction run 
heavily depend on the overall load of the system, therefore calculating the 
reward (used by the ML mechanism to learn) is not straightforward.

Ideally, before a transaction is executed, the oracle should know what objects 
the transaction will access and approximately how long the execution will take. 
This is typically done in, e.g., SQL query optimizers featured in most of the 
database engines. Unfortunately, obtaining such information in our case is 
very difficult. It is because in our system transactions may contain arbitrary 
code and are specified in Java, a rich programming language, which enables 
complex constructs. One could try static code analysis as in \cite{SW12}, but 
this approach tends to be expensive and not that accurate in general case. 
However, it is reasonable to assume that not every request (transaction) 
arriving in the system is completely different from any of the already executed 
ones. Therefore, transactions can be clustered based on some easily obtainable 
information (e.g., content of the arguments passed alongside transaction's 
code), statistics on past executions that aborted due to conflicts or simple 
hints given by the programmer. The latter could range from, e.g., a qualitative 
level of contention generated by the transaction (\emph{low}, \emph{medium}, 
\emph{high}), to the number of objects accessed by the transaction compared to 
other transactions, or to as straightforward as a unique number which 
identifies a given class of transactions (as in our system, see below).

\subsubsection{Multi-armed bandit problem inspired approach}

The ML-based oracle called \emph{HybridML} (or \emph{HybML} in short), which we 
propose, relies on a rough classification provided by the programmer. As 
mentioned above, the classification may involve various elements but we 
investigate the simplest one, in which similar transactions have the same 
number 
associated with them. We say that transactions with the same number form a 
\emph{class}. For example, a class can be formed out of transactions which 
perform money transfer operations between pairs of accounts. Such transactions 
are inherently similar despite moving funds between different pairs of 
accounts. The similarities regard, e.g., shared objects access pattern, CPU 
utilization, broadcast message sizes, etc.\footnote{In case of more complicated 
transactions, which feature loops or multiple if-then statements, there might 
be bigger differences in the mentioned characteristics. Then, however, the 
programmer may easily provide slightly finer classification of the 
transactions.} 

HybML is inspired by and closely resembles the \emph{epsilon-greedy strategy} 
for solving the \emph{multi-armed bandit problem} (see \cite{R52} \cite{LR85} 
for the problem and \cite{KP14} for algorithms). In the multi-armed bandit 
problem there is a number of slot machines which, when played, return a random 
reward from a fixed but unknown probability distribution specific to that 
machine. The goal is to maximize the sum of rewards in a sequence of plays. In 
the epsilon-greedy strategy, in any given play with some small probability 
$\epsilon$ a random slot machine is chosen. In the majority of plays, however, 
the chosen machine is the one that has been performing best in the previous 
rounds. Varying the value of $\epsilon$ enables balancing of exploration and 
exploitation.

\newcommand{\edu}{\epsilon_\mathit{DU}}
\newcommand{\esm}{\epsilon_\mathit{SM}}

Roughly speaking, in HybML we use a slightly modified version of the 
ep\-si\-lon-greedy strategy to solve the two-armed bandit problem for each 
class independently (with DU and SM modes corresponding to the two slot 
machines). Firstly, HybML determines whether to optimize the network or CPU 
usage. In the former case, HybML aims at choosing an execution mode in which 
broadcast messages are smaller. Otherwise, HybML decides on an execution mode 
in which the transaction can execute and commit more quickly. 

The exact way in which HybML works is a bit more complicated. When a new 
transaction is about to start, HybML first checks what was the preferred 
execution mode for the given class of transactions. Then, depending on the 
most prevalent mode, it randomly chooses the execution mode with 
probabilities $\edu$ or $\esm$ (below we explain the reason for managing two 
values of $\epsilon$ instead of just one). Otherwise, HybML tries to optimize 
either network or CPU usage, depending on which is the observed bottleneck 
under a given workload. HybML always first ensures that network is not 
saturated, because saturating a network always results in degradation of 
performance (see \cite{WKK12} and \cite{WKK16}). To this end, HybML compares 
the values of moving averages, which store message sizes for either execution 
mode, and chooses a mode which corresponds to smaller messages. If, on the 
other hand, CPU is the limiting factor, HybML relies on moving medians, which 
store the duration of transaction execution and commit (see also the discussion 
in Section~\ref{sec:htr:ml:challenges} for the reasons on using moving medians 
instead of moving averages in case of optimizing CPU usage). Since unlike SM 
transactions, DU transactions can abort, HybML stores additional moving 
averages and medians to account for aborted DU transactions. This way, by 
knowing abort rate (measured independently for each class and accounting 
separately for conflicts detected before and after the network communication 
phase), HybML can estimate the overall cost of executing and committing a DU 
transaction (in terms of both network traffic and execution time). 

Note that the average cost of a single attempt to execute (and hopefully 
commit) a transaction in the DU mode is smaller compared to the cost of 
executing a transaction in the SM mode. It is because a DU transaction can 
abort due to a conflict and an SM transaction is guaranteed to commit. When a 
DU transaction aborts, no costly state update is performed and sometimes, if 
the conflict is detected before performing the broadcast operation, no 
resources are wasted on network communication. Therefore, in order to guarantee 
fair exploration, the probability with which the DU mode is chosen should be 
higher than the probability with which the SM mode is chosen. This observation 
led us to use $\edu$ and $\esm$ instead of a single value $\epsilon$. Currently 
$\edu = 0.01$ and $\esm = 0.1$, which could be interpreted as follows: due to a 
higher resource cost of choosing the SM mode over the DU mode, the latter is 
chosen 10 times more frequently. As shown in Section~\ref{sec:htr_evaluation}, 
the system works very well with these values, but by using abort rate, these 
values can be easily set to reflect the \emph{true} cost of an execution 
attempt.

There are few substantial differences between the definition of the original 
problem of multi-armed bandit problem and our case. Firstly, in the original 
problem the probability distributions of rewards in slot machines do not change 
and thus the strategy must account for all previous plays. HybML must be able 
to adjust to changing environment (e.g., workload) and thus it relies on moving 
medians. Most importantly, however, we treat choosing an optimal execution mode 
for any class independently, i.e., as a separate instance of the multi-armed 
bandit problem. In reality the decisions made by HybML for different classes of 
transactions are (indirectly) inter-dependent. It is because the reward 
returned after a transaction commits or aborts does not reflect solely the 
accuracy of the decision made by HybML, but it also entails the current load 
of the system. The load of the system naturally depends on all transactions 
running concurrently and thus indirectly on the decisions made by HybML for 
transactions of different classes. Note that if we were to reflect the 
inter-dependency between decisions made for different classes (in the 
form of a context as in the contextual multi-armed bandit problem 
\cite{WMP05}), the scheme would get extremely complicated and in practice it 
would never converge. 

\subsubsection{Implementation details} \label{sec:htr:ml:challenges}

Although the idea behind HybML seems simple, implementing it in a way that
it works reliably was far from easy. It is mainly because of the 
characteristics of workloads we consider in conjunction with quirks of JVM that 
we had to deal with, provided that Paxos STM is written in Java.

The biggest challenge we faced was to accurately measure the duration of 
transaction execution. In particular, we were interested in obtaining faithful
measurements on the time spent by the main thread of HTR on handling SM and DU 
transactions. When network is not saturated, the main thread becomes the 
bottleneck because it serializes execution of SM transactions with 
certification of DU transactions and is also responsible for applying 
transaction updates to the local state.\footnote{Note that parallelising 
operations in this thread does not necessarily result in better performance. It 
is because even for large transactions the cost of transaction certification or 
applying updates is comparable to the time required for completing a memory 
barrier, action which is necessary in order to preserve consistency (in the 
pseudocode, the memory barrier happens prior acquiring the lock and once it is 
released, lines \ref{alg:htr:commitBegin}, \ref{alg:htr:commitEnd}, 
\ref{alg:htr:commitBeginSM}, \ref{alg:htr:commitEndSM}; the core of Paxos STM 
uses no locks but relies on memory barriers triggered by accessing volatile 
variables).} The CPU times we measure (using the \texttt{ThreadMXBean} 
interface) are in orders of microseconds, which means that we can expect a 
large error. The instability of measurements is further amplified by the way 
Java threads are handled by JVM. In JVM, Java threads do not correspond 
directly to the low-level threads of the operating system (OS) and thus 
the same low-level OS thread which, e.g., executes transactions, can be also  
responsible for performing other tasks for the JVM such as garbage 
collecting unused objects every once in a while.\footnote{Our testing 
environment does not allow us to use low-level JNI code for enabling thread 
affinity in Java.} As a result, we often observed measurements that were 
up to $3$ orders of magnitude higher than the typical ones. As we were
unable to obtain consistent averages using relatively small windows (necessary 
to quickly adopt to changing conditions), we resorted to moving medians, 
which are less sensitive to outliers.

\section{Evaluation} \label{sec:htr_evaluation}

In this section, we present the results of the empirical study of the HTR 
scheme. To this end we compare the performance of HTR using HybML with the 
performance of HTR running with the DU or SM oracles, which execute all 
updating requests in either the DU mode or the SM mode. As we explained in 
Section~\ref{sec:htr:ml}, HybML optimizes the usage of network or CPU, 
depending on which is the current bottleneck. Since avoiding network saturation
is relatively simple, because it entails choosing the execution mode which
results in smaller messages being broadcast, we focus on the more challenging
scenario in which the processing power of the CPUs is the limiting factor. 
Under this scenario, we have to consider a much wider set of variables such as
transaction execution and commit times, contention levels, shared object access 
patterns, etc.

\subsection{Software and environment}

We conducted tests using HTR-enabled Paxos STM, our fault-tolerant object based 
DTM system written in Java which we featured in our previous work (see, e.g., 
\cite{WKK12} \cite{KKW13} \cite{WKK16}). Paxos STM relies on a fast 
implementation of TOB based on Paxos \cite{Lam98} and implements optimizations 
such as multiversioning and early conflict detection (see also Sections 
\ref{sec:related_work:dtm} and \ref{sec:context:comp}).

We run Paxos STM in a cluster of 20 nodes connected via 16Gb Ethernet 
over Infiniband. Each node had 28-core Intel E5-2697 v3 2.60GHz processor 64GB 
RAM and was running Scientific Linux CERN 6.7 with Java HotSpot 1.8.0.

\subsection{Benchmarks} \label{sec:htr_evaluation:benchmarks}

In order to test the HTR scheme, we extended the hashtable microbenchmark, 
which we used in \cite{WKK12} and \cite{WKK16}. The benchmark features a 
hashtable of size $h$, storing pairs of key and value 
accessed using the \emph{get}, \emph{put}, and \emph{remove} operations. A run 
of this benchmark consists of a load of requests (transactions) which are 
issued to the hashtable, each consisting of a series of \emph{get} operations 
on a randomly chosen keys and then a series of update operations (either 
\emph{put} or \emph{remove}). Initially, the hashtable is prepopulated with 
$\frac{h}{2}$ random integer values from a defined range, thus giving the 
saturation of 50\%. This saturation level is always preserved: if a randomly 
chosen key points at an empty element, a new value is inserted; otherwise, the 
element is removed. 

In the current implementation of the benchmark, we can adjust several 
parameters for each class of transactions independently and at run-time. The 
parameters include, among others, the number of read and write operations, the 
subrange of the hashmap from which the keys are chosen, access 
pattern (random keys or a continuous range of keys) and the duration of the 
additional sleep operation, which is invoked during transaction execution in 
order to simulate computation heavy workload. By varying these parameters and 
the ratio of concurrently executing transactions of different classes, we can 
generate diverse workloads, which differ in CPU and network usage and are 
characterised by changing contention levels.

We consider three test scenarios: \emph{Simple}, \emph{Complex} and 
\emph{Complex-Live}. We use the first two scenarios to evaluate the throughput 
(measured in requests per second) and scalability of our system. The latter 
scenario is essentially the Complex scenario, whose parameters are changed 
several times throughout the test. We use the Complex-Live scenario to 
demonstrate the ability of HybML to adjust to changing conditions at run-time 
(we show the throughput of HTR in the function of time). For each scenario we 
define from 2 up to 11 classes of transactions, whose parameters are 
summarized in Figure~\ref{tab:htr_evaluation:transactions}). We have chosen the 
parameters so that one can observe the strong and weak aspects of HTR running 
with either the DU or SM oracle. This way we can demonstrate HybML's ability to 
adapt to different conditions. 
In order to utilize the processing power of the system across different 
cluster configurations, we increase the number of requests concurrently 
submitted to the system with the increasing number of replicas.

\begin{figure*}
\def\arraystretch{1.2}
\center
\scalebox{0.75}{
\begin{tabular}{ r r r || r | r r r r r r r r r r r }
\multicolumn{3}{r||}{Scenario} & Parameter & $T_0$ & $T_1$ & $T_2$ & $T_3$ & 
$T_4$ & $T_5$ & $T_6$ & $T_7$ & 
$T_8$ & $T_9$ & $T_{10}$ \\ 
\hline
\multicolumn{3}{r||}{\multirow{4}{*}{Simple}} & probability & 90 & 10 & & & & & 
& & & & \\
& & & reads & 2500 & 300 & & & & & & & & & \\
& & & updates & 0 & 5 & & & & & & & & & \\
& & & range & 600k & 600k & & & & & & & & & \\
\hline
\multicolumn{3}{r||}{\multirow{5}{*}{Complex}} & probability & 90 & 1 & 1 & 1 & 
1 & 1 & 1 & 1 & 1 & 1 & 1 \\
& & & reads & 2500 & 200 & 200 & 200 & 200 & 200 & 200 & 200 & 200 & 200 & 
200 \\
& & & updates & 0 & 5 & 5 & 5 & 5 & 5 & 5 & 5 & 5 & 5 & 5 \\
& & & range & 10.24M & 5.12M & 2.5M & 1.28M & 640k & 320k & 160k & 80k & 40k & 
20k & 10k \\
& & & offset & 0 & \multicolumn{10}{c}{Distinct part of Hashtable for each 
$T_1$--$T_{10}$}\\
\hline
\multirow{6}{*}{Complex-Live} & a & (0-200 s) & & & \multicolumn{10}{c}{As in 
Complex}\\
\cline{2-15}
& \multirow{2}{*}{b} & \multirow{2}{*}{(200-400 s)} & reads & 2500 & 
\multicolumn{10}{c}{400 (2x as in Complex)}\\
& & & updates & 0 &  \multicolumn{10}{c}{10 (2x as in Complex)}\\
\cline{2-15}
& c & (400-600 s)& sleep & 0 &  \multicolumn{10}{c}{Extra 0.1ms sleep in every 
transaction}\\
\cline{2-15}
& d & (600-800 s) & range & 10.24M & \multicolumn{10}{c}{Half the range from 
Complex for each $T_1$--$T_{10}$}\\
\cline{2-15}
& e & (800-1000 s)& & & \multicolumn{10}{c}{As in Complex}\\
\end{tabular}}
\caption{Benchmark parameters for different test scenarios. In the 
$\emph{Complex-Live b-d}$ scenarios we change some parameters compared to the 
\emph{Complex} scenario.}
\label{tab:htr_evaluation:transactions}
\end{figure*}

\subsection{Benchmark results}

In Figures \ref{fig:htr_evaluation:results}, 
\ref{fig:htr_evaluation:results_random} and 
\ref{fig:htr_evaluation:live_results} we present the test results of HTR. Below 
we discuss the test results in detail.

\subsubsection{The Simple Scenario}

In this scenario, for which the test results are given in 
Figure~\ref{fig:htr_evaluation:results}a, there are only two 
classes of transactions ($T_0$ and $T_1$), which operate on a hashmap of size 
$h$=600k. $T_0$ transactions (i.e., transactions, which belong to the $T_0$ 
class) are read-only and execute 2500 read operations in each run. $T_1$ 
transactions perform 300 read and 5 updating operations. The ratio between 
transactions $T_0$ and $T_1$ is 90:10. 

In this scenario the throughput of HTR running with the SM oracle remains 
constant across all cluster configurations. It is because the execution of all 
$T_1$ transactions needs to be serialized in the main thread of HTR, which 
quickly becomes the bottleneck. The throughput of 150k tps (transactions per 
second), achieved already for 4 nodes, indicates the limit on the number of 
transactions that the system can handle in any given moment. Also, the 
transactions executed in the SM mode never abort, thus the abort rate is zero. 

In the case of HTR running with the DU oracle, with the increasing number of 
replicas, the throughput first increases, then, after reaching maximum for 5 
replicas, slowly diminishes. The initial scaling of performance can be 
attributed to increasing processing power that comes with a higher number of 
replicas taking part in the computation. In the 5 node configuration, the peak 
performance of HTR running with the DU oracle is achieved. As in case of the SM 
oracle, the main thread of HTR becomes saturated and cannot process any more 
messages which carry state updates. Naturally, with the increasing number of 
concurrently executed transactions, one can observe the raising number of 
transactions aborted due to conflicts. Therefore, adding more replicas results 
in diminishing performance. The abort rate of almost 40\% in the 20 node 
cluster configuration means that every updating transaction is on average 
executed 7.5 times before it eventually commits. 

The HybML oracle takes advantage of the scaling capabilities of DUR for smaller 
cluster configurations--the performance yielded by HybML is on par with the 
performance of the DU oracle, because HybML always chooses the DU mode for all 
updating transactions. From 10 nodes upwards, the performance of the DU oracle 
drops below the performance of the SM oracle. The 10-12 node configurations are 
problematic for the HybML oracle, because the relative difference in 
performance of DU and SM modes is modest, and thus it is difficult for the 
oracle to make an optimal decision. This is why we can observe that HybML still 
chooses the DU mode for the $T_1$ transactions, instead of SM. Then, however, 
the differences start to increase, and HybML begins to favour the SM mode over 
the DU mode for the $T_1$ transactions (see the third diagram in 
Figure~\ref{fig:htr_evaluation:results}a, which shows the dominant execution 
mode for each class in HybML; different colours signify the relative ratio 
between executions in the DU and the SM modes). Eventually (from the 15 node 
configuration upwards), HybML always chooses the SM mode thus yielding the same 
performance as the SM oracle.

\subsubsection{The Complex Scenario}

In the Complex scenario, for which the evaluation results are 
given in Figure~\ref{fig:htr_evaluation:results}b, there is only one class of 
read-only transactions ($T_0$, the same as in the Simple scenario) and 10 
classes of updating transactions ($T_1$-$T_{10}$). Each updating transaction 
has the same likelihood of being chosen and each performs 200 read and 5 
update operations. However, for each class we assign a disjoint subrange 
of the hashmap. It means that each (updating) transaction from a given class 
can only conflict with transactions, which belong to the same class. Because 
the sizes of subranges are different for every class of transaction, the 
contention levels for each class will greatly differ: they would be lowest 
for transaction $T_1$ (whose range encompasses 5.12M keys) and highest for 
$T_{10}$ (whose range encompasses just 10k keys). The exact values of the 
subranges are given in Figure~\ref{tab:htr_evaluation:transactions}).


As in case of the Simple scenario, in the Complex scenario the DU oracle 
first yields better performance than the SM oracle. The highest throughput of 
about 200k tps is achieved by the DU oracle for the 8 nodes configuration, 
while the performance of the SM oracle levels at about 160k tps. Note that for 
the 3-5 nodes configuration, the SM oracle performance scales. It is because 
in the Complex scenario the updating transactions are shorter than in 
the Simple scenario. Therefore, in order to saturate the main thread of HTR, 
more concurrently submitted requests are needed (the thread becomes saturated 
in the 5 nodes configuration). 

For larger cluster configurations, the performance of the DU oracle degrades 
due to the rising number of conflicts. As a result, the performance of the DU 
oracle drops below the performance of the SM oracle for the 15 nodes 
configuration.


Note that the abort rate levels, which we can observe for the DU oracle, are 
very similar to the ones we saw in the Simple scenario. However, there are 
significant differences between the relative abort rates measured for each 
transaction class independently. For instance, for the 20 node configuration, 
the abort rate is about 4\% for $T_1$ and over 99\% for $T_{10}$ (in the latter 
case a transaction is on average aborted 150 times before it eventually 
commits).

In this scenario, HybML demonstrates its ability to adjust to the workload and 
achieves performance that is up to 40\% higher than the DU oracle's and up to 
75\% higher than the SM oracle's. This impressive improvement in performance 
justifies our ML-based approach. The plot show that HybML maintains a 
relatively low abort rate of about 7-8\% across all cluster configurations. One 
can see that the higher number of concurrently executed transactions, the higher
percentage of transactions is executed by HybML in the SM mode thus keeping 
contention levels low. Naturally, HybML chooses the SM mode first for 
the $T_{10}$ transactions, for which the contention level is the highest. Then, 
gradually, HybML chooses the SM mode also for transactions, which belong to 
classes $T_9$, $T_8$ and also $T_7$. For other transactions the cost of 
execution in the DU mode is still lower than the cost of execution in the SM 
mode, and thus HybML always chooses for these transactions the DU mode, 
regardless of the cluster configuration.

\newcommand{\plotscale}{0.38}

\begin{figure}
\hspace{-0.50cm}
\begin{tabular}{p{1.65in} p{1.6in}}
\small
\hspace{1.85cm} a) Simple & \hspace{1.6cm} b) Complex\\
\includegraphics [scale=\plotscale] 
{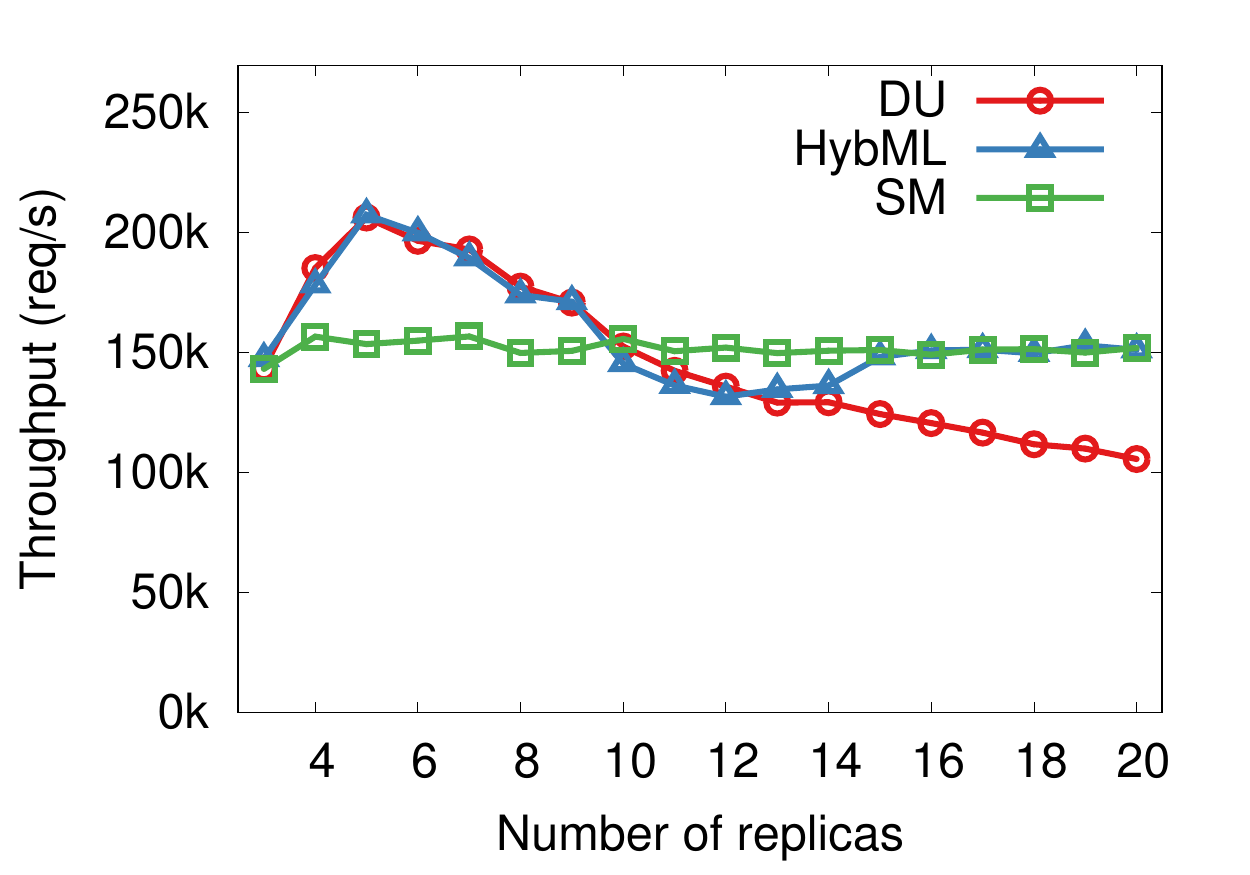} &
\includegraphics [scale=\plotscale] 
{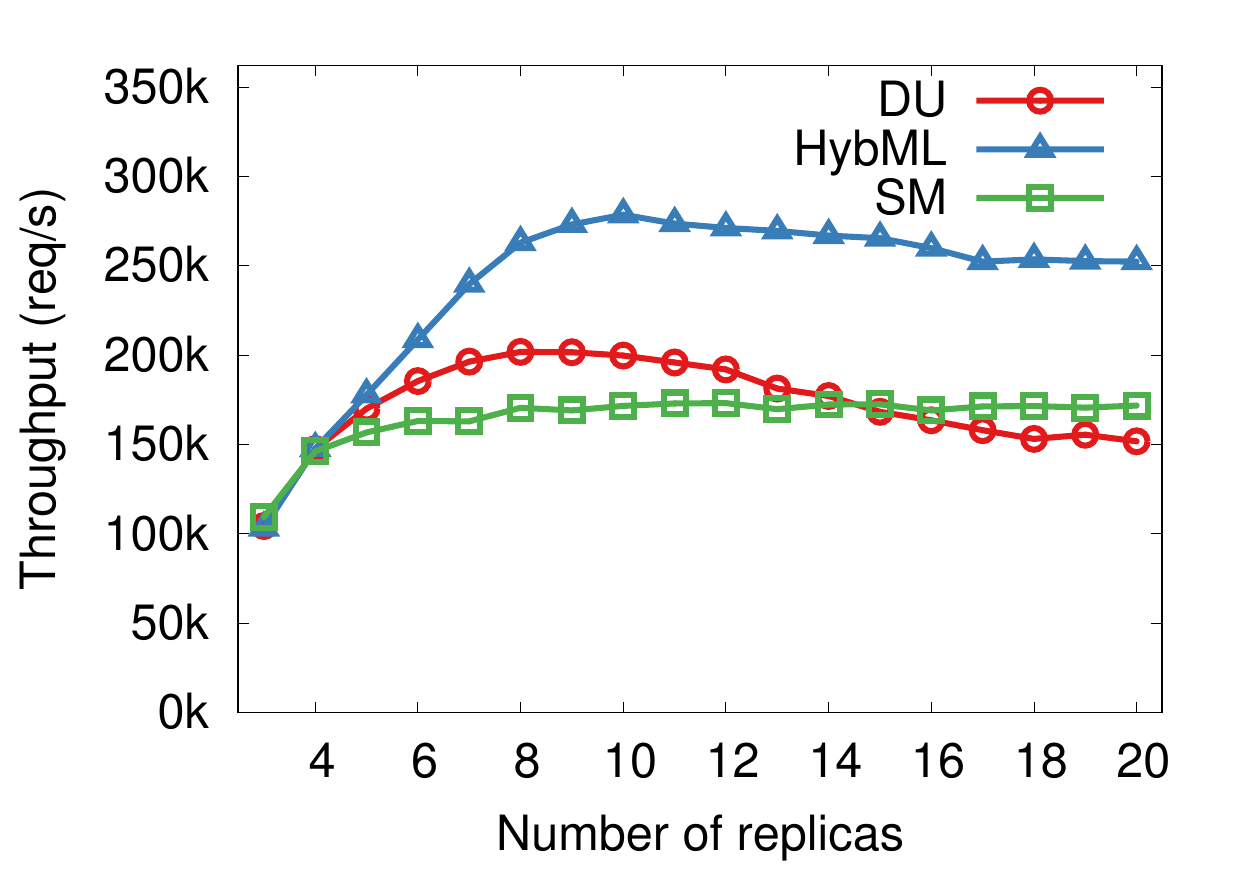} \\ [-6pt]
\hspace*{0.05cm}\includegraphics [scale=\plotscale] 
{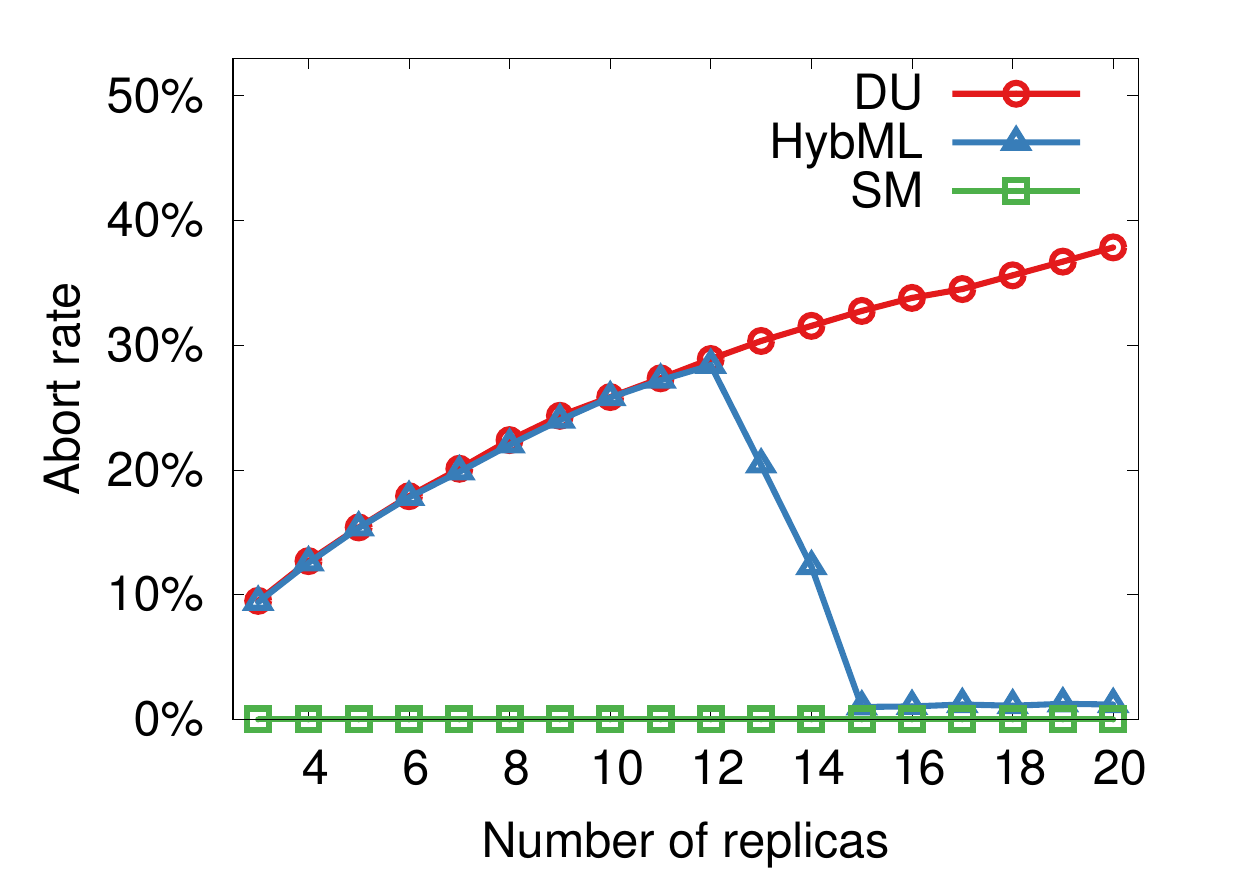} &
\hspace*{0.05cm}\includegraphics [scale=\plotscale] 
{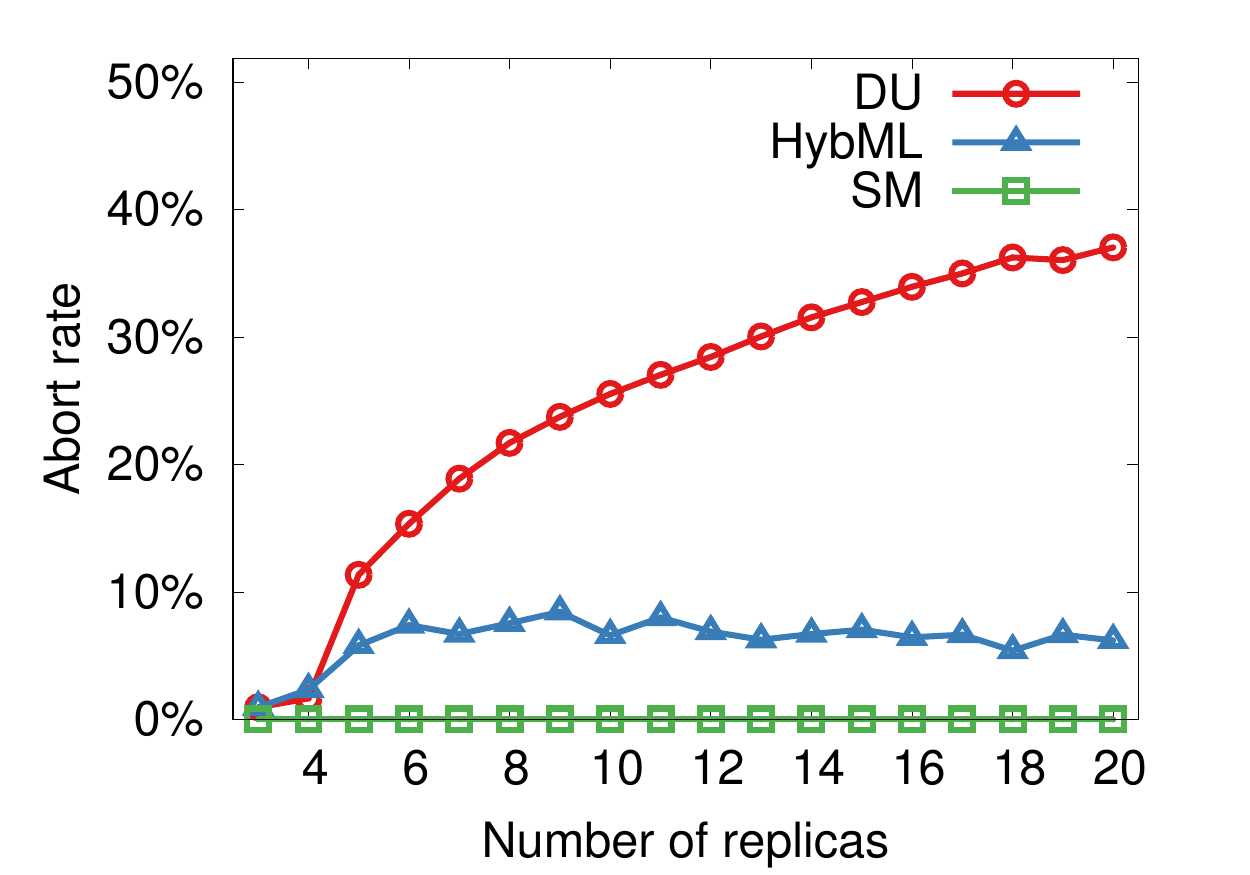} \\ [-6pt]
\vspace{-0.25cm}\hspace*{-0.35cm}\includegraphics[scale=\plotscale, angle=270] 
{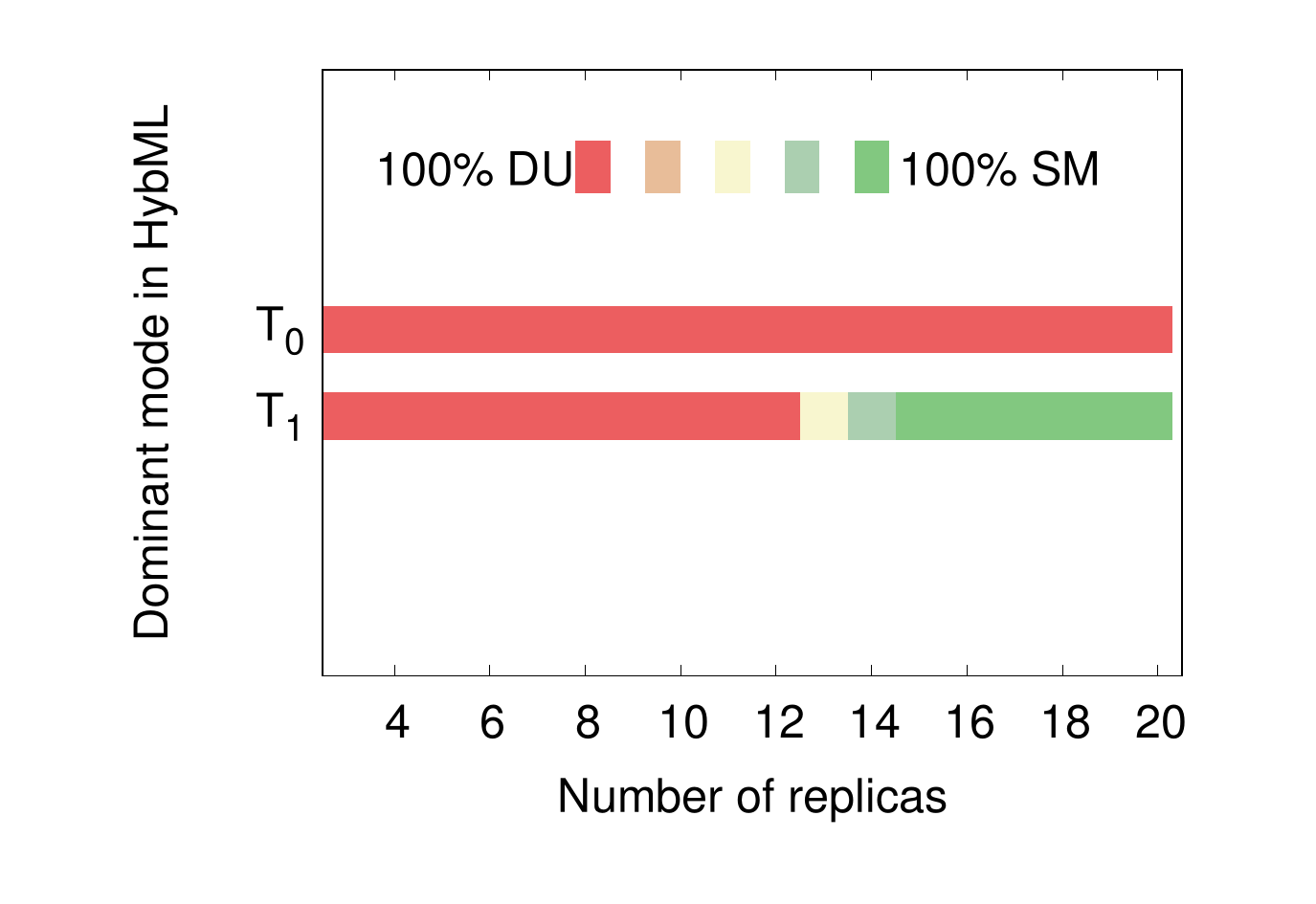} &
\vspace{-0.25cm}\hspace*{-0.28cm}\includegraphics[scale=\plotscale, angle=270] 
{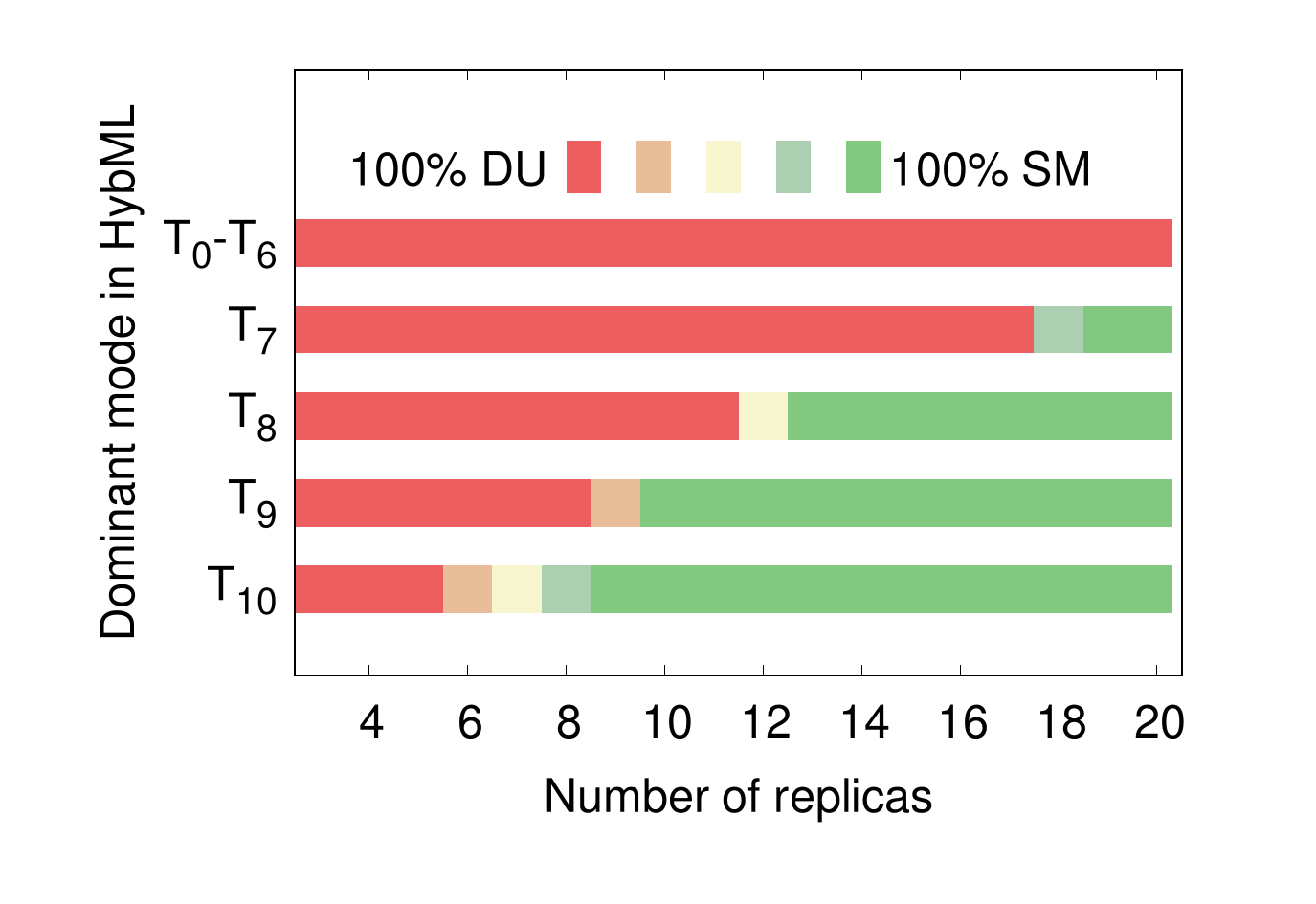}
\end{tabular}
\caption{The performance of HTR with different oracles across various cluster 
configurations.} 
\label{fig:htr_evaluation:results}
\end{figure}

In order to check the consequences of providing HybML with an inaccurate 
classification of transactions, we purposefully marked some percentage
of updating transactions with a random number corresponding to some other class.
The results of this experiment are given in 
Figure~\ref{fig:htr_evaluation:results_random}. Naturally, as a baseline we 
used the performance of HybML from the previous test. Understandably, with 10\% 
or 30\% of incorrectly marked transactions (\emph{HybML 10\% error} and 
\emph{HybML 30\% error} in the Figure), HybML still performs better than either 
the DU or SM oracles but not as fast as previously (the performance for the 
10\% and 30\% mistake scenarios peaked at 240k tps and 200k tps, respectively).
This result indicates that HybML gracefully handles even quite significant 
errors in the classification provided by the programmer.

\begin{figure}
\center
\includegraphics [scale=0.5] 
{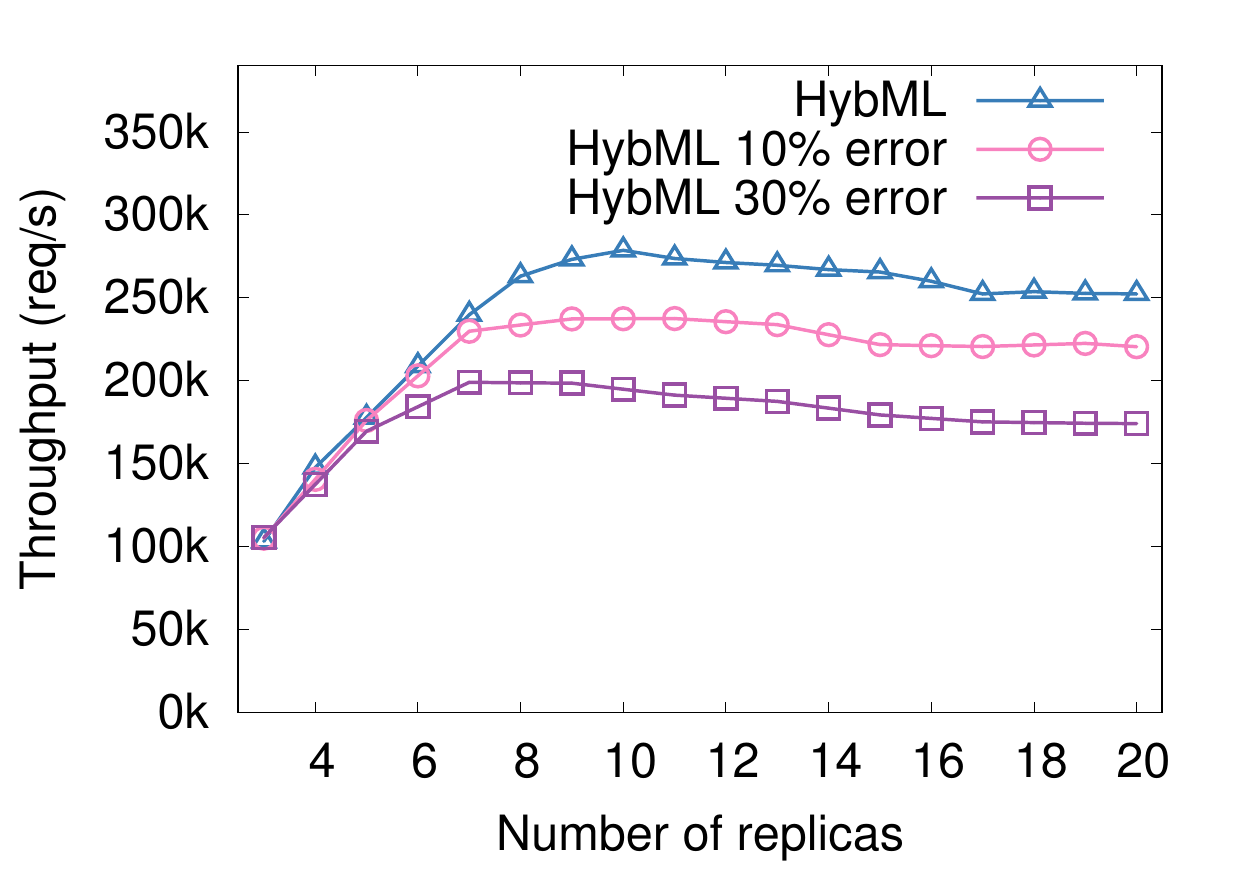}
\caption{The performance of HybML in the Complex scenario, when HybML
is provided with an inaccurate classification of transactions.} 
\label{fig:htr_evaluation:results_random}
\end{figure}

\subsubsection{The Complex-Live scenario}

\begin{figure}
\includegraphics [scale=0.7]{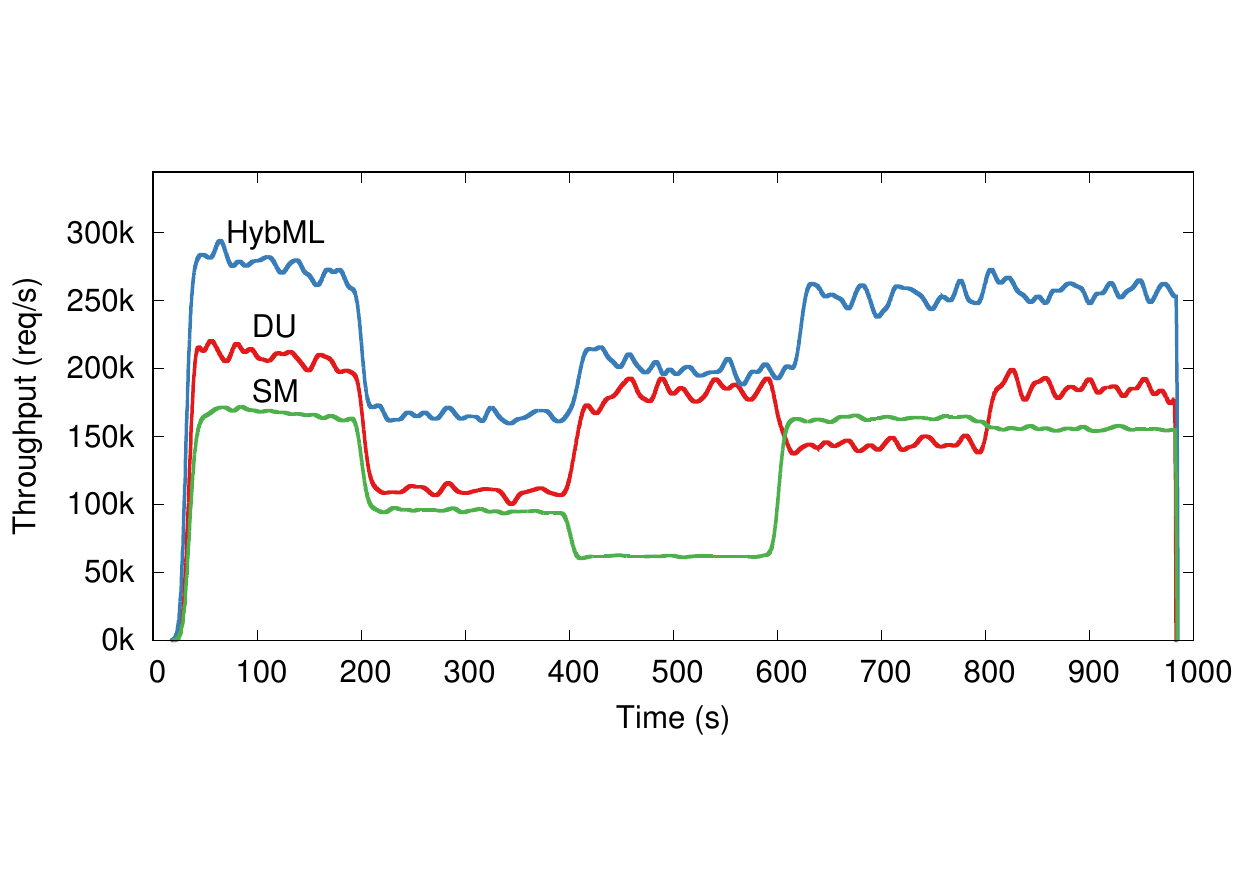}
\caption{The Complex-Live scenario: the performance of HTR in the function of 
time.} 
\label{fig:htr_evaluation:live_results}
\end{figure}

In the Complex-Live scenario we demonstrate the ability of HybML to adapt in 
real-time to changing conditions. To this end we consider the system consisting 
of 9 nodes and a workload identical with the one from the Complex scenario, 
which we then change several times during a 1000 seconds run. A plot showing 
the throughput of HTR with different oracles is given in 
Figure~\ref{fig:htr_evaluation:live_results}. One can see that in all cases the 
HybML oracle gives better performance than either the DU or SM oracles and 
almost instantly reacts to changes of the workload.

During the first 200 seconds the observed performance matches the results from 
the Complex scenario. The throughput fluctuates a little bit because of the 
garbage collector, which periodically removes unused objects from  
memory.\footnote{Executing a transaction in the DU mode results in more 
noticeable overhead due to the garbage collector. Hence, one can observe bigger 
fluctuations in throughput for the DU and HybML oracles compared to the SM 
oracle.} Towards the 200th second the throughput slightly decreases as garbage 
collecting becomes regular. 

In the 200th second we change the parameters of the benchmark, so now each 
updating transaction performs twice the number of read and updating operations 
as before (see \emph{Complex-Live b} scenario in 
Figure~\ref{tab:htr_evaluation:transactions}). The performance of the system 
decreases, because such a change results in longer transaction execution times 
and larger messages. HybML performs 70\% better than the SM oracle and over 
50\% better than the DU oracle. 

Between the 400th and 600th second, the benchmark parameters are the same as in 
the Complex scenario but the execution of each updating transaction is 
prolonged with 0.1 ms sleep thus simulating a computation heavy workload. 
Naturally, such workload is troublesome for the SM oracle, because all updating 
transactions are executed sequentially. The additional 0.1 ms sleep is 
handled well by the system when transactions are executed in the DU mode, 
because transaction execute in parallel. HybML achieves about 15\% better 
performance than the DU oracle, as it allows about 75\% of the $T_{10}$ 
transactions (which are most likely to be aborted due to conflicts) to be 
executed in the SM mode, thus reducing the abort rate and saving on transaction 
reexecutions. 

The change to the benchmark parameters in the 600th second involves reducing 
two times the size of the hashmap subrange for each class. This way the 
updating transactions, which perform the same number of read and updating 
operations as in the Complex scenario, are much more likely to abort due to 
conflicts. This change is reflected by a steep decrease in performance of the 
DU oracle. On the other hand, the performance of the SM oracle is almost the 
same as in the Complex scenario, because execution of all updating transactions 
takes the same amount of work as in the Complex scenario. Smaller subranges 
impact only data locality, which is now better and thus translates into a 
slightly better performance. Stunningly, the performance of HybML is almost the 
same as in the Complex scenario: HybML automatically started to execute a 
higher percentage of updating transactions in the SM mode thus keeping the 
abort rate low. The achieved throughput is over 65\% better than with the DU 
oracle and over 50\% than with the SM oracle. 

The last 200 seconds of the test is performed with the parameters from the 
Complex scenario. HybML quickly relearns the workload and 
starts to perform as in the first 200s of the test.

\subsection{Evaluation Summary}

We tested the HTR scheme with three oracles: DU, SM and HybML. The DU and SM 
oracles execute all updating transactions either in the DU or SM mode. 
Therefore, a system using these oracles resembles an implementation of DUR (see 
Section~\ref{sec:context:dur}) and an implementation of the optimized version 
of SMR, which allows read-only requests to be executed in parallel. Unlike the 
DU and SM oracles, the HybML oracle mixes transaction execution modes to 
achieve better performance and scalability (as evidenced by 
Figure~\ref{fig:htr_evaluation:results}). Our tests show that 
HybML provides performance that is at least as good as with either DU or SM 
(when the difference in performance between the system running with the DU or 
SM oracles is large enough) and often exceeds it by up to 50-70\% across a wide 
range of cluster configurations and types of workload. HybML avoids the 
pitfalls of either SMR and DUR and handles very well the workloads that are 
notoriously problematic for either replication scheme (i.e., computation 
intensive workloads in SMR and workloads characterised by high contention 
levels in DUR). 

We also demonstrated HybML's ability to quickly adjust to changing conditions. 
The automatic adaptation to a new workload type happens smoothly and almost 
instantly, without even temporary degradation of performance, compared to the 
performance under stable conditions (see 
Figure~\ref{fig:htr_evaluation:live_results}). All the benefits of the HTR 
scheme running with the HybML oracle require only minimal input from the 
programmer, which involves providing a rough classification of transactions 
submitted to the system. Slight inaccuracies in the classification do not 
heavily impact the performance achieved by HTR running with HybML.

\section{Conclusions} \label{sec:conc}

In this paper, we presented and evaluated Hybrid Transactional Replication, a 
novel scheme for replication of services. The two transaction execution modes 
that are used in HTR, i.e., deferred update and state machine, complement each 
other. The DU mode allows for parallelism in transaction execution, while the 
SM mode provides abort-free transactions which are useful to deal with 
irrevocable operations and transactions generating high contention. Dynamic 
switching between the modes enables HTR to perform well under a wide range of 
workloads, which is not possible for either of the schemes independently.  

The test results indicate the viability of our ML-based approach to determining
an optimal execution mode for each transaction run. Not only HTR with ML-based 
oracles achieves good performance under various workloads, but it can also 
dynamically adapt to changing conditions. This desirable behaviour of HTR does 
not come at the cost of weaker guarantees for clients: as we formally prove, 
HTR offers strong consistency guarantees akin to those provided by other 
popular transactional replication schemes such as Deferred Update Replication. 
This makes HTR a truly versatile solution.

\ifCLASSOPTIONcompsoc
  \section*{Acknowledgments}
\else
  \section*{Acknowledgment}
\fi

We thank the
Pozna\'n Supercomputing and Networking Center (PSNC)
for providing computing resources.

%






\ifCLASSOPTIONcaptionsoff
  \newpage
\fi



\bibliographystyle{IEEEtran}
\bibliography{bibliography}

\balance

\begin{IEEEbiographynophoto}{Tadeusz Kobus}
is currently pursuing a Ph.D. degree and working as a Research Assistant 
in the Institute of Computing Science, Poznan University of Technology, 
Poland, where he also received B.S. and M.S. degrees in Computer 
Science, in 2009 and 2010 respectively. In the summer of 2014, he was 
an intern at IBM T. J. Watson Research Center. His research interests 
include fault tolerant distributed algorithms, transactional memory, and 
group communication systems.
\end{IEEEbiographynophoto}


\begin{IEEEbiographynophoto}{Maciej Kokoci\'nski}
is currently pursuing a Ph.D. degree and working as a Research Assistant 
in the Institute of Computing Science, Poznan University of Technology, 
Poland, where he also received B.S. and M.S. degrees in Computer 
Science, in 2009 and 2010 respectively. He was a summer intern at Microsoft 
in Redmond. His research interests include theory of distributed systems 
and transactional memory.
\end{IEEEbiographynophoto}

\begin{IEEEbiographynophoto}{Pawe{\l} T. Wojciechowski}
received his Ph.D. degree in Computer Science from the University of Cambridge
in 2000. He was a postdoctoral researcher in the School of Computer and 
Communication Sciences 
at \'Ecole Polytechnique F\'ed\'erale de Lausanne (EPFL), Switzerland, 
from 2001 to 2005. He holds a Habilitation degree from Poznan University 
of Technology, Poland, where he is currently an Assistant Professor 
in the Institute of Computing Science. 
He has led many research projects and coauthored dozens of papers. 
His research interests span topics in concurrency, distributed computing, and
programming languages.
\end{IEEEbiographynophoto}

\balance




\clearpage

\appendix 
\appendices
This supplemental material is an appendix of the paper: \emph{Hybrid 
Replication: State-Machine-based and Deferred-Update Replication Schemes
Combined}, containing the proofs of lemmas and theorems. See the manuscript and 
\cite{KKW16} for the definition of terms and symbols that appear in the 
proofs below.

%
%
%
%
%
%

\section{The correctness of HTR} \label{sec:app:htr}

\newcommand{\resp}{\mathit{resp}}
\newcommand{\rread}{\mathit{read}}
\newcommand{\wwrite}{\mathit{write}}
\newcommand{\ok}{\mathit{ok}}

\newcommand{\DRO}{\mathit{DRO}}
\newcommand{\texec}{\mathit{texec}}
\newcommand{\tryC}{\mathit{tryC}}
\newcommand{\tryA}{\mathit{tryA}}
\newcommand{\true}{\mathit{true}}
\newcommand{\visible}{\mathit{visible}}

\newcommand{\abort}{\mathit{abort}}
\newcommand{\alg}{\mathit{prog}}
\newcommand{\context}{\mathit{context}}
\newcommand{\execute}{\textsc{execute}}
\newcommand{\perform}{\mathit{perform}}
\newcommand{\step}{\mathit{step}}
\newcommand{\steps}{\mathit{steps}}
\newcommand{\vis}{\mathit{vis}}

\newcommand{\dumode}{\mathit{DUmode}}
\newcommand{\smmode}{\mathit{SMmode}}

Below we consider only t-histories of HTR, i.e., histories limited
to events that are related to operations on t-objects ($\texec$ operations) 
and controlling the flow of transactions such as \emph{commit} and \emph{abort} 
events ($\tryC$ and $\tryA$ operations, respectively). In this sense, we treat 
the implementation of HTR as some TM object $M$, and reason about t-histories 
$H|M$.


\htrno*

\begin{proof}
Trivially, every t-history of DUR is also a valid t-history of HTR 
(transactions in DUR are handled exactly the same as DU transactions in HTR. 
Since DUR does not satisfy write-real-time opacity \cite{KKW16}, neither 
does HTR.
\end{proof}


\htrnoro*

\begin{proof}
The proof follows directly from Theorem~\ref{thm:htr_no} and definitions of 
write-real-time opacity and real-time opacity (real-time opacity is strictly 
stronger than write-real-time opacity).
\end{proof}


In the following propositions by state of some process $p_i$ we understand the 
combined state of all t-objects maintained by $p_i$ and the current values of 
$\LC$ and $\Log$ that $p_i$ holds (but excluding statistics held in transaction 
descriptors, which do not count as part of the state).

The following proofs in many places are analogous to the proofs from of 
\cite{KKW16}, where we showed that DUR satisfies update-real-time opacity. 

\begin{proposition} \label{prop:htr:noconcurrency}
Let $k_a \cdot k_b$ and $k'_a \cdot k_b$ be such that:
\begin{enumerate}
\item $k_a$ is a certification of a DU transaction whose transaction descriptor 
has been delivered using TOB and $k_b$ is modifying the system's state 
afterwards,
\item $k'_a$ is an execution of an SM transaction and $k_b$ is modifying the 
system's state afterwards. 
\end{enumerate}
Let $K_1$ be either $k_a \cdot k_b$ or $k'_a \cdot k_b$, and $K_2$ also be 
either $k_a \cdot k_b$ or $k'_a \cdot k_b$ but $K_1$ and $K_2$ pertain to 
different transactions. For any process $p_i$ executing HTR, $K_1$ and $K_2$ 
never interleave, and changes to the state of $p_i$ happen atomically only 
after $k_b$.
\end{proposition}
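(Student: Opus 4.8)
The plan is to derive the claim from two structural facts about Algorithm~\ref{alg:htr}: the main thread of HTR is single-threaded and event-driven, and every change to the state of a process is performed inside a lock block that is also used to guard every other access to that state. First I would pin down which steps are involved. For a DU transaction, $k_a$ is the call \textsc{certify}$(\tdu.\start,\tdu.\readset)$ on line~\ref{alg:htr:globalCert} and $k_b$ is the conditional body on lines~\ref{alg:htr:commitBegin}--\ref{alg:htr:commitEnd}; both occur inside a single invocation of the handler \textsc{TO-Deliver}$(\tdu)$ (line~\ref{alg:htr:adeliverTx}). For an SM transaction, $k'_a$ is the execution of $r.\prog$ in \textsc{transaction} (lines~\ref{alg:htr:executionStartSM}--\ref{alg:htr:executionEndSM}) together with the portion of \textsc{commit} preceding its lock block, and $k_b$ is that lock block (lines~\ref{alg:htr:commitBeginSM}--\ref{alg:htr:commitEndSM}); both are reached from a single invocation of the handler \textsc{TO-Deliver}$(r_q)$ (line~\ref{alg:htr:adeliverTxSM}). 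The essential point is that all four handlers above are dispatched by the \emph{one} main thread of $p_i$, which (a standard property of the TOB layer, and reflected in the single main-thread structure of Algorithm~\ref{alg:htr}) processes delivered messages sequentially, running each handler invocation to completion before starting the next.

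Second, I would establish the non-interleaving claim. Each of the blocks $k_a\cdot k_b$ and $k'_a\cdot k_b$ lies entirely within one handler invocation and pertains to exactly one transaction (no handler invocation processes two transactions). Since $K_1$ and $K_2$ are assumed to pertain to \emph{different} transactions, they lie in \emph{distinct} handler invocations of the main thread; and since such invocations never interleave, neither do $K_1$ and $K_2$. This covers all the cases uniformly (DU/DU, SM/SM, and the mixed ones).

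Third, I would establish atomicity. Within a block $K_j$, the only step that can change the state of $p_i$ (the t-objects together with $\LC$ and $\Log$) is $k_b$: the certification step $k_a$ only reads $\Log$ (the \textsc{certify} function, lines~\ref{alg:htr:certify}--\ref{alg:htr:logFilter}, performs no writes to the state), and the SM-execution step $k'_a$ touches shared data only through the SM \textsc{read}/\textsc{write} operations, where \textsc{read} (line~\ref{alg:htr:readSM}) merely calls \textsc{getObject} and \textsc{write} (line~\ref{alg:htr:writeSM}) modifies only the transaction-private component $\tsm.\updates$, which (like the snapshot $\tsm.\start \gets \LC$ on line~\ref{alg:htr:startSM}) is not part of the process state by our modelling convention on transaction descriptors. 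The step $k_b$ performs $\LC\gets\LC+1$, $t.\eend\gets\LC$, $\Log\gets\Log\cup\{t\}$ and ``apply $t.\updates$'' inside a single lock block. Because the same global lock guards every other access to $\LC$, $\Log$ and the t-objects --- in particular the \textsc{certify}/\textsc{read} code run by DU transactions executing concurrently in other threads (lines~\ref{alg:htr:logFilter}, \ref{alg:htr:check}, \ref{alg:htr:getObjectCall}) --- no thread (including the main thread, which is at no other point inside $k_b$) can observe a partial effect of $k_b$. Hence the state of $p_i$ changes atomically, exactly at $k_b$, and at no other point during $K_j$.

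The argument is bookkeeping rather than deep, so the main obstacle I expect is a careful, exhaustive case check that nothing in $k_a$ or $k'_a$ leaks into the shared state: for the SM case this means scrutinising the SM \textsc{read}, \textsc{write}, \textsc{transaction} and the pre-lock portion of \textsc{commit}, and invoking the convention that transaction descriptors --- including the $\updates$ set being accumulated during execution --- are not part of $p_i$'s state. The only other thing that must be made explicit (since it is implicit in the model) is that the TOB layer delivers messages to the main thread one at a time, which is precisely what makes the ``handler invocations never interleave'' step legitimate rather than an unstated assumption.
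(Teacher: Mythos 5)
Your proposal is correct and follows essentially the same route as the paper's own proof: both arguments rest on the facts that the state of $p_i$ (the t-objects, $\LC$, and $\Log$) changes only in the $k_b$ step, that each block $k_a\cdot k_b$ or $k'_a\cdot k_b$ lies within a single TOB-delivery handler of the main thread, and that these handlers are processed sequentially as non-preemptable events. Your version merely spells out in more detail the case check that $k_a$ and $k'_a$ do not touch the shared state and that the global lock shields $k_b$ from concurrently running DU worker threads, which the paper leaves implicit.
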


\begin{proof}
The state of any $p_i$ changes only if the value of $\LC$, $\Log$ or any 
t-object changes ($k_b$). This can happen only when $p_i$ delivers a 
message through TOB, i.e., either when $p_i$ processes a transaction descriptor 
of a DU transaction ($k_a \cdot k_b$, lines 
\ref{alg:htr:commitBegin}--\ref{alg:htr:commitEnd}) or 
when $p_i$ processes a request which then $p_i$ executes as an SM transaction 
($k'_a \cdot k_b$, lines 
\ref{alg:htr:commitBeginSM}--\ref{alg:htr:commitEndSM}). $p_i$ can process 
only one message at a time (these messages are processed as non-preemptable 
events). Therefore, both $K_1$ and $K_2$ happen atomically and sequentially to 
each other. 
\end{proof}

\begin{proposition} \label{prop:htr:deterministicdu}
Let $p_i$ be a process executing HTR. Let $t$ be a transaction descriptor 
of a DU transaction delivered using TOB by $p_i$ and let $S$ be the state of 
$p_i$ at the moment of delivery. Let $S'$ be the state of $p_i$ after $p_i$ 
certifies and (possibly) updates its state, $\Log_i$ be the log of $p_i$ in 
state $S'$ and $t_i$ be the value of $t$ such that $t_i \in \Log_i$ in case of 
successful certification of the transaction. Then for every process $p_j$ in 
state $S$, if $p_j$ delivers $t$ using TOB, then $p_j$ moves to state $S'$.
\end{proposition}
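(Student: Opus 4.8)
Proposition~\ref{prop:htr:deterministicdu} asserts that the delivery of a DU transaction descriptor $t$ acts deterministically on the state of every replica: two replicas starting in the same state $S$ and delivering the same $t$ move to the same state $S'$. Let me sketch how I would argue this.

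\emph{The plan is to} trace the handler \textsc{TO-Deliver}(txDescriptor $\tdu$) (lines \ref{alg:htr:adeliverTx}--\ref{alg:htr:apply}) step by step and show that every branch taken and every value written is a function of $S$ and $t$ alone. First I would invoke Proposition~\ref{prop:htr:noconcurrency}: it tells us that the whole block is executed atomically and in isolation, so no interleaving with other deliveries or with concurrent DU-transaction executions can perturb the computation; hence it suffices to reason about the handler in isolation starting from $S$. Next I would observe that the first action is the certification test \Call{certify}{$\tdu.\start, \tdu.\readset$} (line \ref{alg:htr:globalCert}). The function \textsc{certify} reads only $\Log$ (line \ref{alg:htr:logFilter}) and the arguments $\tdu.\start$ and $\tdu.\readset$; since $\Log$ is part of the state $S$ (by the definition of ``state'' given just above the proposition) and $\tdu.\start,\tdu.\readset$ are components of $t$, the Boolean outcome of the test is the same for $p_i$ and $p_j$. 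I would note here that $\tdu.\start$ is a fixed integer carried inside the descriptor — it was set once, by the originating replica, at line \ref{alg:htr:start} — so it is genuinely part of $t$ and not something recomputed at delivery.

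Then I would split on that outcome. If certification fails, neither $\LC$, nor $\Log$, nor any t-object is touched, so $S' = S$ on both replicas and we are done. If certification succeeds, the handler executes lines \ref{alg:htr:inc}--\ref{alg:htr:apply}: it increments $\LC$, sets $\tdu.\eend \gets \LC$, appends $\tdu$ to $\Log$, and applies $\tdu.\updates$. Each of these is a deterministic function of the current state and of $t$: the new $\LC$ is $\LC_S + 1$ where $\LC_S$ is the clock value in $S$; the appended descriptor is $t$ with its $\eend$ field set to $\LC_S + 1$ (identical on both replicas, since $\LC_S$ is identical); and ``apply $\tdu.\updates$'' overwrites exactly the objects named in $\tdu.\updates$ with the values carried there — again a component of $t$. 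The only residual side effect in the handler (the assignment to $\outcome_q$ in the ``executed locally'' branch) concerns a thread-local variable that is explicitly \emph{not} part of the state, so it is irrelevant. Collecting these, both replicas reach the same $S'$, $\Log_i$, and, in the successful case, the same $t_i \in \Log_i$.

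\emph{The main obstacle} I anticipate is being careful about which quantities are ``carried in $t$'' versus ``recomputed on delivery.'' The subtle one is $\tdu.\eend$: it is not part of the descriptor broadcast by the sender but is filled in \emph{at} delivery, yet it is still deterministic because $\LC$ at the moment of delivery is part of $S$ and incremented by a fixed amount. The cleanest way to handle this is to state explicitly at the outset that, by hypothesis, $\LC$ has the same value in $S$ on both $p_i$ and $p_j$ (this is exactly what ``the state $S$'' means), so after the single increment it still agrees, and therefore the $\eend$ stamped into the log entry agrees. A secondary point to dispatch is that ``apply $\tdu.\updates$'' is insensitive to the order in which the pairs in $\tdu.\updates$ are processed — this follows because $\tdu.\updates$ contains at most one pair per object identifier (maintained by line \ref{alg:htr:write}), so the applied function is well defined regardless of iteration order. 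With these two clarifications in place the rest is a routine branch-by-branch inspection.
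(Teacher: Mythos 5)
Your proof is correct and follows essentially the same route as the paper's: invoke Proposition~\ref{prop:htr:noconcurrency} for atomicity, observe that the certification outcome depends only on $\Log$ (part of $S$) and the fields of $t$, and then check that the success branch (increment $\LC$, stamp $\eend$, append to $\Log$, apply $\updates$) is a deterministic function of $S$ and $t$. Your additional remarks (that $\eend$ is recomputed at delivery but still determined by $\LC_S$, and that applying $\updates$ is order-insensitive) are correct refinements of the same argument.
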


\begin{proof}
By Proposition~\ref{prop:htr:noconcurrency}, the state of $p_k$ does not change 
throughout certification of a DU transaction (whose transaction descriptor has 
been delivered) and applying the updates produced by the transaction.

Since the certification procedure (line \ref{alg:htr:certify}) is deterministic 
and the values of the $\Log$ variables are equal between processes (except for 
the statistics field which is not used by the procedure), the procedure yields 
the same result. If the outcome is negative, neither process changes its state 
(line \ref{alg:htr:globalCert}). Otherwise, both processes increment $\LC$ to 
the same value (line \ref{alg:htr:inc}), assign $\LC$'s current value to the 
$\eend$ field of the transaction descriptors (line \ref{alg:htr:end}, the value 
of $\LC$ could not change during processing of the transaction descriptor). 
Next, processes append the transaction descriptors to the $\Log$ (line 
\ref{alg:htr:logAppend}) and then apply $t.\updates$ (line 
\ref{alg:htr:apply}). Therefore both processes move to the same state $S'$.
\end{proof}

\begin{proposition} \label{prop:htr:deterministicsm}
Let $p_i$ be a process executing HTR. Let $r$ be a request delivered 
using TOB by $p_i$, let $S$ be the state of $p_i$  at the moment of 
delivery of $r$ and $S'$ be the state of $p_i$ after execution of $r$ as an SM 
transaction $T_k$ with transaction descriptor $t_k$. For every process $p_j$ in 
state $S$, if $p_j$ delivers $r$ using TOB then execution of $r$ as an SM 
transaction $T_l$ (with transaction descriptor $t_l$) by $p_j$ yields state 
$S'$ of $p_j$ and $t_k = t_l$ (except for the statistics field).
\end{proposition}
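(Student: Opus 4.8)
\textbf{Proof plan for Proposition~\ref{prop:htr:deterministicsm}.}
The plan is to mirror the structure of the proof of Proposition~\ref{prop:htr:deterministicdu}, but now tracking the execution of an SM transaction rather than the certification of a delivered DU transaction. The first step is to invoke Proposition~\ref{prop:htr:noconcurrency} to establish that, once $p_i$ delivers $r$ via TOB, the state of $p_i$ is quiescent throughout the whole execution of $r$ as an SM transaction: no interleaving DU certification or DU update application can occur, since all these are handled as non-preemptable events in the same main thread, and hence the state seen by the SM transaction at every read is exactly $S$ (modulo its own writes, which are kept in $t_k.\updates$ and not applied until \textsc{commit}). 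This is what makes the execution well-defined as a pure function of $S$, $r.\prog$ and $r.\args$.

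The core of the argument is then a determinism claim: since $r.\prog$ is required to be deterministic, and since both $p_i$ (executing $T_k$) and $p_j$ (executing $T_l$) start from the \emph{same} state $S$ and run $r.\prog$ with the same arguments $r.\args$, every \textsc{read} returns the same value on both processes (by line~\ref{alg:htr:readSM}, reads consult either $t.\updates$ or the t-object copy in state $S$, which agree), every \textsc{write} records the same pair into $\updates$ (line~\ref{alg:htr:writeSM}), and the control flow (including any \textsc{retry}/\textsc{rollback} and the final \textsc{commit}) unfolds identically. Consequently $t_k.\updates = t_l.\updates$ and $t_k.\id = t_l.\id$, the latter because the identifier is derived deterministically from $r.\id$ (line~\ref{alg:htr:executionStartSM}). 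For the $\start$ field, one uses that $\LC$ cannot change during the execution of the SM transaction (again Proposition~\ref{prop:htr:noconcurrency}), so $t_k.\start$ equals the value of $\LC$ in state $S$, which is the same for $p_i$ and $p_j$; and if the transaction performs updates, $\LC$ is incremented to the same value on both processes in \textsc{commit} (line~\ref{alg:htr:incSM}) and assigned to $\eend$ (line~\ref{alg:htr:endSM}). Thus $t_k = t_l$ except for the statistics field, which by our convention is not part of the state and is irrelevant to the certification procedure.

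Finally, to conclude $S' $ is the same on both processes: if $t_k.\updates = \emptyset$ the state is untouched and $S' = S$ on both; otherwise both processes increment $\LC$, append the (state-equal) transaction descriptor to $\Log$ (line~\ref{alg:htr:logAppendSM}), and apply $t_k.\updates = t_l.\updates$ to their t-objects (line~\ref{alg:htr:applySM}), reaching the same state $S'$. The only subtle point I anticipate is the careful handling of \textsc{retry} inside an SM transaction: on retry the flow returns to the request-handling thread and the oracle is re-queried, so strictly speaking the ``execution of $r$ as an SM transaction'' should be understood as a single successful run that ends in \textsc{commit} (retries that reselect the DU mode fall outside this proposition); I would state this scoping explicitly so that the determinism chain is not broken by a nondeterministic oracle decision on re-execution. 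Everything else is a routine induction on the sequence of steps of $r.\prog$, exactly parallel to the DU case.
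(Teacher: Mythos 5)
Your proposal is correct and follows essentially the same route as the paper's proof: quiescence of the state via Proposition~\ref{prop:htr:noconcurrency}, determinism of $r.\prog$ from the common state $S$ to equate the $\id$, $\start$, $\updates$ and $\eend$ fields, and identical $\Log$/state updates at commit. The only detail the paper makes explicit that you leave implicit is that $\readset$ stays empty on both processes (SM reads never populate it), and your extra remark about \textsc{retry} re-querying the oracle is a reasonable scoping clarification the paper does not discuss.
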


\begin{proof}
By Proposition~\ref{prop:htr:noconcurrency}, the state of $p_k$ does not change 
throughout execution of an SM transaction and applying the updates produced
by the transaction.

The values of $t_k.\id$ and $t_l.\id$ are equal, since processes assign to the 
$\id$ field a value which deterministically depends on $r.\id$ (line 
\ref{alg:htr:executionStartSM}). 

Since both $p_i$ and $p_j$ start execution of $r$ from the same state, the 
current values of their $\LC$ variables are equal. Hence, $t_k.\start = 
t_l.\start$ (line \ref{alg:htr:startSM}).

During execution of an SM transaction nothing is ever added to $\readset$ 
(line \ref{alg:htr:readSM}). Therefore $t_k.\readset = t_l.\readset = 
\emptyset$.

Since HTR assumes that only a request with deterministic $\prog$ can be 
executed as an SM transaction, $r.\prog$ must be deterministic. Both processes 
execute $r.\prog$ with the same $r.\args$ (line \ref{alg:htr:executionEndSM}) 
and operate on the same state $S$ which does not change throughout the 
execution of $r.\prog$. Moreover, all updates produced by the transactions are 
stored in the $\updates$ sets (line \ref{alg:htr:writeSM}). Therefore 
$t_l.\updates = t_k.\updates$.

Also $t_k.\eend = t_l.\eend$. If $T_k$ and $T_l$ are read-only ($t_k.\updates = 
t_l.\updates = \emptyset$), the initial values of $t_k.\eend$ and $t_l.\eend$ 
do not change. Otherwise, both processes increment $\LC$ and assign its current 
value to the $\eend$ fields (line \ref{alg:htr:incSM}, the value of $\LC$ could 
not change during the execution of SM transactions). 

Because $t_k.\id = t_l.\id$, $t_k.\start = t_l.\start$,  $t_k.\readset = 
t_l.\readset$, $t_k.\updates = t_l.\updates$, and $t_k.\eend = t_l.\eend$, we 
gather that $t_k = t_l$ (except for the statistics field). If both 
transactions are updating, $p_i$ adds $t_k$ to $p_i$'s $\Log$ and $p_j$ adds 
$t_l$ to $p_j$'s $\Log$. Then, both processes apply all updates from the 
respective transaction descriptors. Thus both processes move to the same state, 
i.e., $S'$.
\end{proof}

\begin{proposition} \label{prop:htr:samesequence}
Let $S(i) = ( S^i_0, S^i_1, ... )$ be a sequence of states of a process $p_i$
running HTR, where $S^i_0$ is the initial state (comprising of the initial 
state of t-objects, $\LC = 0$ and $\Log = \emptyset$) and $S^i_k$ is the state 
after the $k$-th message was delivered using TOB and processed by $p_i$. For 
every pair of processes $p_i$ and $p_j$ either $S(i)$ is a prefix of $S(j)$ or 
$S(j)$ is a prefix of $S(i)$.
\end{proposition}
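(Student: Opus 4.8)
The plan is to proceed by induction on the length of the common prefix of the two state sequences, using the total-order and agreement properties of TOB as the backbone and Propositions~\ref{prop:htr:deterministicdu} and~\ref{prop:htr:deterministicsm} for the inductive step.

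First I would observe that, since TOB is the only mechanism used to disseminate messages among replicas and it guarantees that all processes deliver messages in the same total order, the $k$-th message delivered by $p_i$ (whenever it exists) equals the $k$-th message delivered by $p_j$; call it $m_k$. By Proposition~\ref{prop:htr:noconcurrency}, a process handles delivered messages one at a time as non-preemptable events, and it changes its state only upon such handling; hence the $k$-th state of $p_i$ in $S(i)$ is precisely the state obtained after $p_i$ processes $m_1, m_2, \dots, m_k$. It therefore suffices to show that if $p_i$ and $p_j$ both process the common prefix $m_1, \dots, m_k$, then $S^i_k = S^j_k$. The proposition then follows at once: letting $\ell = \min(|S(i)|, |S(j)|) - 1$, we get $S^i_t = S^j_t$ for every $t \le \ell$, so whichever of $S(i)$, $S(j)$ is shorter is a prefix of the other.

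For the base case, $S^i_0 = S^j_0$ since both are the fixed initial state (initial t-objects, $\LC = 0$, $\Log = \emptyset$); recall that the statistics held in transaction descriptors are explicitly excluded from the notion of state. For the inductive step, suppose $S^i_{k-1} = S^j_{k-1} =: S$. The message $m_k$ is either the transaction descriptor of a DU transaction or a client request to be executed in the SM mode. In the first case, applying Proposition~\ref{prop:htr:deterministicdu} to both $p_i$ and $p_j$ in state $S$ yields that both move to the same state $S'$ (the certification outcome, and if positive the increment of $\LC$, the log entry appended, and the updates applied, are all determined by $S$ and $m_k$). In the second case, Proposition~\ref{prop:htr:deterministicsm} gives the same conclusion, additionally using that only requests with deterministic programs are executed in the SM mode. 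Hence $S^i_k = S^j_k$, closing the induction.

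The only delicate point — and the one I would be most careful to state cleanly — is the bookkeeping that identifies ``the $k$-th delivered message'' with ``the $k$-th state-changing step,'' which is exactly what Proposition~\ref{prop:htr:noconcurrency} provides: no interleaving with concurrently executing DU transactions can perturb the step, and the state is modified atomically and only after the message is processed. Everything else is a routine unwinding of the two determinism propositions, so I expect no real obstacle beyond phrasing this correspondence precisely.
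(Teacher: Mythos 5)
Your proof is correct and takes essentially the same approach as the paper's: the paper phrases it as a proof by contradiction on the least index $k$ at which $S(i)$ and $S(j)$ differ, which is just the contrapositive packaging of your induction, and it uses exactly the same ingredients (agreement on the initial state, TOB total order to identify the $k$-th message, and Propositions~\ref{prop:htr:deterministicdu} and~\ref{prop:htr:deterministicsm} for the two message types).
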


\begin{proof}
We prove the proposition by a contradiction. Let us assume that $S(i)$ and 
$S(j)$ differ on position $k$ (and $k$ is the lowest number for which 
$S^i_k \neq S^j_k$), thus neither is a prefix of another.

If $k = 0$, then the initial state of $p_i$ is different from the initial state
of $p_j$. Since all processes start with the same values of $\LC$, $\Log$ 
(lines \ref{alg:htr:lc}--\ref{alg:htr:log}) and maintain the same t-objects 
with the same initial values, that is a contradiction. Therefore $k > 0$, and 
the difference between $S^i_k$ and $S^j_k$ must stem from some later change to 
the state.


Since $S^i_{k-1} = S^j_{k-1}$, the receipt of the $k$-th message $m_k$ (which 
is equal for both processes thanks to the use of TOB) and processing it 
must have resulted in a different change to $\LC$, $\Log$ or values of some (or 
all) t-objects at both processes. We have two cases to consider:

\begin{enumerate}
\item Message $m_k$ is a transaction descriptor of a DU transaction (line 
\ref{alg:htr:adeliverTx}). Both processes are in the same state $S_{k-1}$ and 
process the same transaction descriptor. Therefore, by 
Proposition~\ref{prop:htr:deterministicdu}, both processes move to the same 
state $S_k$, a contradiction.

\item Message $m_k$ is a request to be executed as an SM transaction (line 
\ref{alg:htr:adeliverTxSM}). Both processes are in the same state $S_{k-1}$ and 
execute the same request as SM transactions. Therefore, by 
Proposition~\ref{prop:htr:deterministicsm}, both processes move to the same 
state $S_k$, a contradiction.
\end{enumerate}

Since both cases yield a contradiction, the assumption is false. Therefore 
either $S(i)$ is a prefix of $S(j)$, or $S(j)$ is a prefix of $S(i)$.

\end{proof}

%

We know that all communication between processes in HTR happens through TOB. It 
means that all processes deliver all messages in the same order. If the message 
is a request forwarded by some process to be executed as an SM transaction, 
then every process delivers this request while being in the same state (by 
Proposition~\ref{prop:htr:samesequence}). Then, all processes execute the 
request as different SM transactions but end up with transaction descriptors of 
the same exact value (except for the statistics field, by 
Proposition~\ref{prop:htr:deterministicsm}). Therefore, processes need not to
disseminate the transaction descriptors after they complete the transaction 
execution (as in case of a DU transaction). Instead, processes may promptly 
apply the updates from the transaction descriptors to their state. It all means 
that executing a request as multiple SM transactions across the whole system is 
equivalent to execution of the request only once and then distributing the 
resulting updates to all processes. Also, unless a request must be executed in 
the SM mode (because it performs some irrevocable operations), client has no 
knowledge which execution mode was chosen to execute his request.

The way HTR handles SM transactions means that HTR does not exactly fit the 
model of (update-real-time) opacity which requires that updates produced by 
every committed transaction must be accounted for. Therefore, unless 
the SM transactions resulting from execution of the same request did not 
perform any modifications or rolled back on demand, it is impossible to 
construct a t-sequential t-history $S$ in which every transaction is t-legal. 
However, since we proved that execution of multiple SM transactions regarding 
the same request is equivalent to execution of a single one, we can propose the 
following mapping of t-histories, which we call \emph{SMreduce}. Roughly 
speaking, under the \emph{SMreduce} mapping of some t-history of HTR, for any 
group of SM transactions regarding the same request $r$, such that the 
processes that executed the transactions applied the updates produced by the 
transactions, we allow only the first transaction of the group to commit in the 
t-history; other appear aborted in the transformed t-history.

Now let us give a formal definition of the \emph{SMreduce} mapping. Let $H$ 
be a t-history of HTR and let $\smmode$ be a predicate such that for any 
transaction $T_k$ in $H$, $\smmode(T_k)$ is true if $T_k$ was executed as a SM 
transaction in $H$. Otherwise $\smmode(T_k)$ is false. Then, let $H' = 
\mathit{SMreduce}(H)$ be a t-history constructed by changing $H$ in the 
following way. For any event $e$ in $H$ such that:
\begin{itemize}
\item $e = \resp_i(C_k)$ is a response event of an operation execution 
$M.\tryC(T_k) \rightarrow_i C_k$ for some transaction $T_k$ and process $p_i$,
and
\item $\smmode(T_k)$ is true (and $r$ is the request whose execution resulted 
in $T_k$), and
\item $T_k$ is not the first completed transaction in $H$ which
resulted from execution of $r$ in the SM mode,
\end{itemize}
replace $e$ in $H'$ with $e' = \resp_i(A_k)$. We say that $H'$ is an SMreduced 
t-history of HTR.

\begin{proposition} \label{prop:htr:samedesc}
Let $H$ by an SMreduced t-history of HTR. Let $T_k$ be an updating committed 
transaction in $H$ such that $t_k$ is the transaction descriptor of $T_k$. 
Then, any process replicates $t_k$ (excluding the statistics field) in its 
$\Log$ as $\LC$ on this process reaches $t_k.\eend$ (both actions happen 
atomically, i.e., in a lock statement).
\end{proposition}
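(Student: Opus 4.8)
The plan is to argue by a case analysis on how the committed updating transaction $T_k$ came to be committed in the SMreduced t-history $H$: either $T_k$ is a DU transaction, or it is an SM transaction (necessarily the \emph{first} completed transaction of its group, by the definition of SMreduce). In both cases I would first record the common structural fact that $\LC$ is incremented by one, inside a single lock statement, exactly when an updating transaction's updates are applied --- lines \ref{alg:htr:inc}--\ref{alg:htr:apply} when a DU descriptor is delivered, and lines \ref{alg:htr:incSM}--\ref{alg:htr:applySM} for an SM transaction with nonempty $\updates$ --- and that in the very same lock statement the field $\eend$ of the descriptor is set to the new value of $\LC$ and the descriptor is appended to $\Log$. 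Hence it suffices to show that at every process, at the moment its $\LC$ reaches the value $t_k.\eend$, the descriptor it appends to $\Log$ is exactly $t_k$ up to the statistics field, and that this moment is uniquely determined.

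For the DU case I would invoke Proposition~\ref{prop:htr:samesequence} to conclude that every process delivers the transaction descriptor of $T_k$ while in the same state in which it was delivered at the replica that executed $T_k$ locally; Proposition~\ref{prop:htr:deterministicdu} then gives that every process certifies it with the same outcome (which is successful, since $T_k$ is committed in $H$), assigns the same value to $\eend$, and appends the same descriptor $t_k$ (up to statistics) to $\Log$. For the SM case I would note that, although under SMreduce only the first completed SM transaction of the group is labelled committed, in the underlying execution \emph{every} process executes the originating request $r$ as an SM transaction; Proposition~\ref{prop:htr:samesequence} again gives that each does so from the same state, and Proposition~\ref{prop:htr:deterministicsm} gives that the descriptor produced equals $t_k$ up to statistics, in particular with the same (nonempty) $\updates$ and the same $\eend$, so each process runs lines \ref{alg:htr:incSM}--\ref{alg:htr:applySM} and appends $t_k$ to its $\Log$. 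In either case the atomicity of the $\LC$-increment and the $\Log$-append is immediate because both sit inside the same lock statement (and, more fundamentally, because these are non-preemptable message-handling steps, cf.\ Proposition~\ref{prop:htr:noconcurrency}).

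The step I expect to require the most care is justifying that the phrase ``as $\LC$ on this process reaches $t_k.\eend$'' pins down a single, well-defined event at every process. For this I would show that $\LC$ traverses the same strictly increasing sequence of values at every process --- a consequence of Proposition~\ref{prop:htr:samesequence} together with Propositions~\ref{prop:htr:deterministicdu} and \ref{prop:htr:deterministicsm}, which make each TOB delivery advance $\LC$ deterministically and identically across processes --- and that the increment producing the value $t_k.\eend$ is attributable to $T_k$ (in the DU case, to the delivery of $t_k$; in the SM case, to the execution of $r$). The delicate bookkeeping is around SMreduce: the transactions that appear aborted in $H$ are precisely the redundant SM re-executions of requests already accounted for, and these \emph{still} perform the $\Log$-append and the $\LC$-increment in the real execution, which is exactly why the descriptor $t_k$ (and not merely some value equal to it) ends up replicated in every process's $\Log$ at logical time $t_k.\eend$. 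Once this is settled, the proposition follows by combining the two cases.
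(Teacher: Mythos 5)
Your proposal is correct and follows essentially the same route as the paper's proof: a case split on whether $T_k$ is a DU or an SM transaction, with Proposition~\ref{prop:htr:samesequence} pinning down the common delivery state, Propositions~\ref{prop:htr:deterministicdu} and \ref{prop:htr:deterministicsm} giving identical descriptors and $\eend$ values across processes, and the lock statement supplying the required atomicity of the $\LC$-increment, $\Log$-append, and update application. Your extra remarks on the well-definedness of the moment $\LC$ reaches $t_k.\eend$ and on the aborted SM re-executions still performing the append in the underlying execution are exactly the observations the paper makes (the latter in its closing note on SMreduce), just spelled out more explicitly.
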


\begin{proof}
From Proposition~\ref{prop:htr:samesequence} we know that all processes move 
through the same sequence of states as a result of delivering messages using 
TOB. Let $S$ be the state of any (correct) process immediately before 
delivering and processing a message $m$ such that processing of $m$ results in 
applying updates produced by $T_k$ to the system's state (we know that $T_k$ is 
an updating committed transaction, thus $t_k.\updates \neq \emptyset$). We have 
two cases to consider:
\begin{enumerate}
\item $T_k$ is a DU transaction. Then, by 
Proposition~\ref{prop:htr:deterministicdu}, upon delivery of $m$ any process 
$p_i$ updates its state in the same way and the new state includes a 
transaction descriptor $t'_k = t_k$ such that $t'_k$ is in $\Log$ of $p_i$. The 
value of $t'_k.\eend$ is equal to the current value of $\LC$ on $p_i$ because 
in the same lock statement $\LC$ is first incremented and then its current 
value is assigned to $t'_i.\eend$ (lines 
\ref{alg:htr:commitBegin}--\ref{alg:htr:commitEnd}).
\item $T_k$ is an SM transaction. Then, upon delivery of $m$ any process 
$p_i$ executes the request received in $m$ as an SM transaction $T'_k$ ($T'_k$ 
may or may not be equal $T_k$) with transaction descriptor $t'_k$. After $T'_k$ 
finishes its execution, inside the same lock statement $p_i$ increments the 
value of its $\LC$, appends $t'_k$ to its $\Log$ and applies updates produced 
by $T'_k$ (lines \ref{alg:htr:commitBeginSM}--\ref{alg:htr:commitEndSM}). By 
Proposition~\ref{prop:htr:deterministicsm}, $t'_k = t_k$ (except 
for the statistics field). Note that by definition of SMreduce, every 
transaction $T'_k \neq T_k$ such that $T'_k$ is executed as a result of receipt 
of $m$, $T'_k$ is aborted (which means that updates of a committed SM 
transaction are in fact applied only once by every process).
\end{enumerate}
This way for both cases we gather that $t_k$ is replicated in the $\Log$ on any 
process as $\LC$ on this process reaches $t_k.\eend$.
\end{proof}

\begin{proposition} \label{prop:htr:prec}
Let $H$ be an SMreduced t-history of HTR. For any two updating committed 
transactions $T_i, T_j \in H$ and their transaction descriptors $t_i$ and 
$t_j$, if $T_i \prec^r_H T_j$ then $t_i.\eend < t_j.\eend$.
\end{proposition}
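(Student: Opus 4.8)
\emph{Proof proposal.} The plan is to reduce the claim to a statement about the global order in which committed updating transactions are applied, and then to exploit an asymmetry in how the two modes use TOB: a DU transaction's descriptor is broadcast only \emph{after} the transaction has begun running, while an SM transaction's request is the very message whose processing runs the transaction.

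First I would fix notation. By Proposition~\ref{prop:htr:samesequence} every process passes through the same sequence of states $S_0, S_1, \dots$, where $S_k$ is the state reached after the $k$-th TOB-delivered message is processed, and $\LC$ is non-decreasing along this sequence, increasing by exactly $1$ whenever the processed message commits an updating transaction (a DU descriptor whose certification succeeds, or a request executed in the SM mode with a non-empty $\updates$ set; note \textsc{commit} stops the transaction's code, so each message contributes at most one increment). By Proposition~\ref{prop:htr:samedesc}, with each committed updating transaction $T_k$ we may associate the index $a_k$ of the message whose processing commits $T_k$, and then $t_k.\eend = \LC(S_{a_k})$. Since $\LC$ strictly increases at each such index and $T_i \neq T_j$ forces $a_i \neq a_j$, we get $t_i.\eend < t_j.\eend$ iff $a_i < a_j$; so it suffices to rule out $a_j < a_i$.

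Assume $a_j < a_i$. Let $\tau_0$ be the time of $T_i$'s last event (its commit response) and $\tau_2$ the time of $T_j$'s first operation; by $T_i \prec^r_H T_j$, $\tau_0 < \tau_2$. Let $p$ be the process running the (completing) instance of $T_i$. The commit response of $T_i$ is returned only once $p$ has appended $t_i$ to its $\Log$ and applied $t_i.\updates$ (for DU this happens in the TO-Deliver handler that also sets $\outcome_q$; for SM it happens inside the SM \textsc{commit} before the response), and that append occurs while $p$ is processing message $M_{a_i}$. Since a process handles TOB messages one at a time and in delivery order (Proposition~\ref{prop:htr:noconcurrency}) and $a_j < a_i$, process $p$ has already finished processing $M_{a_j}$ by time $\tau_0$. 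Now split into cases. If $T_j$ is a DU transaction, $M_{a_j}$ is $T_j$'s descriptor, which is broadcast only from within \textsc{commit}, hence strictly after all of $T_j$'s operations and in particular after $\tau_2 > \tau_0$; then $p$ cannot have delivered $M_{a_j}$ by $\tau_0$, a contradiction. If $T_j$ is an SM transaction, $M_{a_j}$ is the request $r$ whose SM execution produces $T_j$ and its replica-local siblings; processing $M_{a_j}$ on $p$ runs an SM instance of $r$ that — since SM execution is deterministic (Proposition~\ref{prop:htr:deterministicsm}) and $T_j$ commits — also commits, and whose commit response lies within $p$'s processing of $M_{a_j}$, hence at or before $\tau_0 < \tau_2$. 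This instance is therefore distinct from $T_j$ (whose first operation is at $\tau_2 > \tau_0$, so it cannot have finished by $\tau_0$) and completes strictly before $T_j$, contradicting the defining property of SMreduce that $T_j$ is the \emph{first} completed instance of $r$. In both cases we reach a contradiction, so $a_i < a_j$ and $t_i.\eend < t_j.\eend$.

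I expect the SM case to be the main obstacle: because the request backing an SM transaction is broadcast before the transaction runs, the quick ``the message had not even been sent yet'' argument used for DU fails, and one must instead route the argument through the SMreduce normalization — identifying $T_j$ with the first completing replica-local instance of its request and observing that any instance applied ``too early'' on another replica would have completed even earlier, violating minimality. A secondary item to get right is the bookkeeping that $\LC$ is incremented exactly once per committed updating transaction, so that the quantities $t_k.\eend = \LC(S_{a_k})$ are well defined and pairwise distinct; this relies on \textsc{commit} terminating the transaction code and on certification of a DU descriptor yielding at most one increment.
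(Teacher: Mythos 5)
Your proof is correct, but it is organized differently from the paper's. The paper proves this proposition by a direct four-way case analysis on the modes of $T_i$ and $T_j$ (DU/DU, DU/SM, SM/DU, SM/SM), in each case tracing forward how the relevant TOB messages must be ordered and invoking Proposition~\ref{prop:htr:deterministicsm} to transfer $\eend$ values between replica-local SM instances. You instead first set up the indexing $t_k.\eend = \LC(S_{a_k})$ explicitly (machinery the paper only develops later, in Part~1 of the proof of Theorem~\ref{thm:htr_o}), reduce the claim to ordering the delivery indices $a_i < a_j$, and then derive a contradiction from $a_j < a_i$ using a single uniform observation about $T_i$ (its commit response cannot precede the start of $p$'s processing of $M_{a_i}$, in either mode) plus a two-way split on the mode of $T_j$. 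This halves the case analysis and isolates the one genuinely mode-dependent fact: for DU the descriptor message cannot exist before $T_j$'s operations, while for SM you must route through the SMreduce minimality of $T_j$ among the completing instances of its request --- exactly the step the paper's SM/SM case also hinges on. What the paper's version buys is that each case is a short forward argument needing no global indexing; what yours buys is a cleaner reduction and the reusable fact that $\eend$ values are in bijection with TOB delivery positions. One cosmetic point: in your SM case the replica-local instance run on $p$ while processing $M_{a_j}$ is, under SMreduce, relabelled as aborted rather than committed; this does not affect your argument, since the contradiction you draw is with $T_j$ being the \emph{first completed} instance of $r$, which depends only on when that instance's final event occurs, not on its commit status.
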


\begin{proof}
From the assumption that $T_i \prec^r_H T_j$, we know that $T_i$ is committed 
and the first event of $T_j$ appears in $H$ after the last event of $T_i$ (the 
commit of $T_i$). It means that $\tryC(T_j)$ was invoked after the commit of 
$T_i$. Now we have four cases to consider:
\begin{enumerate}

\item $T_i$ and $T_j$ are DU transactions. Since $\tryC(T_j)$ is invoked after 
the commit of $T_i$, $t_j$ was broadcast (using TOB) after $t_i$ is delivered 
by the process that executed $T_i$. Hence, any process can deliver $t_j$ only 
after $t_i$. Since $\LC$ increases monotonically (line \ref{alg:htr:inc}) and 
its current value is assigned to the $\eend$ field of a transaction descriptor 
(line \ref{alg:htr:end}), on every process $t_i.\eend < t_j.\eend$. 

\item $T_i$ is a DU transaction and $T_j$ is an SM transaction (whose execution 
resulted from delivery of request $r$ using TOB). Since $T_j$ is committed, by 
definition of SMreduce, $T_j$ is the first SM transaction in $H$ to complete 
and such that $T_j$'s execution resulted from delivery of $r$. It means that 
there does not exist an SM transaction $T'_j$ on the process that executes 
$T_i$, whose execution resulted from delivery of request $r$ and which 
completed before $T_j$ did. Since $T_i$ commits before $T_j$, $T_i$ has to 
commit before any transaction $T''_j$ (whose execution also results from 
delivery of $r$) completes. By Proposition~\ref{prop:htr:noconcurrency}, 
$T_i$'s certification and commit and $T''_j$'s execution do not interleave. It 
means that $T''_j$ must have started after $T_i$ committed. Then $T_i$ must 
have incremented $\LC$ before $T''_j$ started (line \ref{alg:htr:inc}) and so 
$t_i.\eend < t''_j.\eend$, where $t''_j$ is the transaction descriptor of 
$T''_j$. By Proposition~\ref{prop:htr:deterministicsm}, $t''_j = t_j$. 
Therefore $t_i.\eend < t_j.\eend$.

\item $T_i$ is an SM transaction (whose execution resulted from delivery of 
request $r$ using TOB) and $T_j$ is a DU transaction. Since $\tryC(T_j)$ was 
invoked after the commit of $T_i$, $t_j$ was broadcast (using TOB) later than
$r$ was delivered by the process that executes $T_i$. Therefore, this process 
can only deliver $t_j$ after handling $r$ and executing $T_i$. By properties of
TOB, the process that executes $T_j$ can also deliver $t_j$ after delivery of 
$r$. Therefore, the process that executes $T_j$ had to deliver $r$, execute
an SM transaction $T'_i$ (with transaction descriptor $t'_i$) and modify the 
system's state afterwards but before $T_j$ started (by 
Proposition~\ref{prop:htr:noconcurrency}). Since $\LC$ increases monotonically 
(line \ref{alg:htr:incSM}), $t'_i.\eend < t_j.\eend$. By 
Proposition~\ref{prop:htr:deterministicsm}, $t'_i 
= t_i$, thus $t'_i.\eend = t_i.\eend$. Therefore $t_i.\eend < t_j.\eend$.

\item $T_i$ and $T_j$ are SM transactions (whose execution resulted from 
delivery using TOB of requests $r$ and $r'$, respectively). By properties of 
TOB, $r$ and $r'$ had to be delivered by any process in the same order. We now 
show that $r$ must be delivered by TOB before $r'$. Assume otherwise. Then, the 
process that executes $T_i$ delivers $r'$ prior to execution of $T_i$. As a 
result, the process executes an SM transaction $T'_j$ which produces the same 
results as $T_j$ (by Proposition~\ref{prop:htr:deterministicsm}). By 
Proposition~\ref{prop:htr:noconcurrency}, $T'_j$ must complete before $T_i$ 
starts. Then, $T'_j \prec^r_H T_i$. By definition of SMreduce, we know that 
$T_j$ is the first SM transaction to complete in $H$, such that $T_j$'s 
execution resulted from delivery of $r'$. Therefore $T_j$ must have completed 
before $T'_j$. It means that $T_j \prec^r_H T_i$, a contradiction. Therefore 
$r$ must be delivered by TOB prior to $r'$. Since the process that executes 
$T_j$ must have delivered $r$ before delivering $r'$, the process must have 
executed an SM transaction $T'_i$ prior to $T_j$ such that the execution of 
$T'_i$ resulted from delivery of $r$. By 
Proposition~\ref{prop:htr:noconcurrency}, $T'_i$ completes before $T_j$ starts.
Since $\LC$ increases monotonically (line \ref{alg:htr:incSM}), $t'_i.\eend < 
t_j.\eend$, where $t'_i$ is the transaction descriptor of $T'_i$. By 
Proposition~\ref{prop:htr:deterministicsm}, $t'_i = t_i$, thus $t'_i.\eend = 
t_i.\eend$. Therefore $t_i.\eend < t_j.\eend$.
\end{enumerate}
\end{proof}

\begin{proposition} \label{prop:htr:startend}
Let $H$ by an SMreduced t-history of HTR. Let $T_i$ and $T_j$ be two 
transactions in $H$ executed by some process $p_l$ and let $t_i$ and $t_j$ be 
transaction descriptors of $T_i$ and $T_j$, respectively. If $T_i \prec^r_H 
T_j$ then $t_i.\start \leq t_j.\start$.
\end{proposition}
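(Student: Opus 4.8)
The plan is to reduce the statement to two elementary observations about the logical clock $\LC$. First, the $\start$ field of a transaction descriptor is assigned the current value of $\LC$ at $p_l$ at the very beginning of the transaction's execution on $p_l$: line~\ref{alg:htr:start} for a DU transaction and line~\ref{alg:htr:startSM} for an SM transaction, both taken before the transaction performs any read, write, commit, or abort. Second, $\LC$ at any process is modified only by the increments on lines~\ref{alg:htr:inc} and~\ref{alg:htr:incSM}, so its value is non-decreasing over real time at $p_l$. Consequently, if the step that records $t_i.\start$ occurs at $p_l$ no later (in real time) than the step that records $t_j.\start$, then $t_i.\start \le t_j.\start$, and it remains only to establish this ordering of the two recording steps.

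For that, I would use the hypothesis together with the fact that both transactions are executed by $p_l$. From $T_i \prec^r_H T_j$ the last event of $T_i$ in $H$ precedes the first event of $T_j$ in $H$. On $p_l$, the recording of $t_i.\start$ is the first step $p_l$ takes on behalf of $T_i$ and therefore precedes (in real time) the last event of $T_i$; symmetrically, the recording of $t_j.\start$ coincides with the first step $p_l$ takes on behalf of $T_j$, i.e.\ with the point of $T_j$'s first event, as in the treatment of the real-time order in the TR framework of~\cite{KKW16}. Chaining: the recording of $t_i.\start$ precedes the last event of $T_i$, which precedes the first event of $T_j$, which is the recording of $t_j.\start$; hence the former recording precedes the latter, and $t_i.\start \le t_j.\start$ follows from monotonicity of $\LC$. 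When $T_j$ (or $T_i$) is an SM transaction, its $\start$ is recorded in the main thread of $p_l$ upon delivery of the request; here I would additionally cite Proposition~\ref{prop:htr:noconcurrency} (the main thread processes deliveries one at a time, with no interleaving) and Proposition~\ref{prop:htr:samesequence} (so that ``the value of $\LC$ at that moment'' is unambiguous), but the chain of real-time inequalities is identical.

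The step I expect to be the crux --- exactly as in the analogous Proposition~\ref{prop:htr:prec} --- is the middle one: relating the real-time order $\prec^r_H$, which is a statement about the events appearing in $H$, to the internal bookkeeping step that reads $\LC$ into $\start$. The argument hinges on the fact that this read is the first action $p_l$ performs for the transaction, so it cannot be separated from the transaction's first event in $H$ in a way that would let a later-ordered transaction on the same process record its $\start$ earlier; this is a direct consequence of the structure of the \textsc{transaction} procedure and, for the SM mode, of the single-threadedness of HTR's main loop. Once that is spelled out, the rest is just the monotonicity of $\LC$, and the proof closes with: since $\LC$ at $p_l$ never decreases over time, $t_i.\start \le t_j.\start$.
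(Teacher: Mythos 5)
Your proposal is correct and follows essentially the same route as the paper's own proof: the paper likewise observes that since the first event of $T_j$ follows the last event of $T_i$ on $p_l$, the assignment of $\LC$ to $t_i.\start$ happens before that to $t_j.\start$, and concludes by monotonicity of $\LC$. Your additional appeals to Propositions~\ref{prop:htr:noconcurrency} and~\ref{prop:htr:samesequence} for the SM case are extra rigor the paper omits but do not change the argument.
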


\begin{proof}
From the assumption that $T_i \prec^r_H T_j$ and both are executed by the same 
process $p_l$, we know that $T_i$ is completed and the first event of $T_j$ 
appears in $H$ after the last event of $T_i$. Therefore $p_l$ assigns the 
current value of $\LC$ to $t_i.\start$ before it does so for $t_j.\start$ (in 
line \ref{alg:htr:start} and line~\ref{alg:htr:startSM}, if $T_i$ is a DU or SM 
transaction, respectively). The value of $\LC$ increases monotonically (lines 
\ref{alg:htr:inc} and \ref{alg:htr:incSM}, the values of $\LC$ correspond to 
commits of updating transactions). Therefore $t_i.\start \leq t_j.\start$.
\end{proof}

\begin{proposition} \label{prop:htr:readState}
Let $H$ be an SMreduced t-history of HTR and let $r = x.\rread \rightarrow v$ 
be a read operation on some t-object $x \in \mathcal{Q}$ performed by some 
transaction $T_k$ in $H$. If $T_k$ did not perform any write operations on $x$ 
prior to $r$ then either there exists a transaction $T_i$ that performed 
$x.\wwrite(v) \rightarrow \ok$ and committed before $r$ returns, or (if there 
is no such transaction $T_i$) $v$ is equal to the initial value of $x$.
\end{proposition}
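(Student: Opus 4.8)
Write $\rho$ for the read operation $r$ and $p_k$ for the process that executes $T_k$. The plan is to trace how $\rho$ obtains its value and to show that this value is either the initial value of $x$ or the value most recently written into $x$ by a committed transaction whose updates have already been installed at $p_k$. First I would note that, since $T_k$ performed no write on $x$ before $\rho$, the $\updates$ set of $T_k$'s transaction descriptor contains no entry for $x$ at that point; hence, whether $T_k$ runs in the DU mode (the \textsc{read} function that calls \textsc{getObject} at line~\ref{alg:htr:getObjectCall}) or in the SM mode (line~\ref{alg:htr:readSM}), the invocation of \textsc{getObject} serving $\rho$ takes the \textbf{else} branch (line~\ref{alg:htr:retrieve}) and returns the value of $x$ currently stored in the local state of $p_k$. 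So $v$ is exactly the value held for $x$ in $p_k$'s state at $\rho$.

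Next I would invoke Proposition~\ref{prop:htr:samesequence}: $p_k$ passes through states $S^k_0, S^k_1, \dots$, where $S^k_0$ is the initial state and each later state arises by processing a single TOB-delivered message --- either a DU transaction descriptor (lines~\ref{alg:htr:commitBegin}--\ref{alg:htr:commitEnd}) or a request executed as an SM transaction (lines~\ref{alg:htr:commitBeginSM}--\ref{alg:htr:commitEndSM}) --- and by Proposition~\ref{prop:htr:noconcurrency} these are the only events that alter a t-object's value, and they act atomically. Let $S^k_m$ be $p_k$'s state at $\rho$. If none of the first $m$ processed messages modified $x$, then $v$ equals the initial value of $x$, no qualifying transaction exists, and the second disjunct holds. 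Otherwise let $\mu$ be the last of those messages that modified $x$; then $v$ is the value $\mu$ wrote into $x$.

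It remains to identify $\mu$ with a committed transaction $T_i$ performing $x.\wwrite(v) \rightarrow \ok$ whose commit precedes the return of $\rho$. If $\mu$ is a DU descriptor, then since its updates were applied, \textsc{certify} returned $\success$ for it (line~\ref{alg:htr:globalCert}); by Proposition~\ref{prop:htr:deterministicdu} it returns $\success$ on the originating process as well, so the corresponding transaction $T_i$ commits in $H$ (SMreduce leaves DU transactions untouched). Because \textsc{write} (line~\ref{alg:htr:write}) retains in $\updates$ only the last value written per object and ``apply $\updates$'' copies those values into the state, $(x,v) \in t_i.\updates$, so $T_i$ performed $x.\wwrite(v) \rightarrow \ok$, and (via TOB together with Propositions~\ref{prop:htr:samesequence} and~\ref{prop:htr:deterministicdu}) $T_i$'s commit is ordered before $p_k$ processes $\mu$, which precedes $\rho$. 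If instead $\mu$ is a request $r'$ executed in the SM mode, processing $\mu$ at $p_k$ runs an SM transaction $T'_i$ with $(x,v) \in t'_i.\updates$; by Proposition~\ref{prop:htr:deterministicsm} every SM transaction spawned from $r'$ has the same descriptor up to statistics, and by the definition of \emph{SMreduce} the first such transaction $T_i$ to complete in $H$ remains committed with $t_i = t'_i$ up to statistics, so again $(x,v) \in t_i.\updates$, $T_i$ performed $x.\wwrite(v) \rightarrow \ok$, and $T_i$'s commit precedes the completion of $T'_i$ in $H$, which precedes $p_k$'s processing of $\mu$, which precedes $\rho$. In either case the first disjunct holds.

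The step I expect to be the main obstacle is the SM case: because a single request is executed as many distinct SM transactions, the transaction that actually installed $v$ at $p_k$ need not be the one recorded as committed under \emph{SMreduce}, so one has to route through Proposition~\ref{prop:htr:deterministicsm} to select the unique committed representative carrying the right descriptor and then argue that its commit event sits early enough in $H$ relative to $\rho$. The DU branch is comparatively routine once one has pinned down that the $\updates$ set records last writes and that successful certification on one process forces it on all (Proposition~\ref{prop:htr:deterministicdu}).
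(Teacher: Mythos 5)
Your proposal is correct and follows essentially the same route as the paper's proof: trace the read to the locally stored value of $x$, observe (via the atomicity of state changes) that this value was last installed either by applying a delivered DU transaction descriptor or by a locally executed SM transaction, and in the SM case use Proposition~\ref{prop:htr:deterministicsm} together with the definition of SMreduce to substitute the first-completed, committed representative of the request. The extra scaffolding you add (Propositions~\ref{prop:htr:samesequence} and~\ref{prop:htr:deterministicdu}) is harmless and only makes explicit what the paper's proof takes for granted.
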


\begin{proof}
From the assumption that $T_k$ did not perform any write operations on $x$ 
prior to $r$, we know that the value of $x$ is retrieved from the system state 
(line \ref{alg:htr:retrieve}). The value of $x$ is updated on the process that 
executes $T_k$ only in two cases:
\begin{enumerate}
\item A transaction descriptor $t'_i$ of an committed updating DU transaction 
$T'_i$ is delivered using TOB. Then $t'_i.\updates$ are used to modify $x$ in 
the system's state of the process that executes $T_k$ (line 
\ref{alg:htr:apply}). Before commit, $T'_i$ stores the modified values of 
t-objects in the $\updates$ set of the transaction descriptor of $T'_i$. The 
only possibility that a new value of $x$ is stored in the $\updates$ set is 
upon write operation on $x$ (line \ref{alg:htr:write}). Then $T_i = T'_i$ thus 
satisfying the Proposition.
\item An updating SM transaction $T'_i$ (whose execution resulted from delivery 
of a request $r_i$ using TOB) modified $x$ upon applying the updates it
produced (line \ref{alg:htr:applySM}); $T'_i$ is executed by the same process 
that executes $T_k$. Before that, during execution, $T'_i$ stores the modified 
values of t-objects in the $\updates$ sets of $T_i$'s transaction descriptor. 
The only possibility that a new value of $x$ is stored in the $\updates$ set is 
upon write operation on $x$ (line \ref{alg:htr:writeSM}). Now, because $H$ is 
SMreduced there are two cases to consider. In the first case $T'_i$ is 
committed. Then $T_i = T'_i$ thus satisfying the Proposition. In the second 
case $T'_i$ is aborted. However, from definition of SMreduce, we know that 
there exists a committed SM transaction $T_i$ whose execution resulted from 
delivery of $r_i$, such that $T_i$ committed before $T'_i$ completed (and 
therefore also prior to $r$). The transaction descriptor of $T_i$ is equivalent 
to the transaction descriptor of $T'_i$ (except for the statistics, by 
Proposition~\ref{prop:htr:deterministicsm}). Then, when $T_k$ performs $r$, the 
system's state contains the updates produced by $T_i$, i.e., it contains also 
$v$ as the current value of $x$.
\end{enumerate}

On the other hand, if the value of $x$ in the system was never updated (through 
line \ref{alg:htr:apply} or \ref{alg:htr:applySM}), the initial value of $x$ is 
returned (line \ref{alg:htr:retrieve}).
\end{proof}

\begin{proposition} \label{prop:htr:localWrite}
Let $H$ be an (SMreduced) t-history of HTR and $T_k$ (with a transaction 
descriptor $t_k$) be some transaction in $H$. If there exists a t-object $x \in 
\mathcal{Q}$, such that $T_k$ performs a read operation $r = x.\rread 
\rightarrow v$ and $T_k$ executed earlier at least one write operation on $x$, 
where $w = x.\wwrite(v') \rightarrow \ok$ is the last such an operation before 
$r$, then $v = v'$.
\end{proposition}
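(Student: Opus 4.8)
The plan is to trace the value returned by the read $r$ through the \textsc{read} and \textsc{getObject} routines, distinguishing the two modes in which $T_k$ may run. In both the DU and the SM mode a read of a t-object $x$ is ultimately served by \textsc{getObject} (called from line~\ref{alg:htr:getObjectCall} when $\mode = \dumode$, and directly from line~\ref{alg:htr:readSM} when $\mode = \smmode$), and \textsc{getObject} returns the value stored in the transaction descriptor's $\updates$ set whenever that set contains a pair $(\oid, \obj)$ for the queried object identifier (line~\ref{alg:htr:retrieveUpdates}), falling back to the system state only otherwise (line~\ref{alg:htr:retrieve}). Hence the whole argument reduces to showing that, at the moment $r$ is performed, the $\updates$ set of $t_k$ contains the pair $(\oid_x, v')$ (and no other pair with first component $\oid_x$), where $\oid_x$ is the identifier of $x$.

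First I would establish the bookkeeping invariant for $\updates$. The \textsc{write} procedure --- line~\ref{alg:htr:write} in the DU mode, line~\ref{alg:htr:writeSM} in the SM mode --- replaces the $\updates$ set by $\{(\oid',\obj') \in t_k.\updates : \oid' \neq \oid\} \cup \{(\oid,\obj)\}$, which removes any previous pair for the written identifier and inserts the new one; thus after any write of value $u$ on $x$ the set contains exactly one pair with first component $\oid_x$, namely $(\oid_x, u)$, and no other routine executed by $T_k$ (reads, certification, commit-side code) ever modifies entries of $\updates$ for $\oid_x$. Since, by assumption, $w = x.\wwrite(v') \rightarrow \ok$ is the last write on $x$ performed by $T_k$ before $r$, at the point where $r$ executes we have $(\oid_x, v') \in t_k.\updates$ and no other pair of the form $(\oid_x, \cdot)$ in $t_k.\updates$.

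Then I would finish by following the control flow of $r$. If $T_k$ runs in the DU mode, \textsc{read} (line~\ref{alg:htr:read}) adds $\oid_x$ to $\readset$ and then certifies; because $r$ returns a value $v$ rather than triggering \textsc{retry} at line~\ref{alg:htr:check} (which would abort $T_k$ and leave $r$ without a response), certification must have succeeded, so \textsc{getObject}$(t_k,\oid_x)$ is invoked at line~\ref{alg:htr:getObjectCall}. If $T_k$ runs in the SM mode, \textsc{read} (line~\ref{alg:htr:readSM}) invokes \textsc{getObject}$(t_k,\oid_x)$ unconditionally. In either case, by the invariant just established the test at line~\ref{alg:htr:retrieveUpdates} succeeds with $\obj = v'$, so \textsc{getObject} returns $v'$; therefore $v = v'$.

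This proof is essentially pure bookkeeping, so I do not expect a real obstacle; the only two points that need care are that $\updates$ is a per-transaction structure untouched by the main thread's concurrent activity (so no interleaving DU commit or SM transaction can perturb it --- this follows from Proposition~\ref{prop:htr:noconcurrency} and from the fact that $\updates$ is part of $t_k$, which is local to the thread running $T_k$), and that a read returning a value in the DU mode entails that the intervening certification did not fire \textsc{retry}. Note that the statement does not rely on $H$ being SMreduced, which is consistent with the argument above being entirely local to the single execution of $T_k$.
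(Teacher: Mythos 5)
Your proposal is correct and follows essentially the same route as the paper's own proof: track the pair $(x,v')$ installed in $t_k.\updates$ by the last write, observe that \textsc{getObject} serves the read from $\updates$ in either mode, and rule out the only alternative (a DU-mode abort at the certification step) because then $r$ would return no value at all. The extra detail you supply about the per-transaction locality of $\updates$ and the dispensability of the SMreduce hypothesis is consistent with, but not needed beyond, the paper's argument.
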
 

\begin{proof}
Upon execution of $w$, if there were no prior write operations on $x$ in $T_k$ 
then a pair $(x, v')$ is added to $t_k.\updates$; otherwise, the current pair 
$(x, v")$ is substituted by $(x, v')$ in $t_k.\updates$ 
(line~\ref{alg:htr:write} or line~\ref{alg:htr:writeSM}, if $T_k$ is a DU or an 
SM transaction, respectively). Then $v'$ would be returned upon execution of 
$r$ (line~\ref{alg:htr:getObjectCall} and then \ref{alg:htr:retrieveUpdates}), 
unless $T_k$ aborts. This may only happen if $T_k$ is a DU transaction and 
$T_k$ aborted due to a conflict with other transaction 
(line~\ref{alg:htr:check}). However, then $r$ would not return any value. 
Therefore $v = v'$ and indeed $v'$ was assigned to $x$ by the last write 
operation on $x$ in $T_k$ before $r$.
\end{proof}


\htro*

\begin{proof}
In order to prove that HTR satisfies update-real-time opacity under SMreduce, 
we have to show that every SMreduced finite t-history produced by HTR is 
final-state update-real-time opaque (by 
Corollary~1 of \cite{KKW16}). In 
other words, we have to show that for every SMreduced finite t-history $H$ 
produced by HTR, there exists a t-sequential t-history $S$ equivalent to some 
completion of $H$, such that $S$ respects the update-real-time order of $H$ and 
every transaction $T_k$ in $S$ is legal in $S$.

\begin{proofpart}
Construction of a t-sequential t-history $S$ that is equivalent to a completion 
of $H$.

Let us first construct a t-completion $\comp{H}$ of $H$. We start with 
$\comp{H} = H$. Next, for each live transaction $T_k$ in $H$ performed by 
process $p_i$, we append some event to $\comp{H}$ according to the following 
rules:
\begin{itemize}
\item if $T_k$ is not commit-pending and the last event of $T_k$ is an 
invocation of some operation, append $\resp_i(A_k)$,
\item if $T_k$ is not commit-pending and the last event of $T_k$ is a 
response event to some operation, append $\langle \tryA(T_k) \rightarrow_i A_k 
\rangle$,
\item if $T_k$ is commit pending and $T_k$ is an SM transaction, then append 
$\resp_i(A_k)$,
\item if $T_k$ is commit pending and $T_k$ is a DU transaction with a 
transaction descriptor $t_k$, then if $t_k$ was delivered using TOB by some 
process $p_j$ and $p_j$ successfully certified $T_k$, then append 
$\resp_i(C_k)$, otherwise append $\resp_i(A_k)$.
\end{itemize}

Now we show that for each committed updating transaction $T_k$ there exists 
a unique value which corresponds to this transaction. This value is equal to 
the value of the $\eend$ field of $T_k$'s transaction descriptor when the 
updates of $T_k$ are applied (on any process), as we show by a contradiction. 
Let $T_i$ and $T_j$ be two updating committed transactions with transaction 
descriptors $t_i$ and $t_j$, respectively. $T_i$ and $T_j$ result from delivery 
of some requests $r_i$ and $r_j$ ($T_i$ and $T_j$ may be DU or SM 
transactions). Assume that $T_i \neq T_j$, $t_i.\eend = v$, $t_j.\eend = v'$, 
but $v = v'$. If $T_i$ is a DU transaction, then $t_i$ is broadcast 
using TOB to all processes in a message $m_i$ (line \ref{alg:htr:tob}). If 
$T_i$ is an SM transaction, then the request $r_i$ is broadcast using TOB prior 
to execution of $T_i$ in a message $m_i$ (line \ref{alg:htr:tobSM}). 
Analogically for $T_j$, message $m_j$ contains either $t_j$ or $r_j$. Since 
both transactions are updating committed, both $m_i$ and $m_j$ had to be 
delivered by some processes. By properties of TOB, we know that there exists a 
process $p$ that delivers both $m_i$ and $m_j$. Without loss of generality, let 
us assume that $p$ delivers $m_i$ before $m_j$. By 
Proposition~\ref{prop:htr:noconcurrency}, we know that processing of $m_i$ and 
$m_j$ cannot interleave (irrespective of the modes of the transactions) and the 
updates of $t_i$ and $t_j$ on $p$ are applied in the order of delivery of $m_i$ 
and $m_j$. Every time $p$ updates its state, $p$ first increments $\LC$ (line 
\ref{alg:htr:inc} or \ref{alg:htr:incSM}) and then assigns its value to the 
$\eend$ field of the currently processed transaction descriptor (line 
\ref{alg:htr:end} or \ref{alg:htr:endSM}). Therefore $v \neq v'$, a 
contradiction. Moreover, by Proposition~\ref{prop:htr:samesequence}, all 
processes deliver $m_i$ while being in the same state and then deliver $m_j$ 
while also being in the same state. Therefore the values of $\LC$ (and matching 
$\eend$ fields of transaction descriptors of updating committed transactions) 
are the same on every process when processing updates of $T_i$ and $T_j$. Thus, 
the value of the $\eend$ field of a transaction descriptor of a committed 
updating transaction uniquely identifies the transaction.

We can now construct the following function $\update$. Let $\update : 
\mathbb{N} \to \mathcal{T}$ be a function that maps the $\eend$ field of a 
transaction descriptor of a committed updating transaction to the transaction. 
Let $S = \langle \comp{H}|\update(1) \cdot \comp{H}|\update(2) \cdot ... 
\rangle$. This way $S$ includes the operations of all the committed updating 
transactions in $H$. Now, let us add the rest of transactions from $\comp{H}$ 
to $S$ in the following way. For every such a transaction $T_k$ with a 
transaction descriptor $t_k$, find a committed updating transaction $T_l$ (with 
transaction descriptor $t_l$) in $S$, such that $t_k.\start = t_l.\eend$, and 
insert $\comp{H}|T_k$ immediately after $T_l$'s operations in $S$. If there is 
no such transaction $T_l$ ($t_k.\start = 0$), then add $\comp{H}|T_k$ to the 
beginning of $S$. If there are multiple transactions with the same value of 
$\start$ timestamp, then insert them in the same place in $S$. Their relative 
order is irrelevant unless they are executed by the same process. In such a 
case, rearrange them in $S$ according to the order in which they were executed 
by the process.
\end{proofpart}

\begin{proofpart}
Proof that $S$ respects the update-real-time order of $H$.

Let $T_i$ and $T_j$ be any two transactions such that $T_i \prec^u_H T_j$ and 
let $t_i$ and $t_j$ be transaction descriptors of $T_i$ and $T_j$, 
respectively. Then, $T_i \prec^r_H T_j$ and 
\begin{enumerate}
\item $T_i$ and $T_j$ are updating and committed, or
\item $T_i$ and $T_j$ are executed by the same process.
\end{enumerate}

In case 1, by Proposition~\ref{prop:htr:prec} we know that $t_i.\eend < 
t_j.\eend$. Both $t_i.\eend$ and $t_j.\eend$ correspond to the 
values assigned to $\LC$ when $t_i$ and $t_j$ are processed (lines 
\ref{alg:htr:inc} and \ref{alg:htr:end}). Then, by the construction of $S$, 
$T_i$ must appear in $S$ before $T_j$. Therefore, $T_i \prec^r_S T_j$. 
Moreover, the construction requires that for any transaction $T_k \in H$, $S$ 
includes all events of $H|T_k$. In turn both $T_i$ and $T_j$ are updating and 
committed in $S$. Therefore, in this case, $T_i \prec^u_S T_j$.

Now let us consider case 2. We have several subcases to consider:
\begin{enumerate}
\item $T_i$ is a committed updating transaction and $T_j$ is a read-only or 
an aborted transaction. Since $T_i \prec^r_H T_j$ and both $T_i$ and $T_j$ are 
executed by the same process, naturally $t_i.\eend \leq t_j.\start$ (the value 
of $\LC$, which is assigned to the $\start$ and $\eend$ fields of transaction 
descriptor, increases monotonically). By construction of $S$, $T_j$ (which is a 
read-only or an aborted transaction) appears in $S$ after a committed updating 
transaction $T_k$ (with transaction descriptor $t_k$), such that $t_k.\eend = 
t_j.\start$. Therefore $T_k \prec^r_S T_j$ and $t_i.\eend \leq t_k.\eend$. If 
$t_i.\eend = t_k.\eend$, then $T_i = T_k$ and $T_i \prec^r_S T_j$. If 
$t_i.\eend < t_k.\eend$, by construction of $S$, $T_i \prec^r_S T_k$, and thus 
$T_i \prec^r_S T_j$.

\item $T_i$ is a read-only or an aborted transaction and $T_j$ is a committed 
updating transaction. By Proposition~\ref{prop:htr:startend}, $t_i.\start \leq 
t_j.\start$. Since $T_j$ is a committed updating transaction, $t_j.\start < 
t_j.\eend$ ($\LC$ is always incremented prior to assigning it to 
the $\eend$ field of the transaction descriptor upon transaction commit). 
Therefore $t_i.\start \leq t_j.\start < t_j.\eend$, and thus $t_i.\start < 
t_j.\eend$. Since $T_i$ is a read-only or an aborted transaction, by 
construction of $S$, $T_i$ appears in $S$ after some committed updating 
transaction $T_k$ (with transaction descriptor $t_k$) such that $t_k.\eend = 
t_i.\start$ and before some committed updating transaction $T'_k$ (with 
transaction descriptor $t'_k$) such that $t'_k.\eend = t_k.\eend + 1$. $T_k$ 
may exist or may not exist. We consider both cases:
\begin{enumerate}
\item $T_k$ exists. It means that $T_k \prec^r_S T_i \prec^r_S T'_k$. Since 
$t_i.\start < t_j.\eend$ and $t_k.\eend = t_i.\start$, $t_k.\eend < t_j.\eend$. 
Because $t_k.\eend + 1 = t'_k.\eend$, $t'_k.\eend \leq t_j.\eend$. If 
$t'_k.\eend = t_j.\eend$, then $T'_k = T_j$ and $T_i \prec^r_S T_j$. If 
$t'_k.\eend < t_j.\eend$, by construction of $S$, $T'_k \prec^r_S T_j$, and 
thus $T_i \prec^r_S T_j$.

\item $T_k$ does not exist. It means that there is no committed updating 
transaction in $S$ before $T_i$ ($t_i.\start = 0$). By construction of $S$, 
$T_i$ is placed at the beginning of $S$, before any committed updating 
transaction. Therefore $T_i \prec^r_S T_j$.
\end{enumerate}

\item Both $T_i$ and $T_j$ are read-only or aborted transactions. From 
Proposition~\ref{prop:htr:startend} we know that $t_i.\start \leq t_j.\start$.
If $t_i.\start = t_j.\start$ (and both $T_i$ and $T_j$ are executed 
by the same process), then the construction of $S$ explicitly requires that 
$T_i$ and $T_j$ are ordered in $S$ according to the order in which they were 
executed by this process. On the other hand, if $t_i.\start < t_j.\start$ then 
by the construction of $S$:
\begin{enumerate}
\item $T_i$ and $T_j$ appear in $S$ after some committed updating 
transactions 
$T'_i$ and $T'_j$ with transaction descriptors $t'_i$ and $t'_j$ such that 
$t_i.\start = t'_i.\eend$ and $t_j.\start = t'_j.\eend$. It means that 
$t'_i.\eend < t'_j.\eend$, therefore $T'_i$ appears in $S$ before $T'_j$ (by 
the construction of $S$). Moreover, between $T'_i$ and $T_i$ in $S$ there is 
no other committed updating transaction, since, by the construction of $S$, 
$T_i$ is inserted immediately after $T'_i$. In turn, the four transactions 
appear in $S$ in the following order: $T'_i$, $T_i$, $T'_j$, $T_j$. Thus 
$T_i \prec^r_S T_j$.
\item If such $T'_i$ does not exist ($t_i.\start = 0$; there is no committed
updating transaction in $S$ before $T_i$), we know that $T'_j$ has to exist 
since $t'_j.\start = t_j.\eend > t_i.\start = 0$. Then, the three transactions 
appear in $S$ in the following order: $T_i$, $T'_j$, $T_j$. Thus also $T_i 
\prec^r_S T_j$.
\end{enumerate}
\end{enumerate}

This way $S$ respects the update-real-time order of $H$ (trivially, for any 
transaction $T_k$ executed by process $p_i$ in $H$, $T_k$ is executed by $p_i$ 
in $S$).
\end{proofpart}

\begin{proofpart}
Proof that every transaction $T_j$ in $S$ is legal in $S$.

We give the proof by contradiction. Assume that there exists a transaction 
$T_j$ (with a transaction descriptor $t_j$ and executed by some process $p$) 
such that $T_j$ is the first transaction that is not legal in $S$. It means 
that there exists $x \in \mathcal{Q}$ such that $\vis = \visible_S(T_j)|x$ does 
not satisfy the sequential specification of $x$.

The only type of t-object considered in HTR are simple registers. 
Sequential specification of a register $x$ is 
violated when a \emph{read} operation $r = x.\rread \rightarrow v$ 
returns a value $v$ that is different from the most recently written value to 
this register using the \emph{write} operation, or its initial value if there 
was no such operation. 

Therefore, $\vis$ does not satisfy the sequential specification of $x$, if 
there exists an operation $r = x.\rread \rightarrow v$ in $T_j$ such that $v$ 
is not the most recently written value to $x$ in $\vis$. Then, either $v' \neq 
v$ is the initial value of $x$ or there exists an operation $w = x.\wwrite(v') 
\rightarrow \ok$ in $\vis$ such that $w$ is the most recent write operation on 
$x$ in $\vis$ prior to $r$.

By definition of $\visible_S(T_j)$, instead of considering t-history $\vis$, we 
can simply operate on $S$ while excluding from consideration any write 
operations performed by all aborted transactions in $S$.

Let us first assume that $x$ was not modified prior to $r$, i.e., there is no 
write operation execution on $x$ in $S$ (and in $\vis$) prior to $r$. Then, 
trivially, $v$ has to be equal to the initial value of $x$ (by 
Proposition~\ref{prop:htr:readState}), a contradiction.

Therefore, there exists a transaction $T_i$ (with transaction descriptor $t_i$) 
which executes $w$. First, assume that $T_i = T_j$. Given that $w$ is executed 
prior to $r$, from Proposition~\ref{prop:htr:localWrite}, $v = v'$, a 
contradiction. Therefore $T_i \neq T_j$.

Since we require that $w$ is in $\vis$, $T_i$ must be a committed updating 
transaction and $T_i \prec^r_S T_j$. 

Now we show that $t_i.\eend \le t_j.\start$. We have two cases to consider:
\begin{enumerate}

\item $T_j$ is an aborted or read-only transaction in $S$. By construction of 
$S$, there exists a committed updating transaction $T_l$ (with transaction 
descriptor $t_l$) such that $T_l \prec^r_S T_j$ and $t_l.\eend = t_j.\start$. 
Because both $T_i$ and $T_l$ are committed updating transactions in $S$, either 
$T_l \prec^r_S T_i$, $T_i \prec^r_S T_l$ or $T_i = T_l$. By construction of 
$S$, between $T_l$ and $T_j$ there must be no committed updating transactions. 
If $T_l \prec^r_S T_i$ then $T_i$ must appear after $T_j$ in $S$. However, it 
is impossible since $T_i \prec^r_S T_j$, a contradiction. Then, either $T_i = 
T_l$ or $T_i \prec^r_S T_l$. In the first case, $t_i.\eend = t_l.\eend$. In the 
second case, $t_i.\eend < t_l.\eend$ (by Proposition~\ref{prop:htr:prec}). 
Since $t_l.\eend = t_j.\start$, $t_i.\eend \le t_j.\start$.

\item $T_j$ is a committed updating transaction in $S$. By 
Proposition~\ref{prop:htr:prec}, $t_i.\eend < t_j.\eend$. Now we have 
additional two cases to consider:
\begin{enumerate}
\item $T_j$ is a DU transaction. By Proposition~\ref{prop:htr:samedesc}, we 
know that $t_i$ is replicated in $\Log$ of $p$ (process that executes $T_j$) by 
the time the value of $\LC$ on that process reaches $t_i.\eend$. Since $T_j$ is 
a committed updating transaction, it has to pass the certification test (line 
\ref{alg:htr:certify}). This test takes place as late as the commit of $T_j$ 
(line \ref{alg:htr:globalCert}). Since the commit sets the value of $\LC$ to 
$t_j.\eend$ (line \ref{alg:htr:end}), the certification takes place when $\LC = 
t_j.\eend - 1$. Since $t_i.\eend < t_j.\eend$, $t_i.\eend \le t_j.\eend - 1$. 
This means that $t_i$ is already replicated in the $\Log$ of $p$ when 
certification happens. We know that $x \in t_j.\readset$ and $(x, v') \in 
t_i.\updates$. If $t_i.\eend > t_j.\start$, then the certification procedure 
would compare the $T_j$'s readset against the $T_i$'s updates and return 
$\mathit{failure}$, thus aborting $T_j$. But we know that $T_j$ is committed. 
Therefore, $t_i.\eend \le t_j.\start$. 
\item $T_j$ is an SM transaction. For any committed updating SM transaction $T$ 
(with transaction descriptor $t$) the following holds: $t.\start + 1 = t.\eend$
(by Proposition~\ref{prop:htr:noconcurrency}, execution of $T$ cannot 
interleave with execution of another SM transaction or handling of delivery of 
a transaction descriptor of a DU transaction). Since $t_i.\eend < t_j.\eend$ we 
know that $t_i.\eend < t_j.\start + 1$. Thus $t_i.\eend \le t_j.\start$.
\end{enumerate}

\end{enumerate}

By Proposition~\ref{prop:htr:samedesc} and the fact that $t_i.\eend \le 
t_j.\start$, we know that $t_i$ is replicated in $\Log$ of $p$ (process that 
executes $T_j$) before $T_j$ starts. It means that inside the same lock 
statement, $\LC$ is incremented (lines \ref{alg:htr:inc} and 
\ref{alg:htr:incSM}) and its value is assigned to $t_i.\eend$ (lines 
\ref{alg:htr:end} and \ref{alg:htr:endSM}), $t_i$ is appended to $\Log$ (lines 
\ref{alg:htr:logAppend} and \ref{alg:htr:logAppendSM}) and $t_i.\updates$ are 
applied to the system state (lines \ref{alg:htr:apply} and 
\ref{alg:htr:applySM}). Therefore, the updates of $T_i$ are applied to the 
system state of $p$ before $T_j$ starts.

Now, unless there is some transaction $T_k$ (with transaction descriptor 
$t_k$), such that $T_k$ modified $x$, $t_k.\updates$ are applied to the 
system's state of $p$ after $t_i.\updates$ are applied but before $r$ returns, 
$r$ would have to return $v'$. It is impossible, because we assumed that 
$r$ returns $v \neq v'$. Therefore we now consider such $T_k$. We have two 
cases to consider:
\begin{enumerate}
\item $T_k$ is a committed updating DU transaction or a committed updating SM 
transaction executed by $p$. If $t_k.\updates$ are indeed applied by $p$ after 
$t_i.\updates$ are, then $t_k.\eend > t_i.\eend$ ($p$ increments $\LC$ each 
time $p$ applies updates of some transaction, line \ref{alg:htr:inc} or 
\ref{alg:htr:incSM}). By construction of $S$, $T_k$ would have to appear in $S$ 
after $T_i$ and before $r$ returns. However, then $w$ would not be the most 
recent write operation on $x$ prior to $r$ in $S$, a contradiction.
\item $T_k$ is an aborted SM transaction executed by $p$, such that $T_k$'s 
execution resulted from delivery using TOB of some request $r_k$. Since $p$ 
applies $t_k.\updates$ after $t_i.\updates$, $t_k.\eend > t_i.\eend$ (lines 
\ref{alg:htr:inc} and \ref{alg:htr:incSM}). By definition of SMreduce, the 
updates of $T_k$ are applied to the system's state of $p$ only if there exists 
a committed updating transaction $T'_k$ (with transaction descriptor $t'_k$) 
whose execution also resulted from delivery of $r_k$. By 
Proposition~\ref{prop:htr:deterministicsm}, $t_k = t'_k$, and thus 
$t_k.\eend = t'_k.\eend$. Hence, $t'_k.\eend > t_i.\eend$. By construction of 
$S$, it means that $T'_k$ appears in $S$ after $T_i$ and before $r$ returns. 
However, then $w$ would not be the most recent write operation on $x$ prior to 
$r$ in $S$, a contradiction.
\end{enumerate}
Since both cases yield contradiction, the assumption that there exists such 
transaction $T_k$ is false. Therefore $r$ has to return $v = v'$ thus 
concluding the proof by contradiction. Therefore HTR guarantees 
update-real-time opacity under SMreduce.
\qedhere
\end{proofpart}
\end{proof}

\end{document}